\newtheorem{thm}{Theorem}[section]
\newtheorem{lemma}[thm]{Lemma}
\theoremstyle{definition}
\newtheorem{definition}[thm]{Definition}
\newcommand{\eqnref}[1]{Eq.~\eqref{#1}}
\newcommand{\unit}{\mathbbm{1}}
\newcommand{\Htot}{\mathcal{H}}
\newcommand{\Hclock}{\mathcal{H}_{\text{clock}}}
\newcommand{\Hspin}{\mathcal{H}_{\text{spin}}}
\newcommand{\propagator}[2]{U_{#1:#2}}
\newcommand{\SecRenyi}[2]{S^{(2)}_{#1}(#2)}
\begin{document}
\title{Infinite temperature at zero energy}
\author{Matteo Ippoliti}
\email{ippoliti@utexas.edu}
\affiliation{Department of Physics, The University of Texas at Austin, Austin, Texas 78712, USA}
\author{David M. Long}
\email{dmlong@stanford.edu}
\affiliation{Department of Physics, Stanford University, Stanford, California 94305, USA}

\begin{abstract}
    We construct a family of static, geometrically local Hamiltonians that inherit eigenstate properties of periodically-driven (Floquet) systems. Our construction is a variation of the Feynman-Kitaev clock---a well-known mapping between quantum circuits and local Hamiltonians---where the clock register is given periodic boundary conditions.
    Assuming the eigenstate thermalization hypothesis (ETH) holds for the input circuit, our construction yields Hamiltonians whose eigenstates have properties characteristic of infinite temperature, like volume-law entanglement entropy, across the whole spectrum---including the ground state.
    We then construct a family of exactly solvable Floquet quantum circuits whose eigenstates are shown to obey the ETH at infinite temperature. Combining the two constructions yields a new family of local Hamiltonians with provably volume-law-entangled ground states, and the first such construction where the volume law holds for all contiguous subsystems.
\end{abstract}

\maketitle

\section{Introduction}
    \label{sec:Intro}

    Tools from quantum information have found increasing application in many-body physics, both in and out of equilibrium. 
    Many processes are now understood in terms of \emph{entanglement entropy}---an observable-agnostic quantification of correlations.
    The spatial scaling of entanglement entropy sharply distinguishes different types of states. 
    At high energy density, the eigenstate thermalization hypothesis (ETH) characterizes eigenstates as effectively random, and relates their entanglement entropy to the thermodynamic entropy. This gives rise to extensive (``volume law'') scaling of entanglement entropy~\cite{Deutsch1991statmechclosed,Srednicki1994chaosthermalization,Srednicki1999approach,Rigol2008thermalization,D'Alessio2016ETHreview,Deutsch2018ETHreview}.
    At low energies, different structures of entanglement are possible. Gapped ground states obey the ``area law'' (proven in one dimension~\cite{Hastings2007arealaw} and for some models in higher dimension~\cite{Anshu2022arealaw}): the entanglement entropy of a subsystem scales in proportion to its boundary, representing a limited amount of correlations.
    Gapless ground states, on the contrary, can violate the area law---for example, one-dimensional quantum critical states exhibit logarithmic scaling of entanglement~\cite{Calabrese2004qft,Calabrese2009cft}.

    Surprisingly, violations of the area law in gapless ground states can be maximal: families of local, one-dimensional spin Hamiltonians were constructed which have volume-law entangled ground states~\cite{Gottesman2010entanglementvsgap,Irani2010groundstateentanglement,Vitagliano2010volumelaw,Ramirez2014volumelaw}, notably including translationally invariant models known as (colored, area-deformed) Motzkin and Fredkin chains~\cite{Bravyi2012criticality,Movassagh2016supercritical,Zhang2017extensive,Salberger2018fredkin}.
    This shows that ground states can have an amount of entanglement comparable to that of highly excited states, challenging prior beliefs about the nature of quantum correlations at low energy.
    At the same time, despite their high entanglement, these ground states are very different from genuine ETH-obeying, highly excited eigenstates. The structure of their parent Hamiltonians, which enables their exact solution, also imparts a ``rainbow'' structure to the state~\cite{Alexander2019rainbowtn,Alexander2021holographictn}: the entanglement is dominantly shared between pairs of qubits related by a reflection about the mid-point of the chain. While the half-chain entropy is high, local reflection-symmetric subsystems can have much lower entanglement. 
    Recent work~\cite{Balasubramanian20232dentanglement} extended the Motzkin construction to higher dimensions. While at present there is no ``rainbow''-like picture for the resulting states, there is also no proof of volume-law entanglement for arbitrary local subsystems.
    It thus remains unclear how closely the entanglement structure of a (gapless) ground state can resemble that of a highly excited state, with a thermal, ``random-looking'' character and in particular volume-law scaling for all local subsystems.

    In this work,
    we construct a large family of geometrically local static Hamiltonians which display properties associated to infinite temperature across almost all eigenstates. In particular, the eigenstates---including the ground state---are volume-law entangled for all contiguous subsystems.
    Our construction is based on the Feynman-Kitaev clock (FK clock) model~\cite{Kitaev2002book}, a well-known model which embeds the output of a quantum circuit into the ground state of a static Hamiltonian.
    By modifying the FK clock model to include periodic boundary conditions---so that the final ``time'' on the clock is the same as the initial time---we embed \emph{Floquet states}~\cite{Bukov2015highfreq,Rudner2020floquetengineershandbook} of an input staircase circuit in eigenstates of the Hamiltonian.
    The model is illustrated qualitatively in Fig.~\ref{fig:idea}. It is a two-leg ladder with geometrically local interactions; its ground state is built from a Floquet state on one leg and a conventional ground state on the other, coupled together by a fixed entangler circuit.
    As almost all Floquet states of generic quantum circuits are expected to obey the ETH at infinite temperature, 
    this structure brings infinite-temperature features to a local Hamiltonian ground state.
    We prove in particular that it implies volume-law entanglement of the clock Hamiltonian eigenstates, including the ground state.

    While eigenstates of generic circuits are widely believed to obey the ETH, in order to remove this physical assumption, we construct a \emph{specific} family of Floquet circuits whose eigenstates provably obey ETH at infinite temperature.
    Used as an input for the FK clock Hamiltonian, this produces ground states which are provably volume-law entangled.
    These circuits are a quantum implementation of linear feedback shift registers (LFSRs)~\cite{Tausworthe1965random,Massey_1969_shiftregister,Klein2013lfsr}---classical circuits that play a role in pseudorandom number generation.
    Independently of their application to our construction, we expect these Floquet models to be very useful in the study of ergodicity, thermalization, and chaos in quantum systems~\cite{D'Alessio2016ETHreview,Deutsch2018ETHreview}.
    While there are several solvable models of thermalization based on random~\cite{Chan2018randomfloquet,Fisher2023randomcircuitreview,Liao2022fieldtheoryapproacheigenstate} or dual unitary~\cite{Bertini2025exactlysolvablemanybodydynamics} circuits, and proofs of thermalization in a large class of translationally-invariant Hamiltonians~\cite{Huang2020instabilitylocalizationtranslationinvariantsystems,Pilatowskycameo2025thermalization}, this solvability does not extend to a closed-form expression for the eigenstates, and only a proof of a weak form of ETH~\cite{Biroli2010rarefluctation,Mori2016weakeigenstatethermalizationlarge,Lydzba2024normalweaketh,Vikram2025bypassingeigenstatethermalizationexperimentally}.
    The only class of prior models known to us in which ETH is provably obeyed are random matrices~\cite{Potters2020rmtbook,Khaymovich2020lnrp,Sugimoto2023ethmeanfield}, which have no locality structure.
    Further, the LFSR dynamics conform with many statements of quantum chaos, but fail many others.
    Indeed, while we show that their eigenstates obey ETH, in the sense that off-diagonal matrix elements of local operators are small, their spectra are nothing like those of a Haar random unitary.
    As an exactly solvable edge case, they clarify the implications of different criteria of chaos and thermalization, and should be broadly useful in sharpening our understanding of many-body dynamics.
    We note that LFSR quantum circuits also recently appeared in Ref.~\cite{Kim2025catalyticzrotationsconstanttdepth} in an unrelated context.

    \begin{figure}
        \centering
        \includegraphics[width= 0.99\columnwidth]{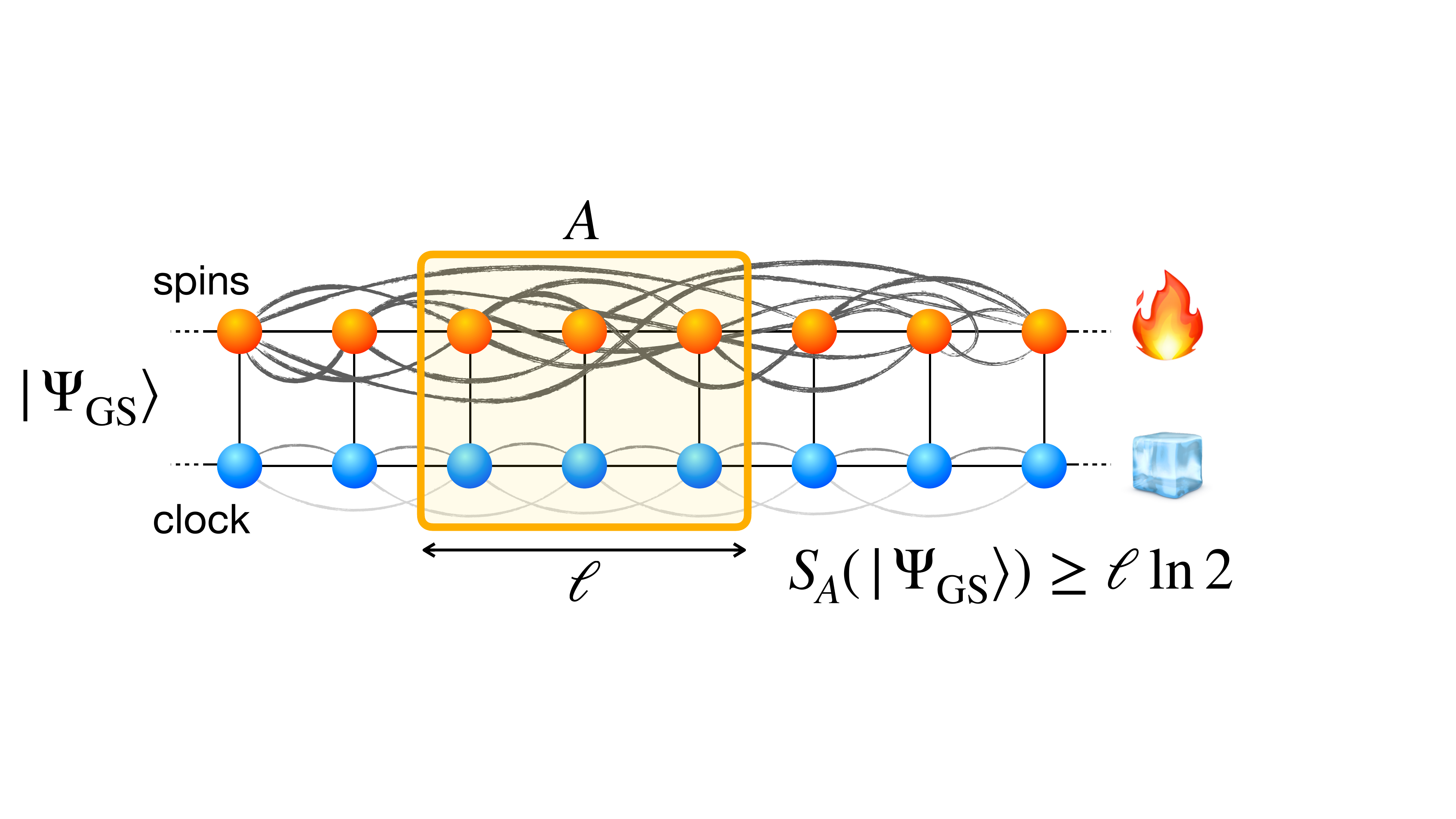}
        \caption{Schematic of the ground state $\ket{\Psi_\mathrm{GS}}$ of the periodic FK clock Hamiltonian. The system is a two-leg ladder comprising a qubit chain (``spins'', top) and a fermionic chain (``clock'', bottom), with geometrically local interactions (4-body couplings around plaquettes). The ground state is built by entangling in a specific way an infinite-temperature Floquet state on the spins and a zero-temperature state on the clock.
        Wavy lines qualitatively represent entanglement. Any subinterval $A$ of the spin-clock ladder (yellow box) is volume-law entangled, with $S_A(\ket{\Psi_\mathrm{GS}}) \geq \ell \ln 2$, $\ell$ being the length of the interval.
        \label{fig:idea}}
    \end{figure}

    The rest of the paper is organized as follows.
    In \autoref{sec:FKClockModel} we review the standard Feynman-Kitaev clock construction, introduce its modifications with periodic boundary conditions on the clock, and prove that its eigenstates exhibit volume law entanglement whenever a corresponding eigenstate of the input circuit obeys the ETH (Theorem~\ref{thm:volumelaw_1hand}).
    Then, in \autoref{sec:LFSRModel}, we introduce Floquet quantum circuits based on classical LFSRs and prove that almost all eigenstates thereof obey the ETH at infinite temperature (Theorem~\ref{thm:eth}), which in combination with the previous result yields a rigorous construction of local Hamiltonians with volume-law entangled ground states (Theorem~\ref{thm:volumelaw_gs}).
    We discuss the implications and possible extensions of our work in \autoref{sec:Discussion}.

\section{Feynman-Kitaev Clock Model}
    \label{sec:FKClockModel}

    \begin{figure*}
        \centering
        \includegraphics[width=\textwidth]{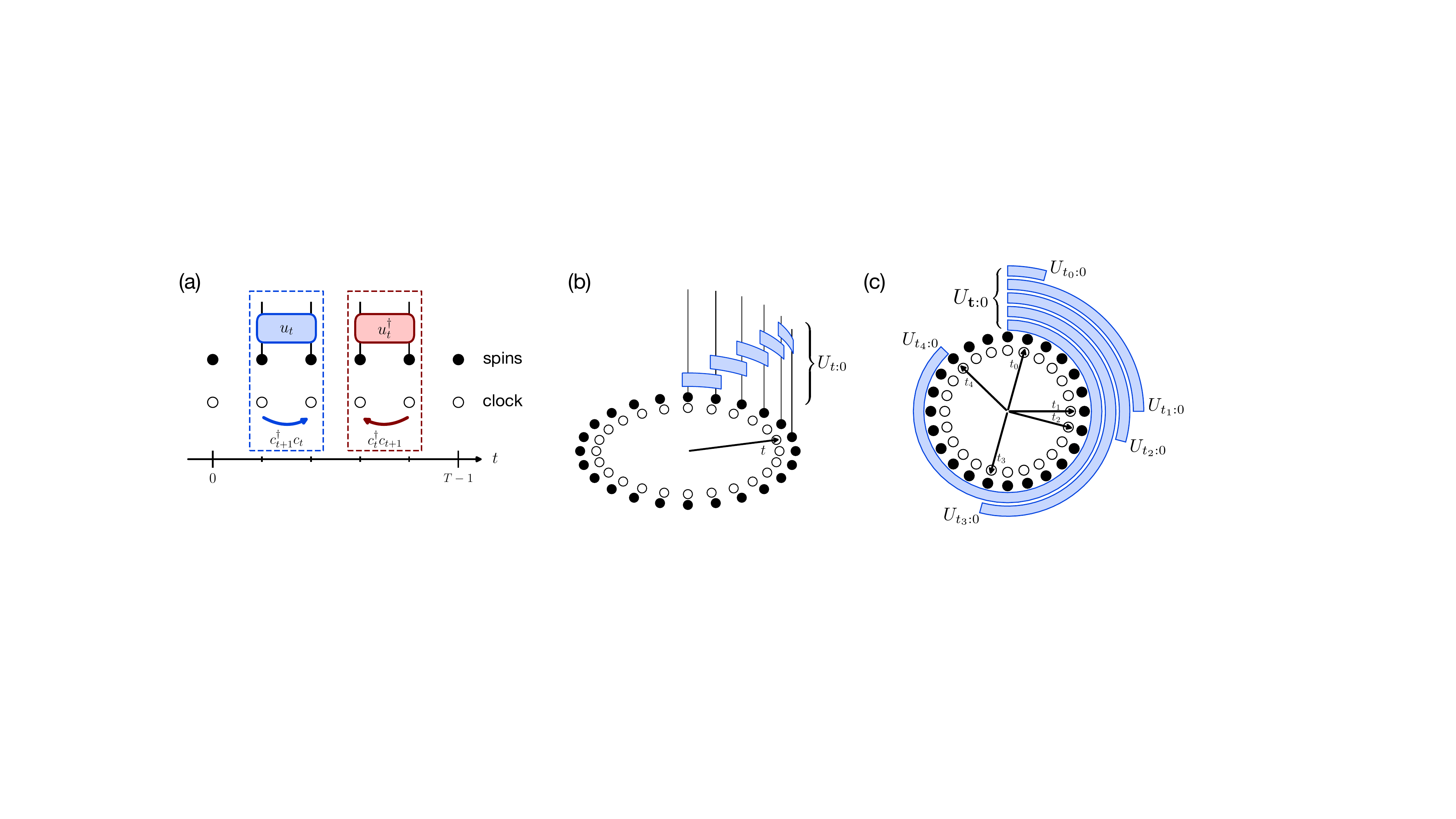}
        \caption{Schematic of the periodic clock Hamiltonian construction. 
        (a) The Hamiltonian acts on a chain of $n$ qubits (\emph{spins}, solid dots) coupled to a fermionic tight-binding chain (\emph{clock}, open dots) with sites enumerated by $t \in [0,T-1]$. In the figure we take $T = n$. Dashed boxes illustrate the local Hamiltonian terms: the clock particle, or \emph{hand}, hops forward (resp. backward) and a local gate $u_t$ (resp. $u_t^\dagger$) is applied to the spin system. 
        (b) Periodic clock: the hand can hop between $T-1$ and 0. If the spin system is in an eigenstate of $U_{T:0}$, then the hand is always trailed by a string operator $\propagator{t}{0} = u_{t-1}u_{t-2}\cdots u_0$ acting on the spin system, even when it winds around the circumference nontrivially. 
        (c) Many-handed periodic clock. Each of the $M$ hands (here $M=5$) is trailed by a string operator $U_{t_i:0}$, with the longest string acting first.
        }
        \label{fig:clocks}
    \end{figure*}
    
    Our construction of volume-law-entangled ground states is based on the Feynman-Kitaev clock (FK clock) model~\cite[Chapter~14]{Kitaev2002book}.
    This model is designed to embed the result of a quantum computation in a Hamiltonian ground state. In its original context, it was intended to show that finding the ground state of a \(k\)-local Hamiltonian is, in general, hard. (It is QMA complete~\cite{Kitaev2002book}.)

    By introducing periodic boundary conditions to the FK clock model, we demonstrate that it can be used to embed any Floquet eigenstate of a quantum circuit into the ground state of a local Hamiltonian.

    In \autoref{subsec:OpenClock} we review the FK clock construction, which takes as input a quantum circuit, and produces a Hamiltonian whose ground state contains the output of the quantum circuit. 
    Then we introduce periodic boundary conditions in \autoref{subsec:PeriodicClock}, and demonstrate that the new Hamiltonian eigenstates are related to Floquet states of the input circuit.
    In \autoref{subsec:MBClock}, we extend the clock model to a geometrically local many-body Hamiltonian.
    In \autoref{subsec:VolumeLawEEE} we prove the main result of the section: that eigenstates of the clock model have volume-law entanglement provided the Floquet states of the input circuit obey (diagonal) ETH. 
    Finally, in \autoref{subsec:SpinExpectations}, we discuss the specific structure of local expectation values in the FK clock ground state, justifying the cartoon of Fig.~\ref{fig:idea}.

    \subsection{Open clocks}
        \label{subsec:OpenClock}
        
        The FK clock model is defined on a tensor product Hilbert space
        \begin{equation}
            \Htot = \Hspin \otimes \Hclock
        \end{equation}
        with factors $\Hspin$ and $\Hclock$ of dimension $2^n$ and $T$ respectively. $\Hspin$ represents a system of $n$ qubits (referred to as spins) while $\Hclock$ contains states of an auxiliary \emph{clock} degree of freedom.
        Denote an orthonormal basis for \(\Hclock\) by \(\ket{t}\) with \(t \in \{0, ..., T-1\}\). Given any quantum circuit consisting of \(T-1\) local gates on \(\Hspin\), \(\{u_t\}_{t=0}^{T-2}\), the FK clock Hamiltonian is
        \begin{equation}\label{eqn:OpenClock}
            H = \Pi^\text{(init)} + \sum_{t=0}^{T-2} \Pi_t^\text{(tick)},
        \end{equation}
        where
        \begin{subequations}
        \begin{align}
            \unit - \Pi^\text{(init)} =& \ketbra{\psi_0}{\psi_0} \otimes \ketbra{0}{0},\\
            \unit - \Pi_t^\text{(tick)} =&  \frac{1}{2} \big[ u_t \otimes \ket{t+1} - \unit \otimes \ket{t} \big] \nonumber \\
            &\quad \times \big[ u^\dagger_t \otimes \bra{t+1} - \unit \otimes \bra{t} \big],
        \end{align}
        \end{subequations}
        and \(\ket{\psi_0}\) is a state in \(\Hspin\).
        Note that the terms in the Hamiltonian act locally on the spin system (the unitaries $u_t$ appearing in $H$ are {\it geometrically local} gates).
        We refer to this model as the open FK clock model, to indicate that it has open boundary conditions---there are no terms in the Hamiltonian which couple \(\ket{T-1}\) to \(\ket{0}\).

        The interpretation of the Hamiltonian~\eqref{eqn:OpenClock} is as follows.
        The projector \(\Pi^\text{(init)}\) picks out an initial state \(\ket{\psi_0}\) on which some quantum computation \(\prod_{t=0}^{T-2} u_t\) is to be performed.
        Then each of the projectors \(\Pi_t^\text{(tick)}\)tries to connect each ``tick'' of the clock with the application of a gate $u_t$ to the spins [\autoref{fig:clocks}(a)]. That is, it gives an energy penalty unless a state is of the form \(u_t \ket{\psi_t} \otimes \ket{t+1} + \ket{\psi_t} \otimes \ket{t}\).
        In fact, this Hamiltonian is frustration free, with a unique zero-energy ground state given by the \emph{history state}
        \begin{equation}
            \ket{\Psi_{\text{hist}}} = \frac{1}{\sqrt{T}}\sum_{t=0}^{T-1} \propagator{t}{0} \ket{\psi_0} \otimes \ket{t}, \label{eq:history_state}
        \end{equation}
        where we define the propagator \(\propagator{t_2}{t_1}\) in terms of a partial product of gates from the circuit,
        \begin{equation}
            \propagator{t_2}{t_1} = 
            \left\{
            \begin{array}{l l}
                u_{t_2-1} \cdots u_{t_1} \quad & \text{for }t_2 > t_1, \\
                \unit \quad & \text{for }t_2 = t_1, \\
                u^\dagger_{t_2} \cdots u^\dagger_{t_1-1} \quad & \text{for }t_2 < t_1.
            \end{array}
            \right.
        \end{equation}
        Note that the propagator circuits $U_{t:0}$ representing the desired quantum computation {do not appear directly} in the Hamiltonian; only their constituent local gates $\{u_t\}$ do.

        Measuring the clock in the history state Eq.~\eqref{eq:history_state} gives a uniformly random outcome $t\in \{0,..., T-1\}$. In particular it gives the outcome \(\ket{T-1}\) with probability \(1/T\). In this case, the spins are projected into a state corresponding to the result of applying the quantum computation \(\propagator{T-1}{0}\) to \(\ket{\psi_0}\),
        \begin{equation}
            [\unit \otimes \bra{T}]\ket{ \Psi_{\text{hist}}} \propto \propagator{T-1}{0} \ket{\psi_0}.
        \end{equation}
        Provided the quantum circuit size $T$ is polynomial in $n$, this outcome can be postselected efficiently. This is what allows the reduction of any problem in QMA to an instance of the local Hamiltonian problem, giving QMA-completeness~\cite{Kitaev2002book,Kempe2006localham}.

    \subsection{Periodic clocks}
        \label{subsec:PeriodicClock}

        The FK clock model is designed such that the ground state encodes the time evolution of a state under \(\propagator{t}{0}\). 
        The finite nature of this process (with an input state, a computation, and an output) translates to open boundary conditions on the clock register: states $t=0$ and $t=T-1$ are not connected by ticks.
        By changing the clock's boundary conditions to be periodic [\autoref{fig:clocks}(b)], we obtain Hamiltonians that try to energetically enforce \emph{time-periodic} history states, where the output of the computation is equal to the input. 
        That is, the eigenstates of \(H\) encode Floquet states of the circuit, interpreted as a fixed unitary evolution applied periodically in time.

        The periodic FK clock model is defined in the same spin-clock Hilbert space as in \autoref{subsec:OpenClock}, 
        but with a circuit on \(\Hspin\) with one more gate, \(\propagator{T}{0} = \prod_{t=0}^{T-1} u_t\), and an associated tick projector that replaces the initialization term,
        \begin{equation}\label{eqn:PeriodicClock}
            H = \sum_{t=0}^{T-1} \Pi_t^{\text{(tick)}}.
        \end{equation}
        The final tick projector connects states $\ket{0}$ and $\ket{T-1}$,
        \begin{multline}
            \unit - \Pi_{T-1}^\text{(tick)} =  \frac{1}{2} \big[ u_{T-1} \otimes \ket{0} - \unit \otimes \ket{T-1} \big] \\
            \times \big[ u^\dagger_{T-1} \otimes \bra{0} - \unit \otimes \bra{T-1} \big].
        \end{multline}
        Since all the unitaries $u_t$ are local, this is still a geometrically local Hamiltonian (provided one implements the clock states $\{\ket{t}\}_{t=0}^{T-1}$ in a geometrically local manner; see Sec.~\ref{subsec:MBClock} for an explicit implementation).
        
        The periodic FK clock model is typically no longer frustration free. 
        One can satisfy all tick projectors for $t = 0,..., T-2$ with a history state $\sum_{t=0}^{T-1} \ket{\psi_t}\otimes \ket{t}$ as in the open case, but the added tick projector across the boundary imposes the additional constraint that $\propagator{T}{0}\ket{\psi_0} = \ket{\psi_0}$.
        This can only be satisfied if the spectrum of $\propagator{T}{0}$ contains $+1$, which is usually not the case.
        
        Even though $H$ is frustrated, its eigenstates can be obtained exactly assuming that we are given a complete set of Floquet states \(\{\ket{\phi}\}\) for \(\propagator{T}{0}\), such that
        \begin{equation}
            \propagator{T}{0}\ket{\phi} = e^{i \phi} \ket{\phi}.
        \end{equation}
        We observe that \(H\) has Krylov sectors given by
        \begin{equation}
            \mathrm{span}\left\{ \ket{K_{t,\phi}} := \propagator{t}{0} \ket{\phi} \otimes \ket{t} : t \in \{0, ..., T-1\} \right\}.
        \end{equation}
        Indeed, noting that 
        \begin{equation}\label{eqn:PeriodicClockHop}
            H = \unit - \frac{1}{2} \sum_{t=0}^{T-1} (u_{t}\otimes  \ketbra{t+1}{t} + \text{h.c.})
        \end{equation}
        (where we identify \(\ket{T} \equiv \ket{0}\)) $H$ acts on the Krylov states $\ket{K_{t,\phi}}$ as
        \begin{equation}
            H\ket{K_{t,\phi}} = \ket{K_{t,\phi}} - \frac{1}{2} \ket{K_{t+1,\phi}} - \frac{1}{2} \ket{K_{t-1,\phi}},
        \end{equation}
        where we defined \(\ket{K_{T,\phi}} = \propagator{T}{0} \ket{\phi}\otimes \ket{0}\).
        The final tick at \(t = T-1\)  takes \(\ket{K_{T-1,\phi}}\) to \(\ket{K_{T,\phi}}\). However, by the assumption that \(\ket{\phi}\) is an eigenstate of \(\propagator{T}{0}\), we have that
        \begin{equation}
            \ket{K_{T,\phi}} = \propagator{T}{0} \ket{\phi}\otimes \ket{T} = e^{i\phi} \ket{K_{0,\phi}}.
        \end{equation}
        The final tick returns the \(\ket{K_{T-1,\phi}}\) state to the first state in the Krylov sector, but with a twisted phase. 
        Indeed, the matrix for \(H\) in the \(\ket{K_{t,\phi}}\) basis of the sector is just that for a single particle hopping on a \(T\)-site circle, with a twist of the periodic boundary conditions by \(\phi\),
        \begin{equation}
            [H]_\phi = 
            \frac{1}{2}
            \begin{pmatrix}
                2 & -1 & 0 & \cdots & -e^{i \phi} \\
                -1 & 2 & -1 & \cdots & 0 \\
                0 & -1 & 2 & \cdots & 0 \\
                \vdots & \vdots & \vdots & \ddots & \vdots \\
                -e^{-i \phi} & 0 & 0 & \cdots & 2 \\
            \end{pmatrix}.
        \end{equation}
        The eigenstates in this Krylov sector are
        \begin{equation}\label{eqn:PeriodicClockEStates}
            \ket{\Psi_{k,\phi}} = \frac{1}{\sqrt{T}}\sum_{t=0}^{T-1} e^{i (k -\phi/T)t} \ket{K_{t,\phi}}
        \end{equation}
        where \(k \in  \frac{2\pi}{T}\{0, 1, ..., T-1\}\) represents the momentum of the clock's hand. The corresponding eigenvalues are
        \begin{equation}
            E_{k, \phi} = 1- \cos(k-\phi/T).
        \end{equation}

        Some remarks about this energy spectrum are in order. First, 
        simple counting shows that the energies \(E_{k,\phi}\) account for all \(T 2^n\) eigenvalues of \(H\), so we have the entire spectrum. Notably, the spectrum is bounded: \(\mathrm{sp}(H) \subseteq [0,2]\). 
        This is despite the Hamiltonian \(H\) ostensibly being a many-body Hamiltonian. Effectively, \(H\) splits into exponentially many single-particle models, each with a bounded energy, and \(H\) inherits the spectral properties of these single-particle models.
        However, the eigenstates making up these single-particle sectors involve the infinite-temperature Floquet states \(\ket{\phi}\).

        Secondly, note that the phase $\phi$ appearing in $E_{k,\phi}$ can be modified by introducing a phase twist \(\propagator{T}{0} \mapsto e^{i \chi} \propagator{T}{0}\). 
        Such a twist can be introduced, for instance, by multiplying \(u_{T-1}\) by \(e^{i \chi}\). Inspecting \eqnref{eqn:PeriodicClockHop}, this is physically a twist of the periodic boundary conditions by \(\chi\), or equivalently a flux through the clock.
        In particular, we can set \(\chi = -\phi\) for any \(\phi\), which tunes the energy of \(\ket{\Psi_{0,\phi}}\) to \(0\). That is, by introducing a phase twist, any Floquet state can be embedded in the frustration-free ground state of the periodic FK clock model. It follows also that the FK clock Hamiltonian (assuming a nondegenerate spectrum for $U_{T-1:0}$) is frustration-free for $2^n$ distinct choices of $\chi$ in the interval $[0,2\pi)$. In this sense the Hamiltonian is typically exponentially close to frustration-free\footnote{
        Consider two distinct choices of the phase twist $\chi_1,\chi_2$. One has $\| H(\chi_1) - H(\chi_2) \|_\mathrm{op} = \| (e^{i\chi_2} - e^{i\chi_1}) \unit \otimes c^\dagger_0 c_{T-1} + \mathrm{h.c.}\|_\mathrm{op} $, since $H(\chi_1)$ and $H(\chi_2)$ differ only at the bond where the phase twist is inserted. It follows that $\| H(\chi_1) - H(\chi_2) \|_\mathrm{op} \leq 2|e^{i\chi_2}-e^{i\chi_1}| \| c_0^\dagger c_{T-1}\|_\mathrm{op} \leq 2|\chi_2 - \chi_1|$. For any given phase twist $\chi$, let $\chi^\star$ be the nearest phase twist such that $H(\chi^\star)$ is frustration-free; there are $2^n$ such choices on the unit circle. Assuming the density of states to be nearly uniform, it follows that $\| H(\chi) - H(\chi^\star)\|_\mathrm{op} \leq 2|\chi - \chi^\star| = O(2^{-n})$, so $H(\chi)$ is exponentially close in operator norm to a frustration-free Hamiltonian.
        }.

        Related to this, we also see that the finite-size gap of the periodic FK clock model is extremely small.
        For typical choices of circuit, the values of \(\phi\) will have typical level spacings of \(2\pi \cdot 2^{-n}\).
        However, the energies \(E_{k, \phi}\) have a van Hove singularity near \(E_{k,\phi} = 0\), so that the density of states is enhanced even further, and the gap above the ground state is smaller still.
        From \(E_{0,\phi} \approx (\phi/T)^2/2\), it will be \(O(4^{-n})\).

    \subsection{Geometrically local, many-handed clocks}
        \label{subsec:MBClock}

        To explore the thermal and entanglement properties of the periodic FK clock model, we must make sense of it as a geometrically local many-body Hamiltonian. 
        Hereafter, we take $T = n$ (one clock state per spin), and assume each gate $u_t$ acts on the two spins $t$, $t+1\, \mathrm{mod}\, n$, so that $\propagator{t}{0}$ becomes a one-dimensional staircase quantum circuit, sketched in \autoref{fig:clocks}. Note that the propagator $\propagator{t}{0}$ doesn't appear in the Hamiltonian, only the local gates $u_t$ do.
        In this case, the periodic FK clock model has a natural realization as a tight-binding model coupled to the spin system.

        To see this, we realize the clock Hilbert space \(\Hclock\) as the single-particle sector of a tight-binding model of fermions\footnote{
        Using fermions is not necessary.
        One could replace the terms \(c^\dagger_{t+1} c_t\) in Eq.~\eqref{eqn:MBClock} with hard-core bosonic hopping terms \(b^\dagger_{t+1} b_t\) or qubit operators \(\sigma^+_{t+1} \sigma^-_t\) [where \(\sigma_t^+ = \frac{1}{2}(X_t + i Y_t) = (\sigma_t^-)^\dagger\) and \(X_t,Y_t,Z_t\) are Pauli matrices supported on site \(t\)]. 
        The periodic FK clock model would still be realized in the single-hand subspace, where hands now refer to either bosons or qubits in the \(\ket{1}\) state.
        We choose to represent the clock in terms of fermions only because methods for solving free-fermion models are likely more familiar.}.
        Specifically, we consider the $2^n$-dimensional Fock space $\mathcal{F}_n$ of $n$ fermionic modes and decompose it into particle number sectors: $\mathcal{F}_n = \bigoplus_{M=0}^n \mathcal{H}_\mathrm{clock}^{(M)}$, where each number sector $\mathcal{H}_\mathrm{clock}^{(M)}$ has dimension $\binom{n}{M}$ and represents clock states with $M$ ``hands''. The one-handed clock Hilbert space of Sec.~\ref{subsec:OpenClock} and \ref{subsec:PeriodicClock} then is identified with $\mathcal{H}_\mathrm{clock}^{(1)}$.
        Define spinless fermion annihilation operators \(c_t\), corresponding to a fermion on site \(t\) of a length \(T\) chain. Then the Hamiltonian
        \begin{equation}\label{eqn:MBClock}
            H = \sum_{t=0}^{n-1} \unit\otimes c^\dagger_{t} c_t - \frac{1}{2}(u_{t}\otimes  c^\dagger_{t+1} c_t + \text{h.c.})
        \end{equation}
        realizes Eq.~\eqref{eqn:PeriodicClockHop} in its single-fermion sector $\mathcal{H}_\mathrm{spin} \otimes \mathcal{H}_\mathrm{clock}^{(1)}$, identifying $\ket{t}:= c^\dagger_t \ket{\Omega}$ with $\ket{\Omega}$ the fermionic vacuum state. (The Hamiltonian is zero in the fermion vacuum sector $\mathcal{H}_\mathrm{clock}^{(0)}$.)
        The dynamics in the system of spins are activated by hopping of the \emph{hands} of the clock---the fermions. This is what leads to the bounded spectrum in the single-fermion sector, even when the gates \(u_t\) may represent complicated interactions.

        Equation~\eqref{eqn:MBClock} is a local many-body Hamiltonian on a circle (\autoref{fig:clocks}).\footnote{One may generalize to other numbers of qubits or other circuit geometries by coarse-graining so that several qubits correspond to a single clock site.} 
        Indeed, it is a two-leg ladder, with a spin chain and fermion chain (the clock) parallel to each other on a circle; see also Fig.~\ref{fig:idea}. Hamiltonian terms are supported on four sites around each plaquette of the two-leg ladder: each unitary $u_t$ acts on the two qubits at sites $t$ and $t+1$ in the spin chain, and each fermionic term $c^\dagger_{t+1}c_t$ acts on modes $t$ and $t+1$ on the clock chain.

        When we additionally have that\footnote{One could demand a less restrictive condition, that $u_{n-1}$ is supported on qubit $n-1$ only. This does not change the class of valid Floquet operators \(\propagator{T}{0}\)}. 
        $u_{n-1} \propto \unit$, the many-body eigenstates can still be related to Floquet states \(\ket{\phi}\) of the staircase circuit \(\propagator{n}{0}\).
        We henceforth assume that $u_{n-1} \propto \unit$.
        First, we construct the many-hand eigenstates of \(H\) as Slater determinants of single-hand states. 
        Define a length-\(M\) ordered vector of hand positions,
        \begin{multline}
            \mathbf{t} = (t_1, ..., t_M) \in \{0,...,n-1\}^M \\
            \text{such that } t_1 < t_2 < \cdots < t_M .
        \end{multline}
        Then we have Krylov sectors
        \begin{equation}
            \mathrm{span}\left\{ \ket{K_{\mathbf{t},\phi}} := \propagator{\mathbf{t}}{0} \ket{\phi} \otimes c_{\mathbf{t}}^\dagger \ket{\Omega}\right\},
        \end{equation}
        where \(\ket{\Omega}\) is the fermion vacuum state and we defined
        \begin{subequations}
        \begin{align}
            \propagator{\mathbf{t}}{0} &= \propagator{t_1}{0} \propagator{t_2}{0} \cdots \propagator{t_M}{0}, \\
            c_{\mathbf{t}}^\dagger &= c^\dagger_{t_1} c^\dagger_{t_2} \cdots c^\dagger_{t_M} .
        \end{align}
        \end{subequations}
        The states $\ket{K_{\mathbf t,\phi}}$, similarly to the single-hand case, denote a collection of clock hands at positions \(t_1, ..., t_M\), each trailed by a propagator $\propagator{t_i}{0}$ This is sketched in \autoref{fig:clocks}(c). The ordering of the product is important, as the propagators \(\propagator{t}{0}\) typically do not commute with each other. With the choice of ordering in \(\propagator{\mathbf{t}}{0}\), we find that \(H\) takes the form of a non-interacting \(M\)-particle tight-binding model with boundary conditions twisted by \(\phi\). We have that
        \begin{equation}
            H \ket{K_{\mathbf{t},\phi}} = M \ket{K_{\mathbf{t},\phi}} - \frac{1}{2} \sum_{\mathbf{t}' \in \mathrm{nn}(\mathbf{t})} \ket{K_{\mathbf{t}',\phi}},
            \label{eq:H_MBC_on_krylov}
        \end{equation}
        where \(\mathrm{nn}(\mathbf{t})\) is the set of set of hand configurations which can be obtained from \(\mathbf{t}\) with one hop.
        When $u_{n-1} \propto \unit$, applying the final gate when the \(M\)th hand completes a loop around the periodic boundary conditions trivially commutes with all the other propagators \(\propagator{t_j}{0}\). Crucially, this gives the many-hand model the correct boundary conditions:
        \begin{equation}
            \propagator{t_1}{0}\cdots \propagator{t_{M-1}}{0} \propagator{n}{0} \ket{\phi} = e^{i\phi} \propagator{0}{0} \propagator{t_1}{0} \cdots \propagator{t_{M-1}}{0} \ket{\phi}
        \end{equation}
        (recall $\propagator{0}{0} = \unit$), so that \(\mathbf{t} = (t_1, ..., t_{M-1}, n)\) is equivalent to \((0,t_1,..., t_{M-1})\), with the appropriate twist.
        
        It follows that the \(M\)-hand eigenstates of \(H\) are given by
        \begin{equation}\label{eqn:MBClockEStates}
            \ket{\Psi_{\mathbf{k},\phi}} = \sum_{\mathbf{t}} \varphi_{\mathbf{k},\phi}(\mathbf{t})\ket{K_{\mathbf{t},\phi}},
        \end{equation}
        where \(\varphi_{\mathbf{k},\phi}(\mathbf{t})\) is a Slater determinant coefficient for a state with momenta 
        $\mathbf k = (k_1,  ..., k_M)$, $k_j \in  \tfrac{2\pi}{T}\{0,1, ...,T-1\}$,
        and boundary conditions twisted by \(\phi\): 
        \begin{align}
            \varphi_{\mathbf{k},\phi}(\mathbf{t})
            & = \frac{1}{n^{M/2}} \sum_{\sigma \in S_M} \frac{\textrm{sign}(\sigma)}{M!} e^{i \sum_{j=1}^M (k_j -\phi/n) t_{\sigma(j)}},
        \end{align}
        where \(S_M\) is the set of permutations \(\sigma\) on \(M\) objects.
        Note that the only dependence of \(\varphi_{\mathbf{k},\phi}(\mathbf{t})\) on the spin system is through $\phi$.
        The energy of this eigenstate is
        \begin{equation}
            E_{\mathbf{k},\phi} = M - \sum_{k \in \mathbf{k}} \cos(k - \phi/n).
        \end{equation}
        When the number of clock hands $M$ is extensive, the bandwidth becomes extensive, as is expected of a genuine many-body Hamiltonian.
        The structure of the energy spectrum is mainly determined by the fermions in the clock, with the spin system appearing only through the momentum shift $\phi$.
        In particular, the density of states for the FK clock Hamiltonian is identical to that of a free-fermion tight-binding model in one dimension, up to a finer sampling of momenta due to the presence of $\phi$.

        We conclude this discussion with an alternative method to diagonalize the FK clock Hamiltonian, Eq.~\eqref{eqn:MBClock}, which provides a complementary perspective on the eigenstates $\ket{\psi_{\mathbf k,\phi}}$ [Eq.~\eqref{eqn:MBClockEStates}]. 
        We introduce the circuit 
        \begin{align}
            V & = (\mathsf{C}_0 \propagator{0}{0}) (\mathsf{C}_1 \propagator{1}{0}) \cdots (\mathsf{C}_{n-1} \propagator{n-1}{0}) 
            \label{eq:entangler_circuit}
        \end{align}
        where $\mathsf{C}_t \propagator{t}{0}$ represents a unitary $\propagator{t}{0}$ on the spin system controlled by the occupation of site $t$ of the clock. Since $\mathsf{C}_t \propagator{t}{0} = (\mathsf{C}_t u_{t-1})\cdots(\mathsf{C}_t u_{0})$, $V$ is a circuit made of $O(n^2)$ 3-local gates, or $O(n^3)$ geometrically local gates.
        One can verify that,
        for all $0\leq t \leq n-2$, 
        \begin{equation}
            V (\unit \otimes c^\dagger_{t+1} c_t) V^\dagger
            = u_t \otimes c^\dagger_{t+1} c_t
        \end{equation}
        and for $t = n-1$, recalling $u_{n-1}\propto \unit$,
        \begin{equation}
            V^\dagger (u_{n-1} \otimes c^\dagger_{0} c_{n-1}) V
            = \propagator{n}{0} \otimes c^\dagger_{0} c_{n-1}.
        \end{equation}
        Thus under the change of basis $V$, the FK clock Hamiltonian of Eq.~\eqref{eqn:MBClock} reads
        \begin{align}
            V^\dagger H V 
            & = \unit\otimes \left(\sum_{t=0}^{n-1} c^\dagger_t c_t - \frac{1}{2} \sum_{t=0}^{n-2} (c^\dagger_{t+1} c_t + \mathrm{h.c.}) \right) \nonumber \\ 
            & \quad -\frac{1}{2} \propagator{n}{0} \otimes c_0^\dagger c_{n-1} + \mathrm{h.c.}
        \end{align}
        Here the only operator acting nontrivially on the spin system is the Floquet unitary $\propagator{n}{0}$ (and its adjoint), which can be diagonalized separately from the clock. Projecting on a Floquet state $\ket{\phi}$ such that $\propagator{n}{0} \ket{\phi} = e^{i\phi} \ket{\phi}$ returns a tight binding Hamiltonian for the clock fermions with a twisted boundary condition $\phi$, the same as in Eq.~\eqref{eq:H_MBC_on_krylov}.
        It follows that the eigenstates $\ket{\Psi_{\mathbf k, \phi}}$ are obtained as 
        \begin{align}
            \ket{\Psi_{\mathbf k, \phi}}
            & = V (\ket{\phi} \otimes c^\dagger_{\mathbf k, \phi}\ket{\Omega}), \label{eq:MBClock_dressed_ES}
        \end{align}
        as can be also verified by direct calculation. 
        This form illustrates more explicitly the structure of the eigenstates [Eq.~\eqref{eqn:MBClockEStates}].
        It also shows that, given an eigenstate $\ket{\Psi_{\mathbf k,\phi}}$, it is possible to efficiently prepare the corresponding Floquet state $\ket{\phi}$ of the spin system by acting with $V^\dagger$ (a polynomial-depth circuit) and discarding the clock. 

    \subsection{Volume-law eigenstate entanglement entropy}
        \label{subsec:VolumeLawEEE}

        In this section, we show that the eigenstates of the periodic FK clock Hamiltonian are volume-law entangled whenever the eigenstates of the input quantum circuit obey the eigenstate thermalization hypothesis (ETH)~\cite{D'Alessio2016ETHreview,Deutsch2018ETHreview}.

        As we have seen, the eigenstates of the periodic FK clock Hamiltonian \(H\) [\eqnref{eqn:MBClock}] are expressed in terms of eigenstates of a unitary circuit $U_{n:0}$. 
        The vast majority of these Floquet eigenstates are generically expected to obey the ETH at infinite temperature:
        \begin{definition}[Floquet ETH] \label{def:eth}
            Let $\{\ket{i}\}$ be eigenstates of a Floquet operator on \(n\) qubits without conservation laws and let $O$ be a local operator with operator norm \(\norm{O}_{\mathrm{op}}\leq 1\). The states $\{\ket{i}\}$ are said to obey the \emph{eigenstate thermalization hypothesis} (ETH) at infinite temperature, or \emph{Floquet ETH}, if
            \begin{equation}
                \bra{i} O \ket{j} = \delta_{ij} \frac{{\rm Tr}(O)}{2^n} + 2^{-n/2} R_{ij}^O, \label{eq:floquet_eth}
            \end{equation}
            where $|R_{ij}^O|$ is of order 1. 
            If \eqnref{eq:floquet_eth} holds with \(i=j\) and all local operators \(O\), the state \(\ket{i}\) is said to obey \emph{diagonal ETH} at infinite temperature, or \emph{diagonal Floquet ETH}.
        \end{definition}
        The statement of ETH usually demands that the factors \(R_{ij}^O\) be erratically varying or random-like in some way which is rarely quantified precisely. They cannot be completely uncorrelated in actual local models~\cite{Pappalardi2022freeeth,Pappalardi2025fulleth,Shi2023localeigenstate,Hahn2024floqueteth}. We will only demand that they are \(O(1)\) in system size when the support of the operator \(O\) is fixed.
        Also note that Definition~\ref{def:eth} gives no condition on the eigenvalues \(e^{i \phi}\) of the Floquet unitary, only on the eigenstates \(\ket{i}\). To relate ETH to equilibration (smallness of fluctuations in expectation values about their steady state value), additional non-resonance conditions on such eigenvalues are typically required~\cite{Srednicki1999approach,Linden2009equilibration,Pilatowskycameo2025thermalization,Huang2024randomproduct}. 
        We separate these non-resonance conditions from the condition on eigenstates, and refer only to the latter as ETH.
        
        A consequence of diagonal Floquet ETH is that each Floquet eigenstate is highly entangled. One can show that local reduced density matrices are extremely close to maximally mixed:
        \begin{lemma}\label{lem:ETHImpliesCloseInTraceNorm}
            Suppose \(\ket{i}\) obeys diagonal Floquet ETH and that for all operators \(O\) supported in a local subsystem \(A\) of size \(\ell\) with \(\norm{O}_{\mathrm{op}}\leq 1\) we have \(|R^O_{ii}| \leq C_\ell\). Then
            \begin{subequations}
            \begin{align}
                \norm{\Tr_{\bar{A}}\ketbra{i}{i} - 2^{-\ell} \unit_A}_{\Tr} &\leq  C_\ell 2^{-n/2}, \label{eqn:TrNormBound} \\
                \norm{\Tr_{\bar{A}}\ketbra{i}{i}}_{\mathrm{op}} &\leq 2^{-\ell} + C_\ell 2^{-n/2},\label{eqn:OpNormBound}
            \end{align}
            \end{subequations}
            where \(\norm{\cdot}_{\Tr}\) is the trace norm and \(\norm{\cdot}_{\mathrm{op}}\) is the operator norm.
        \end{lemma}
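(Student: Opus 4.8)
The plan is to combine the diagonal Floquet ETH condition \eqref{eq:floquet_eth}, applied to operators supported on $A$, with the variational characterization of the trace norm as the dual of the operator norm. Write $\rho_A = \Tr_{\bar A}\ketbra{i}{i}$ for the reduced density matrix of $\ket{i}$ on $A$. For any operator $O_A$ on $A$ with $\norm{O_A}_{\mathrm{op}}\leq 1$, note that $O_A \otimes \unit_{\bar A}$ is supported on a region of size $\ell$ and has operator norm $\leq 1$, so diagonal Floquet ETH applies to it. Using $\Tr(O_A\otimes\unit_{\bar A}) = 2^{n-\ell}\Tr_A(O_A)$ and $\Tr_A(\rho_A O_A) = \bra{i} O_A\otimes\unit_{\bar A} \ket{i}$, \eqref{eq:floquet_eth} gives
\begin{equation}
    \Tr_A\!\left[ (\rho_A - 2^{-\ell}\unit_A)\, O_A \right] = 2^{-n/2}\, R^{O_A \otimes \unit_{\bar A}}_{ii},
\end{equation}
whose magnitude is at most $C_\ell 2^{-n/2}$ by hypothesis.

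Next I would invoke the duality $\norm{X}_{\Tr} = \sup\{ |\Tr_A(X O_A)| : \norm{O_A}_{\mathrm{op}}\leq 1 \}$ on the Hilbert space of $A$. For the Hermitian operator $X = \rho_A - 2^{-\ell}\unit_A$ the supremum is attained at $O_A = \mathrm{sgn}(X)$, which is again an operator on $A$ with operator norm at most $1$---exactly the class to which the bound above applies---so \eqref{eqn:TrNormBound} follows immediately. Equivalently, one may diagonalize $X = \sum_j \lambda_j \ketbra{j}{j}$ on $A$ and apply the ETH bound to $O_A = \sum_j \mathrm{sgn}(\lambda_j)\ketbra{j}{j}$, obtaining $\sum_j |\lambda_j| = \norm{X}_{\Tr} \leq C_\ell 2^{-n/2}$.

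For the operator-norm statement \eqref{eqn:OpNormBound} I would use the triangle inequality $\norm{\rho_A}_{\mathrm{op}} \leq \norm{\rho_A - 2^{-\ell}\unit_A}_{\mathrm{op}} + \norm{2^{-\ell}\unit_A}_{\mathrm{op}} = \norm{\rho_A - 2^{-\ell}\unit_A}_{\mathrm{op}} + 2^{-\ell}$ together with the elementary bound $\norm{\cdot}_{\mathrm{op}} \leq \norm{\cdot}_{\Tr}$, then substitute the trace-norm estimate just obtained.

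There is no substantial obstacle here: the argument is essentially a translation of the ETH condition into the language of reduced density matrices. The only point deserving care is to confirm that the operator realizing the trace-norm supremum lies within the class of $A$-supported, unit-operator-norm observables for which the bound $|R^O_{ii}|\leq C_\ell$ is assumed; this holds by construction, since $\mathrm{sgn}(X)$ acts only on $A$ and has operator norm one.
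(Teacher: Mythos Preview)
Your proposal is correct and follows essentially the same route as the paper: the trace-norm bound comes from the duality $\norm{\Delta}_{\Tr}=\sup_{\norm{O}_{\mathrm{op}}\leq 1}|\Tr(O\Delta)|$ combined with the diagonal ETH assumption, and the operator-norm bound then follows from the triangle inequality together with $\norm{\cdot}_{\mathrm{op}}\leq\norm{\cdot}_{\Tr}$. Your explicit remark that the maximizer $O_A=\mathrm{sgn}(\rho_A-2^{-\ell}\unit_A)$ is itself an $A$-supported observable of unit operator norm is a useful clarification that the paper leaves implicit.
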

        \begin{proof}
            For \eqnref{eqn:TrNormBound}, use \(\norm{\Delta}_{\Tr} = \sup_{\norm{O}_{\mathrm{op}}\leq1} |\Tr(O \Delta)|\) and directly apply Definition~\ref{def:eth}.
            Then \eqnref{eqn:OpNormBound} follows from the triangle inequality and \(\norm{\Delta}_{\mathrm{op}} \leq \norm{\Delta}_{\Tr}\).
        \end{proof}
        This implies a lower bound on entanglement entropy:
        for any local subsystem $A$, we have $S_A(\ket{i}) \geq |A|\ln(2) + O(2^{-n/2})$.
        It is natural to expect that this volume-law scaling of entanglement entropy is inherited by the Hamiltonian eigenstates \(\ket{\Psi_{\mathbf{k},\phi}}\) [\eqnref{eqn:MBClockEStates}], as we show is indeed the case. Note that in \autoref{sec:LFSRModel} we introduce a model where Floquet ETH can be proven rigorously for almost all eigenstates, without any physical assumptions. All the results of this section hold for that choice of \(\{u_t\}_t\). 
        
        Here we focus on the case of a one-handed clock, while the many-handed case (conceptually analogous but technically more involved) is discussed in Appendix~\ref{app:clock}. 
        We assume diagonal Floquet ETH not just for the Floquet state $\ket{\phi}$, but also for the partially time-evolved states $\propagator{t}{0} \ket{\phi}$ for all $t<n$; note that these are eigenstates of \(\propagator{t}{0}\propagator{n}{t}\), the same periodic staircase circuit with a different choice of starting time for the period. Thus, if \(\ket{\phi}\) obeys diagonal Floquet ETH, all of \(\{\propagator{t}{0}\ket{\phi}\}_{t}\) typically will as well.
        
        We focus on contiguous subintervals $A$ of the spin-clock ladder: $A = A_s \cup A_c$ where $A_s$ and $A_c$ describe the same interval $\{t_1,t_1+1,\dots t_1+\ell-1 \text{ mod } n\}$ in the spin and clock chains respectively, as sketched in Fig.~\ref{fig:idea}. These are also the relevant contiguous subsystems when viewing the spin-clock ladder as a single chain of 4-level qudits (each qudit comprising one rung of the ladder: one qubit and one fermionic mode).

        \begin{thm}[Volume-law entanglement from Floquet ETH] \label{thm:volumelaw_1hand}
            If \(\propagator{t}{0}\ket{\phi}\) obeys diagonal Floquet ETH (Definition~\ref{def:eth}) for all \(t < n\), then all eigenstates \(\ket{\Psi_{k, \phi}}\) of the one-handed periodic FK clock Hamiltonian [\eqnref{eqn:PeriodicClockHop}] which are built from \(\ket{\phi}\) are volume-law entangled: 
            the entanglement entropy of any interval $A$ of length $\ell \leq n/2$ obeys
            \begin{equation}
                S_A(\ket{\Psi_{k,\phi}}) \geq \ell \ln(2)
            \end{equation} 
            provided that
            \begin{equation}
                \frac{\ell \ln(2)}{n/2} \geq C_\ell 2^{-n/2} + C_\ell^2 2^{\ell-n}, \label{eq:ell_criterion}
            \end{equation}
            where \(C_\ell\) is as in Lemma~\ref{lem:ETHImpliesCloseInTraceNorm}.
            Eq.~\eqref{eq:ell_criterion} holds in particular for any constant $\ell$ at sufficiently large $n$.
        \end{thm}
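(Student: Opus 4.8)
The plan is to exploit the explicit form of the eigenstate, $\ket{\Psi_{k,\phi}} = n^{-1/2}\sum_{t=0}^{n-1} e^{i\theta t}\,\propagator{t}{0}\ket{\phi}\otimes\ket{t}$ with $\theta = k-\phi/n$, together with the fact that the single-handed clock carries exactly one fermion. First I would fix a contiguous subsystem $A=A_s\cup A_c$ corresponding to an interval $I$ of $\ell$ sites and compute $\rho_A=\Tr_{\bar A}\ketbra{\Psi_{k,\phi}}{\Psi_{k,\phi}}$. The structural observation that makes everything work is that ``the hand occupies a site of $A_c$'' ($t\in I$) and ``the hand lies in $\bar{A_c}$'' ($t\notin I$) span \emph{orthogonal} subspaces of the clock Hilbert space restricted to $A_c$---an occupied single-fermion state versus the local fermionic vacuum---so $\rho_A$ is block-diagonal, $\rho_A=\rho_A^{\mathrm{in}}\oplus\rho_A^{\mathrm{out}}$. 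Explicitly, $\rho_A^{\mathrm{out}}=\tfrac1n\sum_{t\notin I}\rho^{(t)}_{A_s}\otimes\ketbra{\mathrm{vac}}{\mathrm{vac}}_{A_c}$, where $\rho^{(t)}_{A_s}:=\Tr_{\bar{A_s}}\!\big(\propagator{t}{0}\ketbra{\phi}{\phi}\propagator{0}{t}\big)$, with total weight $1-\ell/n$; and $\rho_A^{\mathrm{in}}$, with weight $\ell/n$, is the reduced state on $A_s\otimes A_c$ of the (un-normalized) vector $\ket{\xi}=n^{-1/2}\sum_{t\in I}e^{i\theta t}\,\propagator{t}{0}\ket{\phi}\otimes\ket{t}$. (No fermionic sign issues arise, since the lone hand lies wholly inside or wholly outside $A_c$.)

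Using the entropy of a block-diagonal state, $S_A(\ket{\Psi_{k,\phi}})=H(\ell/n)+\tfrac{\ell}{n}S(\hat\rho_A^{\mathrm{in}})+\tfrac{n-\ell}{n}S(\hat\rho_A^{\mathrm{out}})$ with $H$ the binary entropy and hats denoting normalization, the task reduces to bounding the entropy of each block. The ``hand outside'' block is easy: $\hat\rho_A^{\mathrm{out}}$ is (up to the trivial clock factor) the uniform mixture over $t\notin I$ of the single-state reduced matrices $\rho^{(t)}_{A_s}$, so concavity of the von Neumann entropy gives $S(\hat\rho_A^{\mathrm{out}})\ge\tfrac1{n-\ell}\sum_{t\notin I}S(\rho^{(t)}_{A_s})$. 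For the ``hand inside'' block I would compute its entropy via the complementary region: $\hat\rho_A^{\mathrm{in}}$ is the reduced state on $A_s\otimes A_c$ of the \emph{pure} state $\ket{\hat\xi}\propto\sum_{t\in I}e^{i\theta t}\,\propagator{t}{0}\ket{\phi}\otimes\ket{t}$ (pure on $\Hspin\otimes A_c$, with $\bar{A_c}$ a spectator), so $S(\hat\rho_A^{\mathrm{in}})$ equals the entropy of the reduced state of $\ket{\hat\xi}$ on $\bar{A_s}$. Tracing out the clock register $A_c$ decoheres the hand-position superposition---the states $\{\ket{t}\}_{t\in I}$ are orthonormal---leaving the incoherent mixture $\tfrac1\ell\sum_{t\in I}\Tr_{A_s}\!\big(\propagator{t}{0}\ketbra{\phi}{\phi}\propagator{0}{t}\big)$ on $\bar{A_s}$; concavity together with the equality $S(\rho^{(t)}_{\bar{A_s}})=S(\rho^{(t)}_{A_s})$ of complementary reduced entropies of the pure state $\propagator{t}{0}\ket{\phi}$ gives $S(\hat\rho_A^{\mathrm{in}})\ge\tfrac1\ell\sum_{t\in I}S(\rho^{(t)}_{A_s})$. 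Hence $S_A(\ket{\Psi_{k,\phi}})\ge H(\ell/n)+\min_{0\le t<n}S(\rho^{(t)}_{A_s})$.

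It remains to feed in the hypothesis. Since $\propagator{t}{0}\ket{\phi}$ obeys diagonal Floquet ETH, Lemma~\ref{lem:ETHImpliesCloseInTraceNorm} gives $\rho^{(t)}_{A_s}=2^{-\ell}\unit_{A_s}+\Delta_t$ with $\norm{\Delta_t}_{\Tr}\le C_\ell 2^{-n/2}$, and the elementary inequality $\ln d - S(\rho)\le d\,\norm{\rho-\unit/d}_{\mathrm{HS}}^2$ (from $\ln(1+x)\le x$ and $\Tr\Delta_t=0$, with $d=2^\ell$) yields $S(\rho^{(t)}_{A_s})\ge \ell\ln 2 - 2^\ell\norm{\Delta_t}_{\Tr}^2 \ge \ell\ln 2 - C_\ell^2 2^{\ell-n}$. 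For $0<\ell\le n/2$ the binary entropy satisfies $H(\ell/n)\ge 2(\ell/n)\ln 2 = \ell\ln(2)/(n/2)$ (its concave graph lies above the chord joining $(0,0)$ and $(\tfrac12,\ln 2)$). Combining, $S_A(\ket{\Psi_{k,\phi}})\ge \ell\ln 2 + \big[\ell\ln(2)/(n/2) - C_\ell^2 2^{\ell-n}\big]\ge\ell\ln 2$ whenever $\ell\ln(2)/(n/2)\ge C_\ell^2 2^{\ell-n}$, which is implied by \eqnref{eq:ell_criterion}; and that bound clearly holds for any fixed $\ell$ at large $n$, its left side decaying only polynomially and its right side exponentially.

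The main obstacle---and the step I would be most careful about---is the entropy of the $\rho_A^{\mathrm{in}}$ block: it is a coherent superposition over hand positions, each dressed by a different string operator $\propagator{t}{0}$, and trying to approximate it directly (e.g.\ replacing each $\rho^{(t)}_{A_s}$ by $2^{-\ell}\unit$) produces trace-norm errors that accumulate over the $\Theta(\ell^2)$ cross terms and are too weak to beat the $H(\ell/n)\sim\ell/n$ margin when $\ell$ is extensive. The fix is to never touch $\rho_A^{\mathrm{in}}$ directly but to pass to its complement $\bar{A_s}$, where tracing out the clock turns the state into a plain incoherent mixture on which concavity plus diagonal ETH apply verbatim. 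With that move in place, the remaining ingredients---the block structure, the two concavity estimates, the $\ln d-S$ inequality, and the chord bound on $H$---are routine. I would also check the minor edge cases (intervals that wrap around the clock circle, and the absence of fermionic signs in the single-hand partial trace); the many-handed case (Appendix~\ref{app:clock}) follows the same skeleton with Slater-determinant bookkeeping in place of the single-hand sums.
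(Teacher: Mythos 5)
Your proof is correct and takes a genuinely different route from the paper's. The paper lower-bounds the von Neumann entropy by the second R\'enyi entropy $\SecReyni{A}{\cdot} = -\ln\Tr(\rho_A^2)$, computes the purity directly (which factorizes into Frobenius norms of the $\sigma^{(t,t')}_{A_s}$), and handles the off-diagonal $t\neq t'$ terms via a case analysis on whether the interval $A$ contains the origin of the clock, using the composition property of propagators to reduce them to diagonal terms. You work with the von Neumann entropy directly: the clock-register particle number in $A_c$ (0 or 1 for a single hand) block-diagonalizes $\rho_A$, and the chain rule $S(\rho)=H(p)+\sum_\mu p_\mu S(\hat\rho_\mu)$ splits the clock and spin contributions cleanly. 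Your key move is to evaluate $S(\hat\rho_A^{\mathrm{in}})$ via the complementary region $\bar A_s$, where tracing out the orthonormal clock positions decoheres the superposition into an incoherent mixture, so concavity plus the equality of complementary entropies of the pure states $\propagator{t}{0}\ket\phi$ apply directly. This sidesteps the paper's off-diagonal analysis entirely: you never need to bound a cross term $\sigma^{(t,t')}_{A_s}$ with $t\neq t'$, and hence never need the origin-in-$A$ case split. Your approach also produces a marginally tighter bound---you get $S_A\geq\ell\ln 2+2\ell\ln(2)/n-C_\ell^2 2^{\ell-n}$ without the paper's additional $-C_\ell 2^{-n/2}$ error term, because the quadratic inequality $\ln d - S(\rho)\leq d\norm{\rho-\unit/d}_{\mathrm{HS}}^2$ (itself a short consequence of $\ln(1+x)\leq x$ after diagonalizing) is second order in the ETH deviation, whereas the R\'enyi route is linear. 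The paper's R\'enyi approach, on the other hand, generalizes more mechanically to the many-handed case of Appendix~\ref{app:clock} via direct purity factorization; your block-decomposition argument would still work there, but the bookkeeping over particle-number sectors of $A_c$ becomes heavier and the clock contribution is no longer a binary entropy. One small correction to your framing of the obstacle: the paper does not actually accumulate trace-norm errors over $\Theta(\ell^2)$ cross terms---it converts each $\norm{\sigma^{(t,t')}_{A_s}}_F^2$ to $\Tr[(\sigma^{(t,t)}_{A_s})^2]$ exactly using unitarity of the propagators inside $A_s$---but your route to avoid cross terms altogether remains cleaner.
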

        
        \begin{proof}
            We lower-bound the entanglement entropy by the second R\'enyi entropy,
            \begin{equation}
                S_A(\ket{\Psi_{k,\phi}}) \geq \SecRenyi{A}{\ket{\Psi_{k,\phi}}} := -\ln \Tr(\rho_A^2),
            \end{equation}
            where $\rho_A$ is the reduced density matrix of $\ket{\Psi_{k,\phi}}$:
            \begin{align}
                \rho_A 
                & = \sum_{t,t'=0}^{n-1} \varphi_{k,\phi}(t) \varphi^\ast_{k,\phi}(t') 
                \sigma_{A_s}^{(t,t')} \otimes {\rm Tr}_{\bar{A}_c}(\ketbra{t}{t'}), 
            \end{align}
            with $A_s$ and $A_c$ the subsystems of the spin and clock chains, $\varphi_{k,\phi}(t) = n^{-1/2} e^{i(k-\phi/n)t}$ are the free fermion eigenfunctions, and
            \begin{align}
                \sigma_{A_s}^{(t,t')}
                & = {\rm Tr}_{\bar{A}_s}(U_{t:0}\ketbra{\phi} U_{t':0}^\dagger).
            \end{align}
            For the operator on the clock Hilbert space,
            we have ${\rm Tr}_{\bar{A}_c}(\ketbra{t}{t'}) = \ketbra{t}{t'}_{A_c}$ if both $t,t'$ belong to $A_c$, otherwise ${\rm Tr}_{\bar{A}_c}(\ketbra{t}{t'}) = \delta_{t,t'} \ketbra{\Omega}_{A_c}$, with $\ket{\Omega}$ the vacuum state. This gives
            \begin{align}
                \rho_A
                & = \frac{1}{n} \sum_{t,t'\in A} e^{i(k-\phi/n)(t-t')} \sigma_{A_s}^{(t,t')} \otimes \ketbra{t}{t'}_{A_c} \nonumber \\
                & \quad + \frac{1}{n} \sum_{t\notin A}  \sigma_{A_s}^{(t,t)} \otimes \ketbra{\Omega}_{A_c}.
            \end{align}
            The purity of $\rho_A$ is calculated straightforwardly using orthogonality of the clock states ($\braket{t}{t'} = \delta_{t,t'}$, $\braket{\Omega}{t} = 0$), giving
            \begin{align}\label{eqn:PurityFrobenius}
                {\rm Tr}(\rho_A^2) 
                & = \frac{1}{n^2} \sum_{t,t'\in A} \left\| \sigma^{(t,t')}_{A_s}\right\|_F^2
                + \frac{1}{n^2} \left\| \sum_{t\notin A} \sigma^{(t,t)}_{A_s} \right\|_F^2 
            \end{align}
            with $\|O\|_F = \sqrt{{\rm Tr}(O^\dagger O)}$ the Frobenius norm. Here we used the fact that $\sigma^{(t,t')} = (\sigma^{(t',t)})^\dagger$.

            Let us focus on the ``diagonal'' terms $\sigma^{(t,t)}$ first. Note that these are simply the reduced density matrices of the states $U_{t:0}\ket{\phi}$, which we assume obey diagonal Floquet ETH (Definition~\ref{def:eth}).
            Applying Lemma~\ref{lem:ETHImpliesCloseInTraceNorm} to \(\propagator{t}{0}\ket{\phi}\), we obtain
            \begin{align}
                {\rm Tr}[\sigma_{A_s}^{(t,t)} \sigma_{A_s}^{(t',t')}] &= {\rm Tr}[\sigma_{A_s}^{(t,t)} (\sigma_{A_s}^{(t',t')} - 2^{-\ell} \unit_{A_s})] + 2^{-\ell} \nonumber \\
                & \leq 2^{-\ell} + \|\sigma_{A_s}^{(t,t)}\|_{\mathrm{op}} \| \sigma^{(t',t')}_{A_s} - 2^{-\ell} \unit_{A_s}\|_{\Tr} \nonumber \\ 
                &\leq 2^{-\ell} +  C_\ell 2^{-\ell -n/2} + C_\ell^2 2^{-n}
            \end{align}
            for some constant \(C_\ell\) as in Lemma~\ref{lem:ETHImpliesCloseInTraceNorm}. We also used the triangle inequality and \(|\Tr(O_1 O_2)| \leq \norm{O_1}_{\mathrm{op}} \norm{O_2}_{\Tr}\).

            For the ``off-diagonal'' terms $\sigma^{(t,t')}_{A_s}$ with $t \neq t'$, we must distinguish two cases, depending on whether or not $A$ contains the origin, $t=0$. 
            \begin{itemize}
                \item[(i)] $0 \notin A$. Using the composition property of propagators $U_{t:0} = U_{t:t'} U_{t':0}$, we can write write $\sigma^{(t,t')}_{A_s} = \sigma^{(t,t)}_{A_c} U_{t:t'} $, noting that $U_{t:t'}$ is entirely supported in $A_s$ and thus can be taken outside the partial trace. 
                \item[(ii)] $0 \in A$. In this case $t$ and $t'$ may lie on opposite sides of the origin, making $U_{t:t'}$ a nonlocal string that spans $\bar{A}_s$. 
                To avoid this issue we use the composition property in a different way. Assuming $t > t'$ without loss of generality, we have $U_{t:0}\ket{\phi} = U_{t:n} U_{n:0} \ket{\phi} = e^{i\phi} U_{n:t}^\dagger \ket{\phi}$, using the fact that $\ket{\phi}$ is a Floquet eigenstate. Thus $\sigma_{A_s}^{(t,t')} = e^{i\phi} U_{n:t}^\dagger \sigma_{A_s}^{(0,0)} U_{t':0}^\dagger$.
            \end{itemize}
            In both cases we managed to reduce $\sigma^{(t,t')}_{A_s}$ to a product of a diagonal term, 
            $ \sigma^{(t,t)}_{A_s} \approx 2^{-\ell } \unit_{A_s}$, 
            and unitary operators supported inside $A_s$. It follows that
            \begin{equation}
                \| \sigma^{(t,t')}_{A_s}\|_F^2 = \Tr[\sigma^{(t,t)}_{A_s} \sigma^{(t,t)}_{A_s}] \leq 2^{-\ell} +  C_\ell 2^{-\ell -n/2} + C_\ell^2 2^{-n},
            \end{equation}
            as before.
            
            Adding all terms in \eqnref{eqn:PurityFrobenius}, we get
            \begin{equation}
                {\rm Tr}(\rho_A^2) \leq \frac{\ell^2 + (n - \ell)^2}{n^2} \left[2^{-\ell} +  C_\ell 2^{-\ell -n/2} + C_\ell^2 2^{-n} \right], \label{eq:purity_factors}
            \end{equation}
            and thus
            \begin{subequations}
            \begin{align} 
                S_A(\ket{\Psi_{k,\phi}}) &= - \ln {\rm Tr}(\rho_A^2) \\
                &\geq \ell \ln(2) + \ln\left[ \frac{n^2}{\ell^2 + (n-\ell)^2} \right] \nonumber \\
                &\qquad- \ln\left[1+ C_\ell 2^{-n/2} + C_\ell^2 2^{\ell -n} \right], \\
                &\geq \ell \ln(2) + \frac{\ell \ln(2)}{n/2} - C_\ell 2^{-n/2} - C_\ell^2 2^{\ell-n}, \label{eqn:SABound}
            \end{align}
            \end{subequations}
            where we used \(x \geq \ln(1+x)\), and used \(\ell \leq n/2\) so that \(\ln[n^2/(\ell^2+(n-\ell)^2)] \geq 2\ln(2) \ell/n\) (which follows from concavity of the logarithm).
            When the conditions of the theorem hold, the combination of the latter three terms in \eqnref{eqn:SABound} is nonnegative, and we simply have \(S_A(\ket{\Psi_{k,\phi}}) \geq \ell \ln(2)\).
        \end{proof}

        Theorem~\ref{thm:volumelaw_1hand} applies to all eigenstates $\ket{\Psi_{k,\phi}}$, and thus in particular to the ground state $\ket{\Psi_{0,\phi}}$, provided the associated Floquet state $\ket{\phi}$ meets the theorem's assumptions. If such an eigenstate exists anywhere in the spectrum of $U_{n:0}$, then it can always be embedded in the ground state of $H$ by introducing a phase twist $\chi = -\phi$ in the clock, e.g., by changing $u_{n-1} = c \unit$ to $u_{n-1} = c e^{i\chi} \unit$.
        Thus  Theorem~\ref{thm:volumelaw_1hand} shows that the ground state of the periodic FK clock model can be volume-law entangled.
        Even further, the entanglement-entropy-density of the ground state is nearly the maximum possible in the single clock hand sector: the dimension of the single hand sector is \(n 2^n\), corresponding to a maximal entropy density of \(\ln 2 + \tfrac{\ln n}{n}\), and the entanglement-entropy-density of $\ket{\Psi_{0,\phi}}$ is \(\ln 2\).

        We note that Eq.~\eqref{eq:purity_factors} has a physically intuitive structure: the purity factors into two contributions, $2^{-\ell}$ (up to a small error term) from the spin chain, and $(\ell/n)^2 + (1-\ell/n)^2$ from the clock. The latter comes from the probability, $\ell/n$, of finding the clock hand in region $A_c$ in an eigenfunction $\varphi_{k,\phi}(t)$. This factorization in fact holds more generally. In Appendix~\ref{app:clock} we prove a generalization of Theorem~\ref{thm:volumelaw_1hand} to the many-handed case:

        \theoremstyle{plain}
        \newtheorem{innercustomgeneric}{\customgenericname}
        \providecommand{\customgenericname}{}
        \newcommand{\newcustomtheorem}[2]{%
          \newenvironment{#1}[1]
          {%
           \renewcommand\customgenericname{#2}%
           \renewcommand\theinnercustomgeneric{##1}%
           \innercustomgeneric
          }
          {\endinnercustomgeneric}
        }
        \newcustomtheorem{customthm}{Theorem}
        \begin{customthm}{A.1}
            Let \(A\) be any subinterval of the spin-clock system.
            Then the second R\'eyni entropy of any many-hand clock eigenstate \(\ket{\Psi_{\mathbf{k},\phi}}\) is bounded below as
            \begin{equation}
            \SecRenyi{A}{\ket{\Psi_{\mathbf{k},\phi}}} \geq \SecRenyi{A}{c^\dagger_{\mathbf{k},\phi}\ket{\Omega}} + \min_{\mathbf{t}} \SecRenyi{A}{\propagator{\mathbf{t}}{0} \ket{\phi}},
            \end{equation}
            where the minimum is over hand configurations with no more hands than \(\ket{\Psi_{\mathbf{k},\phi}}\), and
            \begin{equation}
            c^\dagger_{\mathbf{k},\phi} = \sum_{\mathbf{t}} \varphi_{\mathbf{k},\phi}(\mathbf{t}) c_{\mathbf{t}}^\dagger
            = \prod_{k \in \mathbf{k}} \left[\frac{1}{\sqrt{T}}\sum_{t=0}^{T-1} e^{i (k -\phi/T)t} c_t^\dagger\right]
            \end{equation}
            is the creation operator for a Slater determinant state.
        \end{customthm}

        Provided all $\{\propagator{\mathbf t}{0} \ket{\phi} \}_{\mathbf t}$ are volume-law entangled, then this theorem generalizes our result of volume-law entanglement to many-handed clock eigenstates as well. We note again that the contributions to the entanglement lower bound from the clock and the spin system add up. In particular the clock contributes the entanglement of a free-fermion Slater determinant state of momenta $\mathbf k$, which can be itself volume-law (in highly excited states)~\cite{Lai2015fermionentanglement} or logarithmic (in the Fermi sea ground state)~\cite{Wolf2006fermionlogentropy,Gioev2006freefermionentropy}.

    \subsection{Infinite-temperature spin expectation values}
    \label{subsec:SpinExpectations}
    
    We have noted informally above that the FK clock ground state's entanglement entropy carries a contribution of roughly $\ell \log(2)$ from the spin chain in addition to the clock's contribution. This would imply that the spin chain by itself (i.e., tracing out the clock chain) should be near-maximally entangled, thus at infinite temperature. 
    Here we make this statement precise. 

    Let us consider a many-handed clock eigenfunction,
    \begin{equation}
        \ket{\Psi_{\mathbf k, \phi}}
        = \sum_{\mathbf t} \varphi_{\mathbf k, \phi}(\mathbf t) (\propagator{\mathbf t}{0} \ket{\phi}) \otimes c^\dagger_{\mathbf t}\ket{\Omega}.
    \end{equation}
    Its reduced density matrix on the spin system is 
    \begin{align}
        \rho_\mathrm{spin} & 
        = \mathrm{Tr}_\mathrm{clock} (\ket{\Psi_{\mathbf k, \phi}}\bra{\Psi_{\mathbf k, \phi}}) \nonumber \\ 
        & = \sum_{\mathbf t} |\varphi_{\mathbf k, \phi}(\mathbf t)|^2 \propagator{\mathbf t}{0} \ket{\phi} \bra{\phi} \propagator{\mathbf t}{0}^\dagger. 
    \end{align}
    Lastly, the reduced density matrix on a local subsystem $A_s$ of the spin system is 
    \begin{equation}
        \rho_{A_s} 
        = \sum_{\mathbf t} |\varphi_{\mathbf k, \phi}(\mathbf t)|^2 \mathrm{Tr}_{\bar{A}_s} (\propagator{\mathbf t}{0} \ket{\phi} \bra{\phi} \propagator{\mathbf t}{0}^\dagger).
    \end{equation}
    Assuming Floquet ETH for all the partially time-evolved Floquet states $\propagator{\mathbf t}{0}$, each term in the sum is equal to the maximally mixed state, up to exponentially small (in system size $n$) error, so $\rho_{A_s} \approx \unit_{A_s} / 2^{|A_s|}$. 
    It follows that any local operator supported exclusively on the spin chain sees a maximally mixed density matrix, and thus has an expectation value consistent with infinite temperature. 

    It is worth highlighting that not all operators have infinite-temperature behavior in general; for example, the ``energy density'' terms $u_t \otimes c_{t+1}^\dagger c_t + \mathrm{h.c.}$ are by definition sensitive to the state's energy.
    This is an unavoidable feature of local Hamiltonian ground states. 
    However, in FK clock eigenstates, only operators that act nontrivially on the clock chain can exhibit energy dependence. Operators supported only on the spins invariably show infinite-temperature behavior, including in the ground state. Combined with the volume-law scaling of entanglement, this behavior of observables substantiates the unusual coexistence of ``infinite-temperature'' behavior alongside more conventional local ground state behavior in FK clock eigenstates, as sketched in Fig.~\ref{fig:idea}.
    
\section{Floquet circuits with provably thermal eigenstates}
    \label{sec:LFSRModel}

    So far, our results rely on the assumption of ETH for the input Floquet circuit. While the ETH is widely believed to hold for generic choices of circuit~\cite{D'Alessio2016ETHreview,Liao2022fieldtheoryapproacheigenstate}, it is desirable to have models where our results are provable. In fact, there are few (if any) local circuit or Hamiltonian models which provably obey the ETH. Beyond our application, it is broadly beneficial to the study of ergodicity and chaos in quantum systems to have an exactly solvable model of eigenstate thermalization.
    
    To this end, we introduce a class of Floquet circuits whose eigenstates are provably thermal. These circuits are based on linear feedback shift registers (LFSRs)~\cite{Tausworthe1965random,Massey_1969_shiftregister,Klein2013lfsr}, classical circuits that play a role in pseudorandom number generators. Below, we introduce LFSRs and their implementation as staircase quantum circuits, derive their eigenstates (\autoref{subsec:lfsr}),  prove they obey infinite-temperature ETH (\autoref{subsec:lfsr_eth}), and use this to prove volume-law entanglement in the corresponding FK clock Hamiltonian ground state (\autoref{subsec:volumelaw_lfsr}).

    While our results are phrased in terms of a staircase circuit, this is not essential to the construction. The same eigenstate thermalization properties can be shown to hold for a brickwork circuit constructed from the same gates.\footnote{
    The staircase and brickwork versions of the circuit are related by a Clifford similarity transformation, so they share the same spectrum and the same set of Pauli matrix elements. The latter is all we use to conclude eigenstate thermalization.
    }

    \subsection{LFSR circuits}
        \label{subsec:lfsr}

    An LFSR is a reversible update rule for a classical state of $n$ bits, $(z_0, ...,  z_{n-1}) \in \{0,1\}^n$, given by\footnote{For convenience we use a backward shift, instead of the more standard forward shift, to obtain a quantum staircase circuit with the same layout used in \autoref{sec:FKClockModel}. }
    \begin{equation}\label{eq:lfsr}
    z_i' = \left\{ 
    \begin{array}{ll}
        z_{i+1} \quad  & \text{for }i<n-1, \\
        z_0 + \sum_{i=1}^{n-1} a_i z_i \bmod 2 \quad & \text{for }i=n-1.
    \end{array}
    \right.
    \end{equation}
    The system size $n$ and binary coefficients $a_i$ completely specify the model. In words, the LFSR shifts all bits by one place, then the \emph{output} bit ($i = 0$) is fed back into the \emph{input} position ($i = n-1$) along with a linear function of the other bits.
    
    As a reversible finite-state machine, each LFSR partitions the state space into a set of orbits. An LFSR is called \emph{maximal} if it splits the state space into just two orbits: the all-zero state (which always forms a trivial orbit by itself, due to linearity) and all $2^n-1$ non-zero states. Maximal LFSRs are known to exist for all values of $n$, can be constructed efficiently, and are in one-to-one correspondence with primitive polynomials of degree $n$ over the binary field $\mathbb{F}_{2}$; see Appendix~\ref{app:LFSRReview} for more details.  

    The LFSR update rule, \eqnref{eq:lfsr}, is naturally implemented as a staircase circuit of \(\mathsf{SWAP}\) and \(\mathsf{CNOT}\) gates, 
    \begin{multline}\label{eq:uf_lfsr}
        U_{n:0} = (\mathsf{CNOT}_{n-2,n-1})^{a_{n-1}} \mathsf{SWAP}_{n-2,n-1} \times \cdots \\
        \cdots \times (\mathsf{CNOT}_{0,1})^{a_{1}} \mathsf{SWAP}_{0,1} , 
    \end{multline}
    corresponding to $u_i = (\mathsf{CNOT}_{i,i+1})^{a_i} \mathsf{SWAP}_{i,i+1}$ for $0\leq i \leq n-2$ and $u_{n-1} = \unit$. Here 
    $\mathsf{CNOT}_{i,j}$ is the controlled NOT gate with control bit $i$ and target bit $j$ and the $a_i$ are the binary coefficients from Eq.~\eqref{eq:lfsr}. This is sketched in \autoref{fig:lfsr}(a). 
    The staircase circuit structure makes these models well-suited to our FK clock Hamiltonian construction. 
    
    Equation~\eqref{eq:uf_lfsr} belongs to {several} special classes of quantum circuits:
    reversible classical circuits (also known as ``automaton'' or ``permutation'' circuits~\cite{Gopalakrishnan2018operatorgrowth,Gopalakrishnan2018automata,Iaconis2021automata,Bertini2025permutation}), which do not generate coherence; 
    Clifford circuits, which do not generate ``magic'' (non-stabilizerness)~\cite{Aaronson2004simulationstabilizer,Veitch2014resourcetheory}; 
    dual-unitary circuits~\cite{Bertini2025exactlysolvablemanybodydynamics}, which are unitary along both the time and space directions. 
    Despite these special features, $U_{n:0}$ has eigenstates that behave similarly to random states, as we show next. 

    Let us consider a maximal LFSR with update rule $\mathbf{z}' = \alpha \mathbf{z}$, with $\alpha$ a $n\times n$ binary matrix implementing \eqnref{eq:lfsr}. 
    The Floquet unitary in Eq.~\eqref{eq:uf_lfsr} acts on the compuational basis as $U_{n:0} \ket{\mathbf z} = \ket{\alpha \mathbf z}$. From this, it is clear that the all-zero state $\ket{\boldsymbol{0}}$ is an eigenstate with eigenvalue $+1$. The remaining $2^n-1$ eigenstates span the nontrivial orbit and are given by
    \begin{equation}
        \ket{\psi_q} = \frac{1}{\sqrt{2^n-1}} \sum_{j=0}^{2^n-2} \omega^{qj} \ket{\alpha^j \boldsymbol{1}},
        \label{eq:lfsr_eig_alpha}
    \end{equation}
    where $\boldsymbol{1} = (00\dots 01)$, $\omega = e^{2\pi i /(2^n-1)}$, and $q \in \{0,1, ..., 2^n-2\}$. 
    The eigenstate $\ket{\psi_q}$ has eigenvalue $\omega^{-q}$. 
    For $q=0$, \eqnref{eq:lfsr_eig_alpha} gives a nearly disentangled state, $\ket{\psi_0} \propto \ket{+}^{\otimes n} - 2^{-n/2} \ket{0}^{\otimes n} $, with eigenvalue $+1$. So the $+1$ eigenspace of $U_{n:0}$ is spanned by two product states, $\ket{0}^{\otimes n}$ and $\ket{+}^{\otimes n}$. All other eigenspaces, $\omega^q$ with $q\neq 0$, are non-degenerate. 
    Hereafter we refer to all $q\neq 0$ eigenstates in Eq.~\eqref{eq:lfsr_eig_alpha} as the {\it nontrivial eigenstates} of the maximal LFSR. 

    It is useful at this point to introduce the discrete logarithm on the finite field $\mathbb{F}_{2^n}$, defined as follows: for all bitstrings $\mathbf{z} \neq \boldsymbol{0}$, $\log_\alpha(\mathbf{z})$ is the unique element $j \in \mathbb{Z}_{2^n-1}$ such that $\alpha^j \boldsymbol{1} = \mathbf{z}$. With this definition, the LFSR eigenstates \eqnref{eq:lfsr_eig_alpha} may be written as 
    \begin{equation}
    \ket{\psi_q} = \frac{1}{\sqrt{2^n-1}} \sum_{\mathbf z \neq \boldsymbol{0}} \omega^{q\log_\alpha(\mathbf z)} \ket{\mathbf z}.
        \label{eq:lfsr_eig_log}
    \end{equation}
    These states were introduced in Refs.~\cite{VanDam2002gausssums,VanDam2003discretelogs} under the name of ``chi states'' in the context of quantum algorithms for algebraic problems, and also recently appeared in Ref.~\cite{Kim2025catalyticzrotationsconstanttdepth} in the context of magic gate catalysis for quantum computing.
    
    The discrete logarithm $\log_\alpha(\mathbf{z})$ is known to behave ``pseudorandomly'', without obvious patterns or correlations across inputs.\footnote{Computing $\log_\alpha(\mathbf{z})$ is also thought to be a hard problem for classical computers, where quantum computers may have exponential advantage~\cite{Shor1994factoring,Shor1997factoring,VanDam2003discretelogs}.}
    As a consequence, the phases $\{\omega^{q\log_\alpha(\mathbf z)}:\mathbf z \in \{0,1\}^n\setminus \{\boldsymbol{0}\}\}$ vary erratically between different basis states. This suggests a similarity between the nontrivial ($q\neq 0$) eigenstates of Eq.~\eqref{eq:lfsr_eig_alpha} and pseudorandom phase states~\cite{Ji2018pseudrandomstates,Brakerski2019pseudorandom}, which in turn behave similarly to Haar-random states. We make this intuition precise in the following.

    \begin{figure*}
        \centering
        \includegraphics[width=0.99\textwidth]{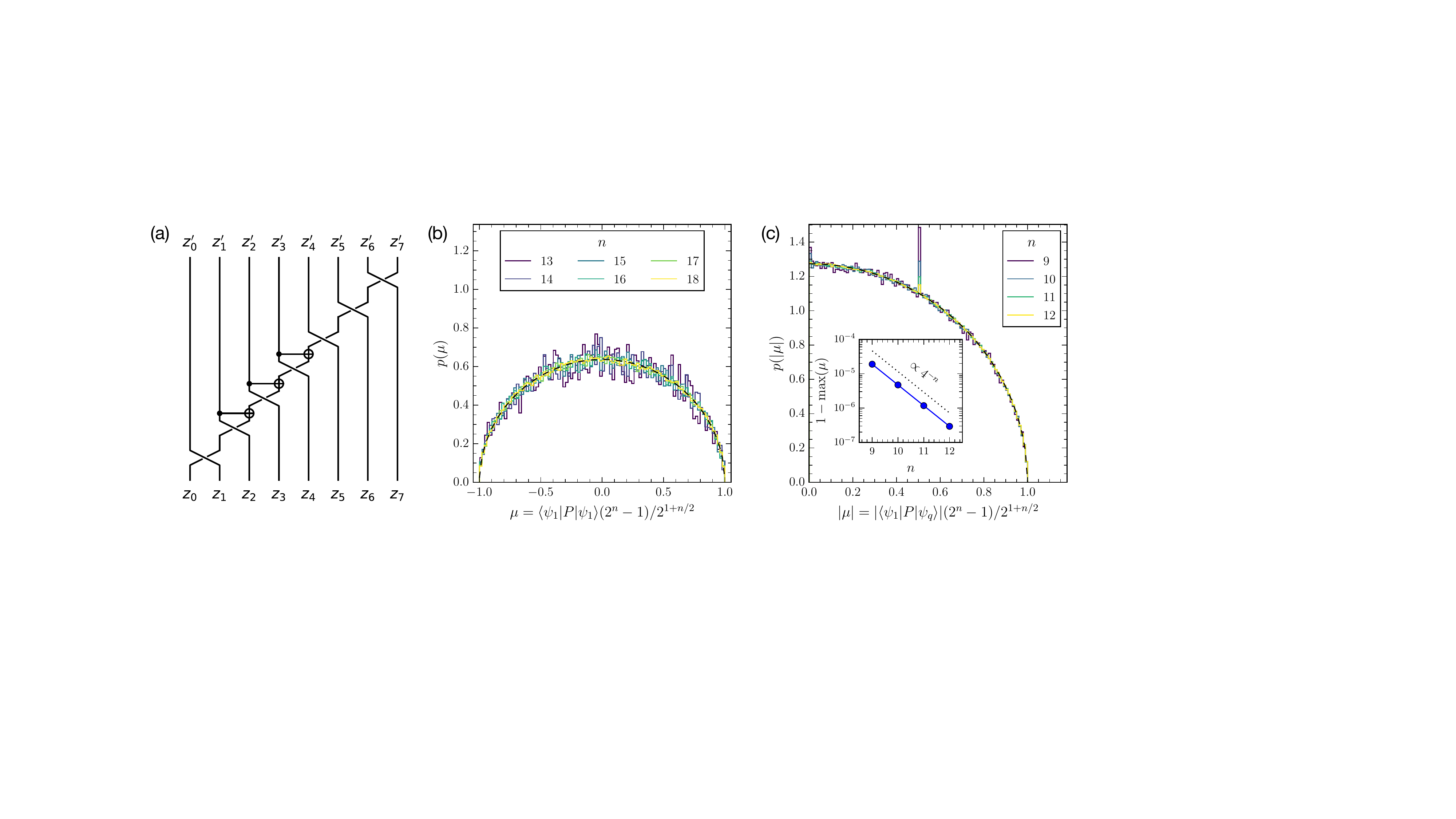}
        \caption{(a) Circuit diagram for a LFSR with $n = 8$ qubits, implementing the update rule Eq.~\eqref{eq:lfsr} with $a_2 = a_3 = a_4 = 1$, $a_i = 0$ otherwise. 
        (b-c) Numerical tests of Lemma~\ref{lemma:pauli}. 
        (b) Distribution of Pauli expectation values on the nontrivial eigenstate $\ket{\psi_1}$ of an $n$-bit maximal LFSR, normalized by the value of the upper bound in Lemma~\ref{lemma:pauli}, for various system sizes. All $4^n-1$ non-idenity Pauli operators are included. The dashed line represents a semicircle distribution, $p(\mu) = (2/\pi) \sqrt{1-\mu^2}$ for $|\mu|\leq 1$. All computed values respect the bound ($|\mu|\leq 1$). 
        (c) Distribution of Pauli matrix elements between nontrivial eigenstates $\ket{\psi_1}$ and $\ket{\psi_q}$ of an $n$-bit maximal LFSR, normalized as in (b). All $4^n-1$ nontrivial Pauli operators and all $2^n-2$ values of $q\neq 0$ are included. The inset shows that the matrix elements come exponentially close to saturating the bound, $\max |\mu|\approx 1-\mathrm{const.}\times 4^{-n}$.
        There is a visible spike in \(p(|\mu|)\) above the semicircle background at \(|\mu|=1/2\); this arises from Pauli operators that are either purely \(X\)-type or \(Z\)-type, which are a fraction $\sim 2^{1-n}$ of the total (Appendix~\ref{app:PauliDetails}).}
        \label{fig:lfsr}
    \end{figure*}
    
    \subsection{Eigenstate thermalization}
        \label{subsec:lfsr_eth}
    
    We begin by showing that all matrix elements of non-identity Pauli operators between nontrivial LFSR eigenstates are exponentially small. 
    We then use this fact to establish a precise statement of eigenstate thermalization, from which near-maximal entanglement entropy follows. 

    \begin{lemma}[Bound on Pauli matrix elements] \label{lemma:pauli}
        For any maximal $n$-bit LFSR and any Pauli operator $P \neq \unit$, the eigenstates in \eqnref{eq:lfsr_eig_log} obey 
        \begin{equation}
            |\bra{\psi_q} P \ket{\psi_{q'}}| \leq \frac{2^{1+n/2}}{2^n-1}
        \end{equation}
        for all $q,q'\neq 0$. 
    \end{lemma}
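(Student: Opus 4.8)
The plan is to compute the matrix element $\bra{\psi_q} P \ket{\psi_{q'}}$ directly using the explicit form of the LFSR eigenstates in Eq.~\eqref{eq:lfsr_eig_log} and the structure of Pauli operators. Write $P = i^\mu X^{\mathbf a} Z^{\mathbf b}$ for bit strings $\mathbf a, \mathbf b \in \{0,1\}^n$ and a phase $i^\mu$ (with $\mathbf a, \mathbf b$ not both zero). Acting on a computational basis state, $X^{\mathbf a} Z^{\mathbf b}\ket{\mathbf z} = (-1)^{\mathbf b \cdot \mathbf z} \ket{\mathbf z \oplus \mathbf a}$. Substituting into the double sum gives
\begin{equation}
\bra{\psi_q} P \ket{\psi_{q'}} = \frac{i^\mu}{2^n-1} \sum_{\substack{\mathbf z \neq \boldsymbol 0 \\ \mathbf z \oplus \mathbf a \neq \boldsymbol 0}} (-1)^{\mathbf b \cdot \mathbf z}\, \omega^{-q \log_\alpha(\mathbf z \oplus \mathbf a) + q' \log_\alpha(\mathbf z)} .
\end{equation}
First I would handle the easy case $\mathbf a = \boldsymbol 0$ (so $\mathbf b \neq \boldsymbol 0$): the exponent of $\omega$ collapses to $(q'-q)\log_\alpha(\mathbf z)$, and since $\log_\alpha$ is a bijection from nonzero strings to $\mathbb Z_{2^n-1}$, the sum becomes $\sum_{\mathbf z \neq \boldsymbol 0} (-1)^{\mathbf b \cdot \mathbf z}\, \omega^{(q'-q)\log_\alpha(\mathbf z)}$, which is a Gauss-sum-like object; its modulus is bounded by $\sqrt{2^n}$ (this is essentially a multiplicative-character/additive-character Gauss sum over $\mathbb F_{2^n}$, and the classical Gauss sum bound gives $|{\cdot}| \le \sqrt{2^n}$ for $q \neq q'$, while for $q = q'$ it is just $\sum (-1)^{\mathbf b \cdot \mathbf z} = -1$). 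Either way this is comfortably within the claimed bound.

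The substantive case is $\mathbf a \neq \boldsymbol 0$. Here I would split the sum over $\mathbf z$ according to the value of $\log_\alpha(\mathbf z \oplus \mathbf a) - \log_\alpha(\mathbf z) =: d \in \mathbb Z_{2^n-1}$; this is the "discrete log of a ratio" and is the quantity whose pseudorandomness drives the whole argument. Equivalently, parametrize by $\mathbf z = \alpha^j \boldsymbol 1$ and note $\mathbf z \oplus \mathbf a = \alpha^j(\boldsymbol 1 \oplus \alpha^{-j}\mathbf a)$, so $d$ depends on $j$ only through $\alpha^{-j}\mathbf a$. After relabeling, the sum takes the shape of a character sum over the finite field: using the field structure of $\mathbb F_{2^n} \cong \mathbb F_2^n$ where $\alpha$ is a generator of the multiplicative group, write $x = \mathbf z$ (nonzero, not equal to $\mathbf a$) and the summand as an additive character in $\mathbf b \cdot x$ times a multiplicative character $\chi(x)^{q'} \chi(x + \mathbf a)^{-q}$ where $\chi$ is the character $x \mapsto \omega^{\log_\alpha x}$. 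This is exactly a mixed Gauss/Jacobi/Kloosterman-type sum, and the goal is to invoke a Weil-type bound: $|\sum_{x} \psi(x)\, \chi_1(x)\, \chi_2(x+a)| \le C \sqrt{2^n}$ with a small explicit constant.

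The main obstacle I expect is getting the constant in the Weil bound sharp enough to land at exactly $2^{1+n/2}/(2^n-1)$, i.e. ensuring $C = 2$ suffices (accounting also for the at most two excluded terms $\mathbf z = \boldsymbol 0$ and $\mathbf z = \mathbf a$, each contributing at most $1$ in modulus, which is where the "$2^{1+n/2}$" rather than "$2^{n/2}$" and the specific normalization come from). Rather than quoting a black-box algebraic-geometry estimate, the cleaner route — and the one I would pursue — is to reduce to a genuine Gauss sum by a change of variables: since $\mathbf a \neq \boldsymbol 0$ is invertible in the field, substitute $x = \mathbf a\, y$, so $x + \mathbf a = \mathbf a(y+1)$ and the multiplicative characters factor as $\chi(\mathbf a)^{q'-q}\chi(y)^{q'}\chi(y+1)^{-q}$, leaving $\sum_{y \neq 0,1} \psi(\mathbf b \mathbf a\, y)\,\chi(y)^{q'}\chi(y+1)^{-q}$. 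When $q = q'$ this is a Jacobi-sum times additive-character sum of modulus $\le \sqrt{2^n}$ by the standard Jacobi sum evaluation $|J(\chi_1,\chi_2)| = \sqrt{2^n}$ (valid since $q \neq 0$ makes $\chi^q$ nontrivial); when $q \neq q'$ one still gets a sum over a nontrivial curve and the Weil bound gives $\le \sqrt{2^n}$ (or, conservatively, one can bound by completing the sum over all $y$ and controlling the two boundary terms). Adding back the correction for the excluded points and dividing by $2^n-1$ yields the stated inequality. I would present the $q=q'$ (diagonal/Jacobi) and $q \neq q'$ cases separately, since the diagonal case is where the Pauli expectation values relevant to diagonal ETH live and admits the most transparent elementary treatment.
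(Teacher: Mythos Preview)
Your setup is correct and coincides with the paper's: write $P$ in terms of its $X$- and $Z$-parts, compute the matrix element in the computational basis, and recognize the result as a mixed character sum over $\mathbb{F}_{2^n}$ involving the additive character $\varphi(\mathbf z) = (-1)^{\mathbf b\cdot \mathbf z}$ and the multiplicative character $\chi(\mathbf z) = \omega^{\log_\alpha(\mathbf z)}$. The special case $\mathbf a = \boldsymbol 0$ is indeed a Gauss sum and is handled exactly as you say; the paper does the same in its appendix on special cases.

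The gap is in the generic case $\mathbf a, \mathbf b \neq \boldsymbol 0$. Your proposed substitution $x = \mathbf a\,y$ is fine (it replaces $\varphi$ by another nontrivial additive character and normalizes the shift to $1$), but the resulting sum
\[
\sum_{y \neq 0,1} \psi(y)\,\chi^{q'}(y)\,\chi^{-q}(y+1)
\]
is still a genuinely mixed sum: it does \emph{not} factor as ``a Jacobi sum times an additive-character sum,'' and when $q = q'$ the substitution $w = y/(y+1)$ turns the multiplicative part into $\chi^q(w)$ but sends the additive character to $\psi(w/(1+w))$, which is not an additive character in $w$. So no reduction to a single Gauss or Jacobi sum is available. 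A Weil-type bound on the mixed sum is unavoidable, and this is precisely what the paper invokes: a black-box estimate on sums of the form $\sum \varphi(f(\mathbf z))\chi(g(\mathbf z))$ (Castro, 2000), which for $f(\mathbf z)=\mathbf z$ and $g(\mathbf z)=\mathbf z^{q'}(\mathbf z+\mathbf u)^{-q}$ yields the constant $C = s+l+d-r-2 = 2$ after counting zeros and poles of $g$.

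Relatedly, your accounting of the factor $2$ is incorrect. The excluded points $\mathbf z \in \{\boldsymbol 0,\mathbf u\}$ contribute nothing (multiplicative characters extended by $\chi(0)=0$), so ``completing the sum'' is vacuous and there are no boundary terms to add back. The factor $2$ in $2^{1+n/2}$ is the Weil constant $C=2$ itself, arising from the two ramification points of $g$; the paper's numerics (near-saturation of the bound, $\max|\mu|\to 1$) show that $C=1$ would actually be false. Your asserted $|\,\cdot\,| \le \sqrt{2^n}$ is therefore too strong, and the proof as outlined would not close without quoting the mixed-sum bound with the correct constant.
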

    We remark that this holds for {\it all} non-identity Pauli operators, not just local ones. 
    The proof of Lemma~\ref{lemma:pauli}, reported in Appendix~\ref{app:proof}, uses tools from algebraic number theory (bounds on ``mixed-character sums''~\cite{Weil_1948_exponentialsums,castro_2000_mixedsums}) that formalize the ``pseudorandom'' nature of the discrete logarithm.
    
    We test Lemma~\ref{lemma:pauli} numerically by computing Pauli matrix elements between nontrivial eigenstates, shown in \autoref{fig:lfsr}(b-c). We find excellent agreement; the normalized matrix elements
    \begin{equation}
        \mu := \frac{2^n-1}{2^{1+n/2}} \bra{\psi_q} P \ket{\psi_{q'}}
    \end{equation}
    are empirically found to follow an approximate semicircle distribution, $p(|\mu|) \propto \sqrt{1-|\mu|^2}$, very nearly saturating the bound $|\mu|\leq 1$. 

    Lemma~\ref{lemma:pauli} also implies a precise statement of eigenstate thermalization for maximal LFSRs.
    \begin{thm}[LFSR eigenstate thermalization] \label{thm:eth}
        For any maximal $n$-bit LFSR, all nontrivial eigenstates obey the infinite-temperature ETH:
        \begin{equation}
            \bra{\psi_q} O \ket{\psi_{q'}} = 2^{-n} {\rm Tr}(O) \delta_{q,q'} + 2^{-n/2} R^O_{qq'}, \label{eq:eth_lfsr}
        \end{equation}
        for all $q,q'\neq 0$ and all local operators $O$, where $|R^O_{qq'}| \leq 2^{2+\ell} \| O \|_{\rm op}$ and $\ell$ is the size of the support of $O$. 
        It follows that nontrivial eigenstates have entanglement entropy
        \begin{equation}
            S_A(\ket{\psi_q}) \geq \ell \ln(2) - 2^{4+2\ell -n}, \label{eq:entanglement_bound}
        \end{equation}
        which is near-maximal for $\ell = o(n)$.
    \end{thm}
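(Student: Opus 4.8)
The plan is to prove Theorem~\ref{thm:eth} in two stages: first establish the ETH matrix-element form \eqnref{eq:eth_lfsr} from the Pauli bound of Lemma~\ref{lemma:pauli}, then deduce the entanglement bound \eqnref{eq:entanglement_bound} by the same argument as in Lemma~\ref{lem:ETHImpliesCloseInTraceNorm} and the remark following it. For the first stage, I would expand an arbitrary local operator $O$ supported on $\ell$ qubits in the Pauli basis, $O = \sum_P c_P P$, where the sum runs over the $4^\ell$ Paulis acting within $\mathrm{supp}(O)$ (tensored with identity elsewhere). The coefficients satisfy $|c_P| = 2^{-\ell}|\Tr(OP)/2^{\ell}| \cdot 2^{\ell}$; more usefully, $\sum_P |c_P| \leq 2^{\ell/2}\sqrt{\sum_P |c_P|^2} = 2^{\ell/2}\sqrt{\Tr(O^\dagger O)/2^\ell} \leq 2^{\ell}\norm{O}_{\mathrm{op}}$ by Cauchy–Schwarz and $\Tr(O^\dagger O)\leq 2^n\norm{O}_{\mathrm{op}}^2$ restricted to the $2^\ell$-dimensional support. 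Separating the identity term, $\bra{\psi_q}O\ket{\psi_{q'}} = c_\unit \delta_{q,q'} + \sum_{P\neq\unit} c_P \bra{\psi_q}P\ket{\psi_{q'}}$; the identity coefficient is $c_\unit = \Tr(O)/2^n$ (normalizing $P=\unit$ as $\unit$, so $c_\unit = \Tr(O\unit)/2^n$), giving the diagonal thermal value. The remaining sum is bounded in absolute value by $\big(\sum_{P\neq\unit}|c_P|\big)\cdot \max_{P\neq\unit}|\bra{\psi_q}P\ket{\psi_{q'}}| \leq 2^\ell \norm{O}_{\mathrm{op}} \cdot \tfrac{2^{1+n/2}}{2^n-1}$ by Lemma~\ref{lemma:pauli}. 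Writing this as $2^{-n/2}R^O_{qq'}$, we get $|R^O_{qq'}| \leq 2^{1+\ell}\norm{O}_{\mathrm{op}}\cdot \tfrac{2^n}{2^n-1} \leq 2^{2+\ell}\norm{O}_{\mathrm{op}}$ for $n\geq 1$, which is the claimed bound.

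For the second stage, I would apply Lemma~\ref{lem:ETHImpliesCloseInTraceNorm} with $C_\ell = 2^{2+\ell}$ (noting that the nontrivial LFSR eigenstates satisfy the hypothesis of diagonal ETH with exactly this constant, uniformly over all operators supported in a fixed region $A$ of size $\ell$). This gives $\norm{\Tr_{\bar A}\ketbra{\psi_q}{\psi_q} - 2^{-\ell}\unit_A}_{\Tr} \leq 2^{2+\ell}2^{-n/2} = 2^{2+\ell-n/2}$. To convert this into the entanglement bound, I would use the standard estimate relating trace-norm closeness to the maximally mixed state and entropy: if $\norm{\rho_A - 2^{-\ell}\unit_A}_{\Tr} \leq \varepsilon$ with $\varepsilon$ small, then $S_A(\rho_A) \geq \ell\ln 2 - \varepsilon\ln(2^\ell/\varepsilon) \geq \ell\ln 2 - \ell\varepsilon\ln 2 - \varepsilon$, or more crudely (using concavity / Fannes-type continuity) $S_A \geq \ell\ln 2 - O(\varepsilon \cdot \ell)$. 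Plugging $\varepsilon = 2^{2+\ell-n/2}$ would produce a bound scaling like $\ell\ln 2 - \ell\, 2^{2+\ell-n/2}$; to land on the cleaner stated form $\ell\ln 2 - 2^{4+2\ell-n}$ I would instead go through the second Rényi entropy, as in the proof of Theorem~\ref{thm:volumelaw_1hand}: bound $\Tr(\rho_A^2) = \Tr\big[(\rho_A - 2^{-\ell}\unit_A)^2\big] + 2^{-\ell} \leq 2^{-\ell} + \norm{\rho_A - 2^{-\ell}\unit_A}_{\mathrm{op}}\norm{\rho_A-2^{-\ell}\unit_A}_{\Tr} \leq 2^{-\ell} + (2^{-\ell}+\varepsilon)\varepsilon \leq 2^{-\ell}(1 + 2^{2+\ell-n/2} + 2^{4+2\ell-n}2^{\ell})$. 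Then $S_A \geq S_A^{(2)} = -\ln\Tr(\rho_A^2) \geq \ell\ln 2 - \ln(1 + 2^{2+\ell-n/2} + \cdots) \geq \ell\ln 2 - 2^{2+\ell-n/2} - \cdots$, and absorbing the subleading terms and slack into the constant gives the stated $\ell\ln 2 - 2^{4+2\ell-n}$ bound (the precise bookkeeping of whether the exponent is $-n/2$ or $-n$ depends on which norm pairing one uses — the $\norm{\cdot}_{\mathrm{op}}\norm{\cdot}_{\Tr}$ pairing is what produces the $2^{\ell-n}$-type scaling seen in Theorem~\ref{thm:volumelaw_1hand}).

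The main obstacle is not any single deep step but rather the careful tracking of constants: getting from $2^{1+n/2}/(2^n-1)$ in Lemma~\ref{lemma:pauli} through the Pauli-expansion norm inequality and then through the Rényi-entropy estimate to the specific constant $2^{4+2\ell}$ requires being slightly generous at each stage (e.g. bounding $2^n/(2^n-1)\leq 2$, bounding $\sum_P|c_P|\leq 2^\ell\norm{O}_{\mathrm{op}}$ rather than the tighter $\ell$-dependent Cauchy–Schwarz constant, and using $\ln(1+x)\leq x$), and one must verify that the resulting bound is genuinely an upper bound and not off by a factor. The only genuinely nontrivial input — the pseudorandomness of the discrete logarithm encoded in the mixed-character-sum bound — is already black-boxed in Lemma~\ref{lemma:pauli}, so the present theorem is essentially a clean corollary; I would keep its proof to half a page. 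The final claim that the bound is near-maximal for $\ell = o(n)$ is then immediate since $2^{4+2\ell-n}\to 0$ when $\ell = o(n)$ (indeed as long as $\ell \leq (1/2 - \delta)n$, the correction is exponentially small, but the weaker $\ell = o(n)$ suffices for the stated conclusion about \emph{all} fixed-size or slowly-growing subsystems).
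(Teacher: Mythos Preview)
Your Stage~1 is essentially the paper's argument: Pauli-decompose $O$, separate the identity piece, and bound the remainder termwise via Lemma~\ref{lemma:pauli} together with Cauchy--Schwarz on the coefficient sum. (Your intermediate line has a slip---the Cauchy--Schwarz factor over $4^\ell$ Paulis is $2^\ell$, not $2^{\ell/2}$---but you recover the correct inequality $\sum_P |c_P| \leq 2^{\ell}\|O\|_{\mathrm{op}}$ and hence the stated $|R^O_{qq'}|\leq 2^{2+\ell}\|O\|_{\mathrm{op}}$.)

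Stage~2, however, does not land on the stated bound, and the discrepancy is not just bookkeeping. Your route is: take $C_\ell = 2^{2+\ell}$ in Lemma~\ref{lem:ETHImpliesCloseInTraceNorm}, get $\varepsilon := \|\rho_A - 2^{-\ell}\unit_A\|_{\Tr} \leq 2^{2+\ell-n/2}$, then bound $\Tr(\rho_A^2) = 2^{-\ell} + \|\rho_A - 2^{-\ell}\unit_A\|_F^2 \leq 2^{-\ell} + \varepsilon^2 = 2^{-\ell}(1 + 2^{4+3\ell-n})$. This yields $S_A \geq \ell\ln 2 - 2^{4+3\ell-n}$, with an extra factor of $2^\ell$ in the correction that cannot be absorbed into $2^{4+2\ell-n}$. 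The loss is structural: the constant $C_\ell$ already contains a $2^\ell$ from Cauchy--Schwarz, and passing through $\|\Delta\|_F^2 \leq \|\Delta\|_{\Tr}^2$ squares it.

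The paper instead goes back to Lemma~\ref{lemma:pauli} \emph{directly} for the entropy bound, bypassing the ETH constant $C_\ell$. Using the identity $\Tr(\rho_A^2) = 2^{-\ell}\sum_{P\in A} \bra{\psi_q}P\ket{\psi_q}^2$, it bounds each of the $4^\ell-1$ non-identity terms by $\bigl(2^{1+n/2}/(2^n-1)\bigr)^2 \leq 2^{4-n}$ individually. This gives $\Tr(\rho_A^2) \leq 2^{-\ell}\bigl[1 + (4^\ell-1)2^{4-n}\bigr]$ and hence $S_A \geq \ell\ln 2 - 2^{4+2\ell-n}$ after $\ln(1+x)\leq x$. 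The point is that no Cauchy--Schwarz is needed when summing \emph{squares} of individually bounded quantities, so the $2^\ell$ from Stage~1 never enters.
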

    \begin{proof}
        First let us consider the case of a Pauli operator, $O = P$. Lemma~\ref{lemma:pauli} immediately gives Eq.~\eqref{eq:eth_lfsr} with $|R^P_{qq'}| \leq 4$. 
        For a general operator $O$, let us decompose in the Pauli basis: $O = \sum_{P\in A} o_P P$, with $o_P = {\rm Tr}(PO) / 2^n$, with $A$ the support of $O$ and $P\in A$ denoting Pauli operators supported in $A$. 
        Plugging into the left hand side of Eq.~\eqref{eq:eth_lfsr} yields the correct $\delta_{q,q'}$ term, as well as $R^O_{qq'} = 2^{n/2} \sum_{P\in A}' o_P \bra{\psi_q} P \ket{\psi_{q'}} $ with the primed sum denoting $P \neq \unit$. 
        The triangle inequality and Lemma~\ref{lemma:pauli} give
        \begin{align}
            \left|R^O_{qq'}\right| 
            & \leq 2^{n/2} \sum_{P\in A} {}^{'} |o_P| |\bra{\psi_q} P \ket{\psi_{q'}}|
            \leq 4 \sum_{P\in A} |o_P|.
        \end{align}
        Then Cauchy-Schwarz gives 
        \begin{equation}
            \sum_{P\in A} |o_P| \leq 2^{\ell } \left( \sum_{P\in A} |o_P|^2 \right)^{1/2} = 2^{\ell -n/2} \| O \|_F
        \end{equation}
        where $\|O\|_F = [{\rm Tr}(O^\dagger O)]^{1/2}$ 
        is the Frobenius norm. The norm inequality $\| O \|_F \leq 2^{n/2} \| O \|_{\rm op}$ then gives
        \begin{equation}
            \left|R^O_{qq'}\right| \leq 2^{2+\ell} \| O \|_{\rm op},
        \end{equation}
        which concludes the proof of Eq.~\eqref{eq:eth_lfsr}.

        Next we prove Eq.~\eqref{eq:entanglement_bound}. Volume-law entanglement is a standard consequence of ETH, since Eq.~\eqref{eq:eth_lfsr} implies that local reduced density matrices must be very close to the fully mixed state (Lemma~\ref{lem:ETHImpliesCloseInTraceNorm}). In this case we can use the slightly more powerful result of Lemma~\ref{lemma:pauli}.
        Since the von Neumann entropy is lower-bounded by the second R\'enyi entropy, we have
        \begin{align}
            S_A(\ket{\psi_q}) 
            & \geq -\ln\left[ \frac{1}{2^{\ell}}\sum_{P\in A} \bra{\psi_q} P \ket{\psi_q}^2 \right],
        \end{align}
        where $P\in A$ denotes Pauli operators supported in region $A$, and we have used the fact that ${\rm Tr}(\rho_A^2) = 2^{-\ell} \sum_{P\in A} [{\rm Tr}(\rho_A P)]^2$ for any density matrix $\rho_A$ on $A$.
        We then invoke Lemma~\ref{lemma:pauli} with $q = q'$ to bound $|\bra{\psi_q} P \ket{\psi_{q}}|^2 \leq 2^{4-n}$ for \(P \neq \unit\). 
        Splitting the Pauli sum into $P = \unit $ and $P\neq \unit $ ($4^{\ell}-1$ terms) gives
        \begin{align}
            S_A(\ket{\psi_q}) 
            & \geq \ell \ln(2) - \ln\left[ 1 + \frac{4^\ell-1}{2^{n-4}} \right] \nonumber \\
            & \geq \ell \ln(2) - 2^{4-n} (4^\ell -1),
        \end{align}
        where we also used $\ln(1+x) \leq x$. \eqnref{eq:entanglement_bound} follows immediately.
    \end{proof}
    We note that Theorem~\ref{thm:eth} does not require the subsystem $A$ to be geometrically local. The result depends only on the volume of $A$, i.e., the number of qubits on which $O$ acts. 

    \subsection{Volume-law entangled ground states}
        \label{subsec:volumelaw_lfsr}

    In \autoref{subsec:VolumeLawEEE} we proved that periodic FK clock Hamiltonians have volume-law entangled ground states (Theorem~\ref{thm:volumelaw_1hand}), conditional on the validity of Floquet ETH (Definition~\ref{def:eth}), or at least volume-law entanglement of the states \(\propagator{\mathbf{t}}{0}\ket{\phi}\) (Theorem~\ref{thm:ManyHandEEE}).
    Here we show that, when the gate sequence $\{u_t\}_t$ implements a maximal LFSR, the same result can be proven {\it without any physical assumptions}:
    \begin{thm}[Volume-law entangled ground states] \label{thm:volumelaw_gs}
        For any maximal $n$-bit LFSR, any flux $\chi$ at least $\pi/(2^n-1)$ away from 0 (modulo $2\pi$), and any nontrivial number of hands \(0<M<n\), the ground state $\ket{\Psi}$ of the corresponding periodic FK clock Hamiltonian is volume-law entangled: 
        \begin{equation}
            S_A(\ket{\Psi}) \geq \ell \ln(2)
        \end{equation}
        for any interval $A$ of length \(\ell \leq n/2-6\).
    \end{thm}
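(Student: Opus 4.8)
The plan is to combine the explicit form of the many-hand clock eigenstates, the entanglement bound of Theorem~\ref{thm:ManyHandEEE}, and the LFSR Pauli bound of Lemma~\ref{lemma:pauli}. First I would identify which eigenstate is the ground state. With the flux twist $u_{n-1}=e^{i\chi}\unit$ the Floquet operator is $e^{i\chi}\propagator{n}{0}$, whose eigenphases are $\phi_q=\chi-2\pi q/(2^n-1)$, $q=0,\dots,2^n-2$, equally spaced by $s=2\pi/(2^n-1)$; only $\phi_0=\chi$ carries the two-dimensional product-state eigenspace $\mathrm{span}\{\ket{0}^{\otimes n},\ket{+}^{\otimes n}\}$, while each $\phi_q$ with $q\neq 0$ carries the single nontrivial eigenstate $\ket{\psi_q}$ of \eqnref{eq:lfsr_eig_log}. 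Minimizing $E_{\mathbf k,\phi}=M-\sum_{k\in\mathbf k}\cos(k-\phi/n)$ over momentum configurations gives a Fermi-sea band energy $E_0(\phi)$ whose minimum over the discrete set $\{\phi_q\}$ is attained at the eigenphase nearest the band's optimal flux (which is $\phi=0$ for odd $M$). Since the hypothesis keeps $\phi_0=\chi$ at distance at least $s/2$ from that optimum, the minimizing eigenphase is a nontrivial one, $\phi^\ast=\phi_{q^\ast}$ with $q^\ast\neq 0$, so the ground state is $\ket{\Psi}=\ket{\Psi_{\mathbf k,\phi^\ast}}$ built from $\ket{\psi_{q^\ast}}$ with $\mathbf k$ the associated Fermi sea. (For even $M$ the band's optimal flux is not $0$, so the flux should in addition be kept away from it to guarantee that the selected eigenphase is nontrivial rather than the product eigenspace; I return to this below.)

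Next I would apply Theorem~\ref{thm:ManyHandEEE} with $\ket{\phi}=\ket{\psi_{q^\ast}}$, valid because $u_{n-1}\propto\unit$, obtaining $S_A(\ket{\Psi})\geq \SecReyni{A}{\ket{\Psi}}\geq \SecReyni{A_c}{c^\dagger_{\mathbf k,\phi^\ast}\ket{\Omega}}+\min_{\mathbf t}\SecReyni{A_s}{\propagator{\mathbf t}{0}\ket{\psi_{q^\ast}}}$. For the spin term I would use that every $\propagator{\mathbf t}{0}$ is a product of partial LFSR staircases and hence a Clifford unitary, so for any Pauli $P\neq\unit$ supported in $A_s$ one has $\propagator{\mathbf t}{0}^\dagger P\,\propagator{\mathbf t}{0}=\pm P'$ for some Pauli $P'\neq\unit$; Lemma~\ref{lemma:pauli} then gives $|\bra{\psi_{q^\ast}}\propagator{\mathbf t}{0}^\dagger P\,\propagator{\mathbf t}{0}\ket{\psi_{q^\ast}}|^2\leq 2^{4-n}$, and running the purity estimate from the proof of Theorem~\ref{thm:eth} on the state $\propagator{\mathbf t}{0}\ket{\psi_{q^\ast}}$ yields $\SecReyni{A_s}{\propagator{\mathbf t}{0}\ket{\psi_{q^\ast}}}\geq \ell\ln 2-2^{2\ell+4-n}$, uniformly over $\mathbf t$.

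For the clock term I would bound the entanglement of the free-fermion Slater determinant $c^\dagger_{\mathbf k,\phi^\ast}\ket{\Omega}$ across the interval $A_c$. Writing its reduced state through the $\ell\times\ell$ restriction of the correlation matrix $C$ (which satisfies $\mathrm{Tr}(C|_{A_c})=\ell M/n$ and $C^2=C$ on the full chain), one has $\mathrm{Tr}(\rho_{A_c}^2)\leq\exp(-2\sum_{i\in A_c,\,j\notin A_c}|C_{ij}|^2)$, hence $\SecReyni{A_c}{c^\dagger_{\mathbf k,\phi^\ast}\ket{\Omega}}\geq 2\sum_{i\in A_c,\,j\notin A_c}|C_{ij}|^2>0$ for any $0<M<n$, $0<\ell<n$. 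For a Fermi-sea configuration this sum is at least of order $\ell/n$ — for $M=1$ it equals $\ln[n^2/(\ell^2+(n-\ell)^2)]\geq 2\ell\ln 2/n$, reproducing Theorem~\ref{thm:volumelaw_1hand}. I would then check that this clock contribution exceeds the spin error $2^{2\ell+4-n}$ for every $1\leq\ell\leq n/2-6$: the slack ``$-6$'' caps the error at $2^{-8}$ in the worst case $\ell=n/2-6$, where the clock entanglement of an interval of length $\leq n/2$ is bounded below by a constant, and for smaller $\ell$ the error is exponentially smaller still. Adding the two contributions then gives $S_A(\ket{\Psi})\geq (\ell\ln 2-2^{2\ell+4-n})+2^{2\ell+4-n}=\ell\ln 2$.

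The main obstacle is the clock estimate: one needs a lower bound on the contiguous-interval entanglement of a Fermi-sea Slater determinant that is simultaneously uniform in the filling $0<M<n$ and dominates $2^{2\ell+4-n}$ throughout the window $\ell\leq n/2-6$. A secondary subtlety, flagged above, is making the ground-state identification fully rigorous, especially for even $M$: one must know that the flux twist selects an eigenphase carrying a nontrivial LFSR eigenstate and not the product eigenspace, which constrains $\chi$ to stay away from the (parity-dependent) optimal flux of the tight-binding band, not merely from $0$.
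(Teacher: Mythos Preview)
Your strategy is the paper's: apply Theorem~\ref{thm:ManyHandEEE}, bound the spin contribution via the Clifford property and Lemma~\ref{lemma:pauli} to get $\min_{\mathbf t}\SecReyni{A_s}{\propagator{\mathbf t}{0}\ket{\psi_q}}\geq \ell\ln 2 - 2^{4+2\ell-n}$ exactly as the paper does, and then use a free-fermion entropy bound on the clock to absorb the error term.

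The step you leave incomplete is resolved in the paper by Lemma~\ref{lem:FermionEntropyBound}: for any $M$-fermion plane-wave state and any interval of length $\ell\leq n/2$, one has $\SecReyni{A_c}{c^\dagger_{\mathbf k,\phi}\ket{\Omega}}\geq \ell^2/n^2$, uniformly in $0<M<n$. The proof lower-bounds the second R\'enyi entropy by twice the number variance $\Delta N_A^2$ (a known inequality), then evaluates $\Delta N_A^2$ for plane waves and retains only the two particle--hole terms with $|q-q'|=2\pi/n$ straddling the Fermi edge, which already give $\Delta N_A^2\geq \ell^2/(2n^2)$. Your correlation-matrix inequality is in fact this same number-variance bound in disguise (for a projector $C$ one has $\sum_{i\in A,\,j\notin A}|C_{ij}|^2=\Delta N_A^2$), so you were one computation away from the result. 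The residual algebra---showing $\ell^2/n^2\geq 2^{4+2\ell-n}$ for every $1\leq\ell\leq n/2-6$---is handled in the paper by rewriting it as $\ell\,2^{-\ell}\geq 8\,(n/2)\,2^{-n/2}$ and using that $x\,2^{-x}$ is nonincreasing on integers $x\geq 1$; this is where the constant $6$ originates, and the check must cover all $\ell$ in the window, not just the endpoint. As for the ground-state identification, the paper simply asserts in one sentence that the trivial product-state Krylov sector contains the ground state iff $\chi\in[-\pi/(2^n-1),\pi/(2^n-1)]$ and does not discuss the $M$-parity dependence of the optimal tight-binding flux; your concern is legitimate, but the paper does not address it either and proceeds as if the assertion were immediate.
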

    \begin{proof}
        The Krylov sectors of the trivial ``scar'' LFSR eigenstates \(\ket{0}^{\otimes n}\) and \(\ket{+}^{\otimes n}\) contain the ground state if and only if the flux \(\chi\) is in \([-\pi/(2^n-1),\pi/(2^n-1)]\) modulo \(2\pi\). Otherwise, the ground state is found in a Krylov sector associated to a nontrivial (ETH-obeying) LFSR eigenstate $\ket{\psi_q}$, $q\neq 0$.

        Then we can apply Theorem~\ref{thm:ManyHandEEE}. Indeed, the states \(\propagator{\mathbf{t}}{0} \ket{\psi_q}\) appearing in that theorem all differ from \(\ket{\psi_q}\) only by applying a Clifford circuit. Thus, Lemma~\ref{lemma:pauli} still applies to \(\propagator{\mathbf{t}}{0} \ket{\psi_q}\): for all Pauli $P\neq \unit$, $\bra{\psi_q}U_{\mathbf t:0}^\dagger P U_{\mathbf t:0} \ket{\psi_q} = \bra{\psi_q}P'\ket{\psi_q}$ where $P'\neq \unit$ is another Pauli. All these expectation values are exponentially small due to Lemma~\ref{lemma:pauli}. Thus the lower bound on entanglement, Eq.~\eqref{eq:entanglement_bound}, still applies to $U_{\mathbf t:0} \ket{\psi_q}$. 
        Inspecting the proof of Theorem~\ref{thm:eth}, this bound actually applies to the second R\'eyni entropy, and we have
        \begin{equation}
            \min_{\mathbf{t}}\SecRenyi{A}{\propagator{\mathbf{t}}{0} \ket{\psi_q}} \geq \ell \ln(2) - 2^{4+2 \ell - n}.
        \end{equation}
        The clock hand contribution is bounded as (Lemma~\ref{lem:FermionEntropyBound})
        \begin{equation}
            \SecRenyi{A}{c_{\mathbf{k},\phi}^\dagger \ket{\Omega}} \geq \frac{\ell^2}{n^2},
        \end{equation}
        so we have for the ground state (Theorem~\ref{thm:ManyHandEEE})
        \begin{equation}\label{eqn:SAWithError}
            S_A(\ket{\Psi}) \geq \SecRenyi{A}{\ket{\Psi}} \geq \ell \ln(2) + \frac{\ell^2}{n^2} - 2^{4+2 \ell - n}.
        \end{equation}
        
        First note that \(S_A(\ket{\Psi}) = 0 \geq 0 \cdot \ln(2)\) trivially holds for \(\ell = 0\).
        Some elementary manipulation shows that the latter two terms in \eqnref{eqn:SAWithError} always give a nonnegative contribution when \(1 \leq \ell \leq n/2-6\). The contribution is positive precisely when
        \begin{equation}\label{eqn:PosCondition}
            \frac{\ell^2}{n^2} \geq 2^{4+2 \ell - n}
            \iff 
            \frac{\ell}{2^\ell} \geq 8 \frac{n/2}{2^{n/2}}.
        \end{equation}
        The function \(\ell 2^{-\ell}\) is a nonincreasing function for integers \(\ell \geq 1\), so the inequality holds for all \(1 \leq \ell \leq n/2 - d\) if
        \begin{equation}
            \frac{n/2 - d}{2^{n/2-d}} \geq 8\frac{n/2}{2^{n/2}}
            \iff
            1 - \frac{d}{n/2} \geq 8 \cdot 2^{-d}.
        \end{equation}
        We further bound \(1-2d/n\) from below as \(1-d/(1+d)\) (again using \(n/2 \geq 1+d\)), arriving at a sufficient condition which only involves \(d\),
        \begin{equation}
            1- \frac{d}{1+d} \geq 8 \cdot 2^{-d}.
        \end{equation}
        This holds for \(d \geq 6\).
    \end{proof}

    Theorem~\ref{thm:volumelaw_gs} is another main result of this work: a new rigorous construction of local Hamiltonians in one dimension with volume-law entangled ground states, qualitatively different from prior examples based on rainbow states~\cite{Gottesman2010entanglementvsgap,Irani2010groundstateentanglement,Vitagliano2010volumelaw,Ramirez2014volumelaw} or Motzkin and Fredkin spin chains~\cite{Bravyi2012criticality,Movassagh2016supercritical,Zhang2017extensive,Salberger2018fredkin}. 
    Notably, the volume-law scaling of entanglement entropy is established for all contiguous intervals up to the half cut.
    Additionally, while we used a loose lower bound for the entropy of the clock (valid for arbitrary free-fermion eigenstates), in the ground state with \(M \sim \nu n\) (\(0<\nu<1\)) the fermions form a logarithmically entangled Fermi sea state. This gives $S_A \approx \ell \ln(2) + (1/3) \ln(\ell)$ up to further subleading corrections~\cite{Wolf2006fermionlogentropy,Gioev2006freefermionentropy}, satisfying the lower bound $S_A \geq \ell \ln (2)$ up to and including the half-cut $\ell = n/2$. 
    
\section{Discussion}
    \label{sec:Discussion}

    We have introduced two families of solvable models: a local many-body variant of the Feynman-Kitaev clock (FK clock) with periodic boundary conditions, and a class of Floquet circuits based on linear feedback shift registers (LFSRs).
    The former provides a way of embedding Floquet states throughout the spectrum of a static local Hamiltonian, while the latter provides an analytically tractable instance of ETH-obeying many-body Floquet states.
    Using the LFSR circuits in the periodic FK clock construction, we obtain a one-dimensional Hamiltonian where all but a vanishing fraction of eigenstates are provably volume law entangled.
    The volume law scaling of entanglement can be realized by the ground state, and applies to arbitrary intervals of up to one half the global system size.

    Both classes of models we introduced are valuable as exactly-solvable counterexamples or edge-cases for conventional lore regarding the behavior of entanglement in many-body systems.
    
    First, the periodic FK clock Hamiltonians give rise to a qualitatively new type of area-law violation in local one-dimensional models.
    While other examples of volume-law entangled ground states in one dimension are known~\cite{Gottesman2010entanglementvsgap,Irani2010groundstateentanglement,Bravyi2012criticality,Movassagh2016supercritical,Zhang2017extensive,Salberger2018fredkin}, they display a special ``rainbow-like'' entanglement structure~\cite{Alexander2019rainbowtn}, such that not all subsystems actually obey the volume law.
    The eigenstates of the FK clock model, being descendants of Floquet states, have an entanglement structure much more similar to that of infinite-temperature eigenstates, where all subsystems obey the volume law.
    
    With that said, the FK clock eigenstates do not obey ETH at any temperature.
    They retain some of the observable physics of the free fermion clock hands~\cite{Lydzba2023quadratic}, despite the presence of nontrivial entanglement between these degrees of freedom and the spin system.
    For instance, eigenstates of the same energy but opposite net hand momentum can be distinguished by measuring local ``dressed'' current operators, 
    $i(c_{t+1}^\dagger c_t \otimes u_t - \textrm{h.c.})$, as these act like ordinary current operators on a free-fermion state [see Eq.~\eqref{eq:MBClock_dressed_ES}].
    Meanwhile, observables supported only on the spin system appear essentially featureless in FK clock eigenstates.
    
    Having such behavior in the ground state is unconventional, even in a gapless system, and it carries nontrivial implications for
    the generic structure of ground states beyond their entanglement properties.
    As a consequence for condensed matter physics: if one believes that low-energy and long-wavelength degrees of freedom of local Hamiltonians in one-dimension are described by quantum field theory, then the field theory for the FK clock model must simultaneously account for a zero-temperature free-fermion piece (infinite imaginary-time radius) and a featureless infinite-temperature piece (zero imaginary-time radius),  with nontrivial entanglement between them. Alternatively, the 
    FK clock models could fail to flow to a well-defined field theory, in violation of common assumptions. Both outcomes would have interesting implications for many-body physics, particularly if the FK clock Hamiltonians can represent critical points between distinct gapped phases of matter.
    As a consequence for quantum information theory: we see from \eqnref{eq:MBClock_dressed_ES} that the periodic FK clock ground state is related to a tensor product of a free-fermion state and a Floquet eigenstate by a circuit of polynomial (in system size $L$) depth. 
    This implies that results on the complexity of Floquet states (either the circuit complexity of their preparation on quantum hardware, or the computational complexity of their classical representation) can be ported to static local Hamiltonian ground states. Specifically, if there is any family of Floquet eigenstates with super-polynomial circuit complexity (that is, the number of gates required to prepare the eigenstates grows faster than any polynomial in \(L\)), then it follows that Hamiltonian ground states can also have super-polynomial circuit complexity.
    The same goes for the complexity of classical representation. Ground states of local Hamiltonians in one dimension often admit an efficient representation in terms of matrix product states. If one could prove the existence of a Floquet state without an efficient classical representation (as is expected based on random matrix theory ideas), the FK clock construction would turn that into a statement about local Hamiltonian ground states.

    The utility of the LFSR circuits as solvable case studies for ETH and chaos
    is likely even further reaching.
    While the ETH is well-established numerically~\cite{Rigol2008thermalization,D'Alessio2016ETHreview,Deutsch2018ETHreview}, and has been understood through heuristics for decades~\cite{Deutsch1991statmechclosed,Srednicki1994chaosthermalization}, we are not aware of any other local models which have been rigorously shown to satisfy it.
    At best, a variety of models have been shown to satisfy \emph{weak ETH}, where eigenbasis matrix elements vanish slower than exponentially in system size~\cite{Biroli2010rarefluctation,Mori2016weakeigenstatethermalizationlarge,Lydzba2024normalweaketh,Vikram2025bypassingeigenstatethermalizationexperimentally}.
    The LFSR circuits thus provide a unique handle on understanding ETH and its implications as we continue to move from a qualitative understanding of ergodicity and chaos in quantum systems to a quantitative one.
    
    Indeed, LFSR circuits meet many characterizations of quantum chaos and thermalization, but drastically fail others.
    Beginning with thermalization: as we have proved, they satisfy ETH in the sense of Definition~\ref{def:eth}~\cite{D'Alessio2016ETHreview,Deutsch2018ETHreview}.
    Nonetheless, the statistics of their quasienergies are nothing like a random matrix~\cite{Potters2020rmtbook}---they are evenly spaced, maximally violating the ``no-resonance condition'' that is often used to show equilibration~\cite{Srednicki1999approach,Linden2009equilibration,Pilatowskycameo2025thermalization,Huang2024randomproduct}.
    As a consequence, despite satisfying ETH and being ergodic---in the sense that time-averaged expectation values of local observables reproduce infinite temperature expectation values---LFSR dynamics need not equilibrate~\cite{Srednicki1999approach}. As LFSR circuits map \(Z\)-basis states to \(Z\)-basis states, the expectation value of a local Pauli \(Z\) operator only ever jumps between \(\pm 1\) when applying the circuit, and never approaches the equilibrium value of \(0\).
    Another consequence of the uniform level spacing is that the Heisenberg times and Poincar\'e recurrence times coincide for LFSR circuits, whereas generically the latter is exponentially larger~\cite{Bocchieri1957recurrence,Ozmaniec2024recurrence}.
    The relation of LFSR circuits to quantum chaos seems even more contentious.
    Conforming with a notion of quantum chaos, they have exactly maximal Krylov complexity~\cite{Nandy2025krylov} up to order \(2^n-1\) when acting on either $X$ or $Z$ basis product states (except \(\ket{0}^{\otimes n}\) or \(\ket{+}^{\otimes n}\)) or Pauli operators (except the identity).\footnote{This makes them concrete examples of what Ref.~\cite{Vikram2023cyclicergodicity} called ``quantum cyclic ergodicity''. They also exhibit ``quantum cyclic aperiodicity'', but over too short a time scale for there to be nontrivial consequences on the spectrum, as studied in Ref.~\cite{Vikram2023cyclicergodicity}.}
    However, they are at the intersection of permutation circuits and Clifford circuits, generating no coherence and no magic, and thus no state or operator entanglement when acting on physically relevant classes of input states or operators~\cite{Prosen2007opentangle,Prosen2007simulations}.
    Altogether, the LFSR circuits represent a useful edge case to disentangle and discriminate between inequivalent~\cite{Harrow2023thermalizationnoeth}---but often conflated---ideas in the study of chaos and thermalization.
    It would be enlightening to make a more comprehensive assessment of how they may or may not be considered chaotic~\cite{Pandey2020agp}.

    Much of the phenomenology we have discussed should be highly unstable to generic perturbations.
    Indeed, the single-hand sector of the FK clock model has a bandwidth which is independent of system size. 
    Typical extensive perturbations will result in a bandwidth which is extensive, presumably completely rearranging the eigenstates in the thermodynamic limit.
    Introducing local impurities to the model will maintain the \(O(1)\) bandwidth, but may still drastically change ground state properties, especially as the density of states is very large near the ground state.
    The many-hand sectors already have extensive bandwidth, but this does not guarantee that volume law entanglement in the ground state will persist.
    Generally, we expect typical extensive perturbations will open a gap in the FK clock model. (Similar instability was numerically observed in the Fredkin chain~\cite{Adhikari2019deformingfredkin}.)
    Perturbations to the LFSR circuits should make them more in line with conventional expectations for chaotic quantum circuits, but it is not clear that any analytic control can be maintained as such a perturbation is introduced.
    
    Despite this fragility to generic perturbations, the class of periodic FK clock models is still quite large. Perturbations of the Hamiltonian which maintain the form of \eqnref{eqn:MBClock}, only changing the \(u_t\) gates, are expected to typically maintain the infinite-temperature properties of the FK clock eigenstates, including volume law entanglement.
    This gives a manifold of Hamiltonians with extensive dimension which are all expected to have volume law entangled ground states. While still fine-tuned, this is a much larger family than those obtained by prior constructions~\cite{Bravyi2012criticality,Movassagh2016supercritical,Zhang2017extensive,Salberger2018fredkin}.
    
    Indeed, the precise form of \eqnref{eqn:MBClock} can even be slightly relaxed.
    A simple extension of the FK clock model which does not spoil the volume law entanglement of eigenstates is to give the \(c^\dagger_{t} c_t\) and \(c^\dagger_{t+1} c_t\) terms in \eqnref{eqn:MBClock} \(t\)-dependent amplitudes. For instance, adding arbitrarily weak random disorder will localize the clock hands~\cite{Anderson1958localization}, but nonetheless retain volume law entanglement in the system's eigenstates.
    Even including density-density interactions \(c^\dagger_t c_t c^\dagger_{t'} c_{t'}\) only introduces interactions among the dressed fermions [\eqnref{eq:MBClock_dressed_ES}], and so preserves the decomposition into Krylov sectors (though it makes the sectors interacting, spoiling solvability).
    
    More significant extensions to the FK clock model may include variants in higher dimensions~\cite{Gosset2015spacetimecircuit,Zhang2023coupledfredkin,Balasubramanian20232dentanglement}, or with other input circuits. 
    Indeed, there are many important classes of models which can be realized as Floquet circuits, including discrete time crystals~\cite{Khemani2019briefhistorytimecrystals,Else2020timecrystals} and topological Floquet phases~\cite{Harper2020floquettopologyreview,Long2024beyondeigenstate}.
    These Floquet circuits generate intrinsically driven phenomena which are impossible to realize with static, local Hamiltonians, so the embedding of their eigenstates in the clock model must greatly affect their properties.
    How, then, might the nontrivial Floquet dynamics of these circuits appear in either the dynamics or spectrum of the periodic FK clock model? Can these exhibit any remnants or signatures of intrinsically driven physics?
    We leave these questions for future work.

    Finally, we comment on another possible implication of our embedding of Floquet states in static Hamiltonian spectra, relating to the preparation of quasi-stationary states of Floquet circuits.
    There are well-developed computational techniques for finding ground (or low-energy) states of local Hamiltonians, particularly in one dimension, including variational methods based on tensor networks~\cite{White1992dmrg,Catarina2023dmrg} or neural quantum states~\cite{Lange2024neural}. Targeting Floquet states tends to be more challenging.
    Our construction of periodic FK clock Hamiltonians presents an interesting opportunity to map the latter problem into the former and thus take advantage of ground-state numerical methods.
    While it will typically be very difficult to find the exact ground state, due to the gap in the periodic FK clock Hamiltonian being exponentially small ($\propto 4^{-n}$), even a superposition of low-energy eigenstates is sufficient for some purposes.
    By applying the circuit in \eqnref{eq:MBClock_dressed_ES}, a low-energy state of the spin-clock ladder can then be efficiently converted to a state of the spin system which is quasi-stationary under Floquet dynamics.
    Provided that the variational energy of the spin-clock state is much smaller than the finite-size energy gap of the free-fermion sector ($\propto n^{-2}$), the resulting spin state has a narrow quasi-energy distribution and is correspondingly long-lived under the Floquet dynamics.
    An interesting direction for future work will be to compare the efficiency, accuracy, and achievable system sizes in this approach against existing techniques~\cite{Khemani2016dmrgx,Zhang2017dmrgx,Yu2017shiftinvert,Seirant2020polyfed,Luitz2021floquetpolyfed}.

\begin{acknowledgements}
    We thank Anushya Chandran, Pieter Claeys, Sam Garratt, Sarang Gopalakrishnan, Jeongwan Haah, and Vedika Khemani for discussions. We also thank Shankar Balasubramanian and Amit Vikram for feedback on the manuscript.
    MI acknowledges the hospitality of the Department of Physics at Stanford University during the beginning of this work.
    This research was supported in part by grant NSF PHY-2309135 to the Kavli Institute for Theoretical Physics (KITP).
    DML was supported by a Stanford Q-FARM Bloch Fellowship, and a Packard Fellowship in Science and Engineering (PI: Vedika Khemani).
\end{acknowledgements}

\appendix

\section{Entanglement of many-handed clock eigenstates \label{app:clock}}
    
    In this appendix, we prove a many-hand generalization of the volume-law entanglement property of \autoref{subsec:VolumeLawEEE}.

    \begin{thm}\label{thm:ManyHandEEE}
        Let \(A\) be any subinterval of the spin-clock system.
        Then the second R\'eyni entropy of any many-hand clock eigenstate \(\ket{\Psi_{\mathbf{k},\phi}}\) is bounded below as
        \begin{equation}\label{eqn:ReyniBound}
        \SecRenyi{A}{\ket{\Psi_{\mathbf{k},\phi}}} \geq \SecRenyi{A}{c^\dagger_{\mathbf{k},\phi}\ket{\Omega}} + \min_{\mathbf{t}} \SecRenyi{A}{\propagator{\mathbf{t}}{0} \ket{\phi}},
        \end{equation}
        where the minimum is over hand configurations with no more hands than \(\ket{\Psi_{\mathbf{k},\phi}}\), and
        \begin{equation}
        c^\dagger_{\mathbf{k},\phi} = \sum_{\mathbf{t}} \varphi_{\mathbf{k},\phi}(\mathbf{t}) c_{\mathbf{t}}^\dagger
        = \prod_{k \in \mathbf{k}} \left[\frac{1}{\sqrt{T}}\sum_{t=0}^{T-1} e^{i (k -\phi/T)t} c_t^\dagger\right]
        \end{equation}
        is the creation operator for a Slater determinant state.
    \end{thm}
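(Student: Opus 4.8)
The plan is to lower‑bound the von Neumann entropy by the second R\'enyi entropy, $S_A(\ket{\Psi_{\mathbf{k},\phi}})\geq\SecReyni{A}{\ket{\Psi_{\mathbf{k},\phi}}}=-\ln\Tr(\rho_A^2)$, and then prove the purity factorizes as $\Tr(\rho_A^2)\leq\Tr(\rho_{A_c}^2)\cdot\max_{\mathbf{t}}\Tr[(\sigma^{(\mathbf{t},\mathbf{t})}_{A_s})^2]$, where $\rho_{A_c}$ is the reduced density matrix of the clock Slater determinant $c^\dagger_{\mathbf{k},\phi}\ket{\Omega}$ on $A_c$, $\sigma^{(\mathbf{t},\mathbf{t})}_{A_s}=\Tr_{\bar{A}_s}(\propagator{\mathbf{t}}{0}\ketbra{\phi}\propagator{\mathbf{t}}{0}^\dagger)$ is the reduced density matrix of $\propagator{\mathbf{t}}{0}\ket{\phi}$, and the maximum is over hand configurations with no more hands than $\ket{\Psi_{\mathbf{k},\phi}}$; since these two states live on the clock and spin chains respectively, $-\ln$ of this inequality is precisely \eqnref{eqn:ReyniBound}. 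Following the one‑hand computation in the proof of Theorem~\ref{thm:volumelaw_1hand}, I would obtain $\rho_A$ by a two‑step partial trace. Tracing out the clock complement $\bar{A}_c$ first, orthogonality of the fermion occupation patterns annihilates all coherences between configurations differing on $\bar{A}_c$, so $\rho_A=\sum_{\mathbf{y}}\Tr_{\bar{A}_s}\ketbra{\xi_{\mathbf{y}}}$, where $\mathbf{y}$ runs over occupation patterns of $\bar{A}_c$ and $\ket{\xi_{\mathbf{y}}}=\sum_{\mathbf{x}}\varphi_{\mathbf{k},\phi}(\mathbf{x}\cup\mathbf{y})\,\propagator{\mathbf{x}\cup\mathbf{y}}{0}\ket{\phi}\otimes\ket{\mathbf{x}}_{A_c}$ is a subnormalized vector on $A_s\cup A_c\cup\bar{A}_s$ ($\mathbf{x}$ running over occupation patterns of $A_c$). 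The key structural fact, used throughout, is that conservation of the total hand number pins $|\mathbf{x}|=M-|\mathbf{y}|$ inside each $\mathbf{y}$‑sector.

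The technical heart is a lemma localizing the $\mathbf{x}$‑dependence of the staircase propagators: for each $\mathbf{y}$ there exist a reference configuration $\mathbf{x}_0$ of size $M-|\mathbf{y}|$, a spin state $\ket{\chi_{\mathbf{y}}}$, a unitary $D_{\mathbf{y}}$ supported in $\bar{A}_s$ (independent of $\mathbf{x}$), and unitaries $R^{(\mathbf{y})}_{\mathbf{x}}$ supported in $A_s$, with $\propagator{\mathbf{x}\cup\mathbf{y}}{0}\ket{\phi}=D_{\mathbf{y}}R^{(\mathbf{y})}_{\mathbf{x}}\ket{\chi_{\mathbf{y}}}$, such that $R^{(\mathbf{y})}_{\mathbf{x}}$ depends only on $\mathbf{x}$ and $|\mathbf{y}|$, $R^{(\mathbf{y})}_{\mathbf{x}_0}=\unit$, and $\Tr_{\bar{A}_s}\ketbra{\chi_{\mathbf{y}}}=\sigma^{(\mathbf{x}_0\cup\mathbf{y},\,\mathbf{x}_0\cup\mathbf{y})}_{A_s}$. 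When $A$ avoids the origin this is the many‑hand version of case (i) of Theorem~\ref{thm:volumelaw_1hand}: one factors each $\propagator{t}{0}$ with $t\in\mathbf{x}$ at the left boundary $t_1$ of $A_s$ as $\propagator{t}{t_1}\propagator{t_1}{0}$, telescopes the ordered product $\prod_{t\in\mathbf{x}}\propagator{t}{0}$ past the (non‑local) $\propagator{t_1}{0}$ factors, and observes that the only leftover non‑local piece is $\propagator{t_1}{0}^{|\mathbf{x}|}$, which is the \emph{same} for all $\mathbf{x}$ of equal size and hence cancels between the numerator and the reference; hands of $\mathbf{y}$ below $A_c$ drop out under $\Tr_{\bar{A}_s}$, and those above $A_c$ are absorbed into $\ket{\chi_{\mathbf{y}}}$. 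When $A$ contains the origin one instead unwinds propagators backward through the period, $\propagator{t}{0}\ket{\phi}=e^{i\phi}\propagator{n}{t}^\dagger\ket{\phi}$, using $u_{n-1}\propto\unit$, as in case (ii); the ensuing phases count only how many hands lie on the ``backward arc'' of $A_c$, so they too depend only on $\mathbf{x}$ and $|\mathbf{y}|$. Granting the lemma, $\ket{\xi_{\mathbf{y}}}=(D_{\mathbf{y}}\otimes\unit_{A_c})\,U^{(\mathbf{y})}_A(\ket{\chi_{\mathbf{y}}}\otimes\ket{\zeta_{\mathbf{y}}}_{A_c})$ with $U^{(\mathbf{y})}_A=\sum_{\mathbf{x}}R^{(\mathbf{y})}_{\mathbf{x}}\otimes\ketbra{\mathbf{x}}_{A_c}$ a unitary supported in $A$, and $\ket{\zeta_{\mathbf{y}}}_{A_c}=\sum_{\mathbf{x}}\varphi_{\mathbf{k},\phi}(\mathbf{x}\cup\mathbf{y})\ket{\mathbf{x}}_{A_c}=\bra{\mathbf{y}}_{\bar{A}_c}\big(c^\dagger_{\mathbf{k},\phi}\ket{\Omega}\big)$. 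Since $D_{\mathbf{y}}$ acts in $\bar{A}_s$ it drops out of $\Tr_{\bar{A}_s}$, which commutes with $U^{(\mathbf{y})}_A$, so $\Tr_{\bar{A}_s}\ketbra{\xi_{\mathbf{y}}}=U^{(\mathbf{y})}_A(\sigma^{(\mathbf{x}_0\cup\mathbf{y},\,\mathbf{x}_0\cup\mathbf{y})}_{A_s}\otimes\ketbra{\zeta_{\mathbf{y}}}_{A_c})U^{(\mathbf{y})\dagger}_A$.

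It remains to evaluate $\Tr(\rho_A^2)=\sum_{\mathbf{y},\mathbf{y}'}\Tr[\Tr_{\bar{A}_s}\ketbra{\xi_{\mathbf{y}}}\,\Tr_{\bar{A}_s}\ketbra{\xi_{\mathbf{y}'}}]$, a sum of nonnegative terms. Since $\varphi_{\mathbf{k},\phi}(\mathbf{x}\cup\mathbf{y})=0$ unless $|\mathbf{x}\cup\mathbf{y}|=M$, a nonzero $\braket{\zeta_{\mathbf{y}}}{\zeta_{\mathbf{y}'}}$ forces $|\mathbf{y}|=|\mathbf{y}'|$, whence $R^{(\mathbf{y})}_{\mathbf{x}}=R^{(\mathbf{y}')}_{\mathbf{x}}$ on every configuration $\mathbf{x}$ in the support of $\ket{\zeta_{\mathbf{y}}},\ket{\zeta_{\mathbf{y}'}}$, so $U^{(\mathbf{y})\dagger}_A U^{(\mathbf{y}')}_A$ acts as the identity there and $\Tr[\Tr_{\bar{A}_s}\ketbra{\xi_{\mathbf{y}}}\,\Tr_{\bar{A}_s}\ketbra{\xi_{\mathbf{y}'}}]=|\braket{\zeta_{\mathbf{y}}}{\zeta_{\mathbf{y}'}}|^2\,\Tr[\sigma^{(\mathbf{x}_0\cup\mathbf{y},\,\mathbf{x}_0\cup\mathbf{y})}_{A_s}\sigma^{(\mathbf{x}_0\cup\mathbf{y}',\,\mathbf{x}_0\cup\mathbf{y}')}_{A_s}]$. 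Bounding the spin trace by Cauchy--Schwarz, $\Tr[\sigma^{(\mathbf{s},\mathbf{s})}_{A_s}\sigma^{(\mathbf{s}',\mathbf{s}')}_{A_s}]\leq\|\sigma^{(\mathbf{s},\mathbf{s})}_{A_s}\|_F\,\|\sigma^{(\mathbf{s}',\mathbf{s}')}_{A_s}\|_F\leq\max_{\mathbf{t}}\Tr[(\sigma^{(\mathbf{t},\mathbf{t})}_{A_s})^2]$, and recognizing $\sum_{\mathbf{y},\mathbf{y}'}|\braket{\zeta_{\mathbf{y}}}{\zeta_{\mathbf{y}'}}|^2=\Tr(\rho_{A_c}^2)$ gives the factorization, hence \eqnref{eqn:ReyniBound}. \textbf{The main obstacle} is the structural lemma of the second paragraph: carefully tracking which parts of the staircase propagators are confined to $A_s$, using hand‑number conservation so that the non‑local ``bulk'' cancels between the numerator and the reference configuration, and handling origin‑crossing intervals through the Floquet eigenvalue relation (this is where the staircase geometry and the condition $u_{n-1}\propto\unit$ are essential). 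Once that lemma is in hand, the remaining steps are short bookkeeping.
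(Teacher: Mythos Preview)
Your proposal is correct and rests on the same locality facts about staircase propagators and the same Floquet-eigenvalue ``unwinding'' trick as the paper, but you organize the argument differently. The paper first writes the purity as a quadruple sum over hand configurations \(\mathbf{t}_1,\mathbf{t}_1',\mathbf{t}_2,\mathbf{t}_2'\) and then, via a pairwise decomposition lemma (\(\propagator{\mathbf{t}}{0}=V\,\propagator{\mathbf{s}}{0}\), \(\propagator{\mathbf{t}'}{0}=W\,\propagator{\mathbf{s}}{0}\) with \(\mathbf{s}=\max(\mathbf{t},\mathbf{t}')\)), shows that every nonvanishing spin-trace factor is \emph{exactly} \(\exp[-\SecReyni{A}{\propagator{\mathbf{s}_A}{0}\ket{\phi}}]\) for a single configuration \(\mathbf{s}_A\) determined by the deltas; no Cauchy--Schwarz is needed. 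Your route instead splits each configuration as \(\mathbf{x}\cup\mathbf{y}\), packages the \(\mathbf{x}\)-dependence into a controlled unitary \(U^{(\mathbf{y})}_A\) on \(A\), and only then bounds the residual cross term \(\Tr[\sigma^{(\mathbf{y})}_{A_s}\sigma^{(\mathbf{y}')}_{A_s}]\) by Cauchy--Schwarz. Both arrive at the same bound; your organization makes the spin--clock factorization manifest at the level of the state (through the controlled unitary), while the paper's pairwise lemma yields more, namely an exact term-by-term identification of the spin factor with a second R\'enyi entropy. One notational wrinkle: your label \(\sigma^{(\mathbf{x}_0\cup\mathbf{y},\,\mathbf{x}_0\cup\mathbf{y})}_{A_s}\) is a bit loose---carrying your decomposition through shows \(\Tr_{\bar A_s}\ketbra{\chi_{\mathbf{y}}}\) is actually (unitarily on \(A_s\)) equivalent to the reduced state of \(\ket{\phi}\) itself or of \(\propagator{\mathbf{t}_{B2}}{0}\ket{\phi}\), i.e., a configuration with \emph{fewer} than \(M\) hands, not literally of the form \(\mathbf{x}_0\cup\mathbf{y}\); this is harmless since the theorem minimizes over all configurations with at most \(M\) hands.
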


    The second R\'eyni entropy is a lower bound for the von Neumann entanglement entropy, so Theorem~\ref{thm:ManyHandEEE} also gives a bound for the entanglement entropy.
    
    In particular, if all of the states \(\propagator{\mathbf{t}}{0} \ket{\phi}\) obey a volume law in the second R\'eyni entropy, then \(S_A({\ket{\Psi_{\mathbf{k},\phi}}}) \geq \min \SecRenyi{A}{\propagator{\mathbf{t}}{0} \ket{\phi}} \geq c |A|\) also obeys a volume law. 
    The interpretation of \(\propagator{\mathbf{t}}{0} \ket{\phi}\) is roughly that it is built from eigenstates for the time-shifted circuit \(\propagator{t}{0} \propagator{n}{t}\), so it is natural that if one of these \(\propagator{\mathbf{t}}{0} \ket{\phi}\) states is volume law entangled for a fixed \(\ket{\phi}\), then they all are.
    More concretely, such volume law entanglement holds for the LFSR circuits of \autoref{sec:LFSRModel}, where applying any Clifford circuit such as \(\propagator{\mathbf{t}}{0}\) to an eigenstate gives a state which still obeys the bound of Lemma~\ref{lemma:pauli}, and thus has the ETH properties that descend from that lemma.

    To prove Theorem~\ref{thm:ManyHandEEE}, we first establish a simple lemma.

    \begin{lemma}\label{lem:VWDecomp}
        Given any length \(M\) vectors of hand positions \(\mathbf{t}\) and \(\mathbf{t}'\), we have
        \begin{equation}
        \propagator{\mathbf{t}}{0} = V \propagator{\mathbf{s}}{0},
        \quad\text{and}\quad
        \propagator{\mathbf{t}'}{0} =  W \propagator{\mathbf{s}}{0},
        \end{equation}
        where \(\mathbf{s} = \max( \mathbf{t}, \mathbf{t}')\) is the pointwise maximum of \(\mathbf{t}\) and \(\mathbf{t}'\), and
        \begin{equation}
            V = \prod_{j \,:\, t_j < t'_j} \propagator{t_j}{t'_j}, \quad
            W = \prod_{j \,:\, t_j > t'_j} \propagator{t'_j}{t_j},
        \end{equation}
        where the ordering of the product is from small \(j\) (on the right) to large \(j\) (on the left).
    \end{lemma}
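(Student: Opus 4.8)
The plan is to reduce both $\propagator{\mathbf{t}}{0}$ and $\propagator{\mathbf{t}'}{0}$ to $\propagator{\mathbf{s}}{0}$ one hand at a time, peeling a factor off the rightmost hand and sliding it past the remaining (as yet untouched) factors to the far left.

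First I would note that $\mathbf{s}=\max(\mathbf{t},\mathbf{t}')$ is again strictly increasing --- hence an admissible hand configuration --- since $s_j=\max(t_j,t'_j)\geq\max(t_{j-1},t'_{j-1})+1=s_{j-1}+1$. Then, for each position $j$, I would apply the composition law $\propagator{t_2}{t_1}=\propagator{t_2}{t'}\propagator{t'}{t_1}$: if $t_j\leq t'_j$ then $s_j=t'_j$ and $\propagator{t_j}{0}=\propagator{t_j}{t'_j}\propagator{s_j}{0}$, while if $t_j\geq t'_j$ then $s_j=t_j$ and $\propagator{t_j}{0}=\propagator{s_j}{0}$. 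In either case $\propagator{t_j}{0}=c_j\,\propagator{s_j}{0}$ with $c_j:=\propagator{t_j}{t'_j}$ when $t_j<t'_j$ and $c_j:=\unit$ otherwise; symmetrically $\propagator{t'_j}{0}=c'_j\,\propagator{s_j}{0}$ with $c'_j:=\propagator{t'_j}{t_j}$ when $t'_j<t_j$ and $c'_j:=\unit$ otherwise.

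The geometric input is the support structure of the staircase: $\propagator{t}{0}=u_{t-1}\cdots u_0$ is supported on qubits $\{0,1,\dots,t\}$, whereas $c_j=\propagator{t_j}{t'_j}=u^\dagger_{t_j}\cdots u^\dagger_{t'_j-1}$ (for $t_j<t'_j$) is supported on $\{t_j,\dots,t'_j\}$. Since $\mathbf{t}$ is strictly increasing, $t_i<t_j$ for $i<j$, so $c_j$ has disjoint support from $\propagator{t_i}{0}$ and commutes with it; likewise $c'_j$ commutes with $\propagator{t'_i}{0}$ for $i<j$. With this in hand I would induct from the rightmost hand: write $\propagator{\mathbf{t}}{0}=\propagator{t_1}{0}\cdots\propagator{t_M}{0}$, substitute $\propagator{t_M}{0}=c_M\propagator{s_M}{0}$ and slide $c_M$ leftward past the still-original factors $\propagator{t_1}{0},\dots,\propagator{t_{M-1}}{0}$; then repeat for $\propagator{t_{M-1}}{0}$, and so on. At the $m$-th step the newly produced $c_m$ commutes past $\propagator{t_1}{0},\dots,\propagator{t_{m-1}}{0}$ and settles just to the right of $c_M\cdots c_{m+1}$. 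After all $M$ steps one obtains $\propagator{\mathbf{t}}{0}=c_M c_{M-1}\cdots c_1\,\propagator{s_1}{0}\cdots\propagator{s_M}{0}=V\propagator{\mathbf{s}}{0}$; discarding the trivial factors leaves $V=\prod_{j\,:\,t_j<t'_j}\propagator{t_j}{t'_j}$ with the large-$j$ factor on the left, as claimed, and the identity for $W$ follows by exchanging $\mathbf{t}\leftrightarrow\mathbf{t}'$.

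The one point demanding care --- and essentially the only way the argument could break --- is the order in which the substitutions are carried out: one must peel factors off starting from the rightmost hand, so that each $c_m$ only ever has to commute past the original initial-segment propagators $\propagator{t_i}{0}$ with $i<m$, and never past an already-replaced $\propagator{s_i}{0}$ (with which it generally fails to commute). Doing this in the wrong order would also invert the ordering of the product defining $V$.
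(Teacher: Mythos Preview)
Your proof is correct and uses the same essential idea as the paper: the disjoint-support commutation between the ``correction'' factors $c_j=\propagator{t_j}{t'_j}$ and the initial-segment propagators $\propagator{t_i}{0}$ with $i<j$. The paper organizes the same argument as a formal induction on $M$ that \emph{prepends} the smallest hand $t_0$ (commuting $\propagator{t_0}{0}$ past the already-built $V$, which is supported on $[t_1,t'_M]$), whereas you iterate from the largest hand and slide each $c_m$ leftward past the untouched $\propagator{t_i}{0}$; both yield the same product ordering and rely on the identical support observation, so this is a cosmetic rather than a substantive difference.
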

    \begin{proof}
        We essentially just have to check that the stated decomposition works. 
        It is convenient to phrase this as an induction on \(M\).
        For \(M=0\), we have \(\propagator{\mathbf{t}}{0} = \unit\), so the induction base certainly holds. Assuming that the lemma holds for \(M\) clock hands, we consider vectors of \(M+1\) clock hands \((t_0, \mathbf{t}) = (t_0, t_1, ..., t_M)\) and \((t'_0, \mathbf{t}')\). 
        Then we have
        \begin{equation}
            \propagator{(t_0, \mathbf{t})}{0} = \propagator{t_0}{0} \propagator{\mathbf{t}}{0},
        \end{equation}
        and similarly with primes.
        Thus, from the induction hypothesis,
        \begin{equation}
            \propagator{(t_0, \mathbf{t})}{0} = \propagator{t_0}{0} V \propagator{\mathbf{s}}{0},
            \quad\text{and}\quad
            \propagator{(t'_0,\mathbf{t}')}{0} = \propagator{t'_0}{0} W \propagator{\mathbf{s}}{0}.
        \end{equation}

        We observe from their definition that \(V\) is supported at most on \([t_1,t_M']\), and similarly that \(W\) is supported on \([t'_1, t_M]\). The propagators \(\propagator{t_0^{(\prime)}}{0}\) are supported on \([0,t_0^{(\prime)}]\), and to be a valid hand configuration we must have \(t^{(\prime)}_0 < t^{(\prime)}_1\). Thus, \(V\) has disjoint support from \(\propagator{t_0}{0}\), and \(W\) has disjoint support from \(\propagator{0}{t_0'}\). This gives
        \begin{equation}
            \propagator{(t_0, \mathbf{t})}{0} =  V \propagator{t_0}{0} \propagator{\mathbf{s}}{0},
            \quad\text{and}\quad
            \propagator{(t'_0,\mathbf{t}')}{0} =  W \propagator{t'_0}{0} \propagator{\mathbf{s}}{0}.
        \end{equation}

        Suppose that \(t_0 < t_0'\). Then we write
        \begin{equation}
            \propagator{t_0}{0} \propagator{\mathbf{s}}{0} = \propagator{t_0}{t'_0} \propagator{t'_0}{0} \propagator{\mathbf{s}}{0} = \propagator{t_0}{t'_0} \propagator{(s_0,\mathbf{s})}{0},
        \end{equation}
        where \((s_0, \mathbf{s})\) is the pointwise maximum of \((t_0, \mathbf{t})\) and \((t'_0, \mathbf{t}')\). Thus,
        \begin{equation}
            \propagator{(t_0, \mathbf{t})}{0} =  V'  \propagator{(s_0,\mathbf{s})}{0}
        \end{equation}
        where \(V'= V \propagator{t_0}{t'_0}\) is the new \(V\). A similar calculation holds for the new \(W\) when \(t_0 > t_0'\). The lemma follows by induction.
    \end{proof}

    We now proceed to Theorem~\ref{thm:ManyHandEEE}.

    \begin{proof}[Proof of Theorem~\ref{thm:ManyHandEEE}]
    We need to compute the purity of the reduced density matrix on the subsystem \(A\):
    \begin{equation}
    \rho^A_{\mathbf{k}, \phi} = \sum_{\mathbf{t}, \mathbf{t}'} \varphi_{\mathbf{k},\phi}(\mathbf{t}) \varphi^*_{\mathbf{k},\phi}(\mathbf{t}') \Tr_B \ketbra{K_{\mathbf{t},\phi}}{K_{\mathbf{t}',\phi}},
    \end{equation}
    where \(B\) is the complement of \(A\).
    Using that \(\ketbra{K_{\mathbf{t},\phi}}{K_{\mathbf{t}',\phi}}\) is a tensor product of operators on \(\Hspin\) and \(\Hclock\), we have
    \begin{multline}
    \Tr_B \ketbra{K_{\mathbf{t},\phi}}{K_{\mathbf{t}',\phi}}
    = \Tr_{B}[\propagator{\mathbf{t}}{0} \ketbra{\phi}{\phi} \propagator{0}{\mathbf{t}'}] \\
    \otimes \Tr_{B}[ c^\dagger_\mathbf{t} \ketbra{\Omega}{\Omega} c_{\mathbf{t}'}].
    \end{multline}
    Write \(\mathbf{t}_{A}\) for the ordered vector of hand positions consisting of elements of \(\mathbf{t}\) which lie inside \(A\).
    Then, denoting \(\ket{\mathbf{t}} = c^\dagger_\mathbf{t} \ket{\Omega}\), the partial trace of the clock state is simply
    \begin{equation}
    \Tr_{B} \ketbra{\mathbf{t}}{\mathbf{t}'} = \delta_{\mathbf{t}_{B} \mathbf{t}'_{B}} \ketbra{\mathbf{t}_{A}}{\mathbf{t}'_{A}},
    \end{equation}
    where \(\ket{\mathbf{t}_{A}}\) is a state on the subsystem.
    This gives us an expression for the purity
    \begin{widetext}
    \begin{multline}\label{eqn:ClockPurityFull}
    \Tr \left[(\rho^A_{\mathbf{k}, \phi})^2\right] = \sum_{\mathbf{t}_1, \mathbf{t}'_1, \mathbf{t}_2, \mathbf{t}'_2} \varphi_{\mathbf{k},\phi}(\mathbf{t_1}) \varphi^*_{\mathbf{k},\phi}(\mathbf{t}'_1) \varphi_{\mathbf{k},\phi}(\mathbf{t}'_2) \varphi^*_{\mathbf{k},\phi}(\mathbf{t}_2) \delta_{\mathbf{t}_{1,B} \mathbf{t}'_{1,B}} \delta_{\mathbf{t}_{2,B} \mathbf{t}'_{2,B}} \delta_{\mathbf{t}_{1,A} \mathbf{t}_{2,A}} \delta_{\mathbf{t}'_{1,A} \mathbf{t}'_{2,A}} \\
    \times \Tr \left[\Tr_{B}[\propagator{\mathbf{t}_1}{0} \ketbra{\phi}{\phi} \propagator{0}{\mathbf{t}'_1}]
    \Tr_{B}[\propagator{\mathbf{t}'_2}{0} \ketbra{\phi}{\phi} \propagator{0}{\mathbf{t}_2}] \right].
    \end{multline}
    \end{widetext}
    
    The key parts of \eqnref{eqn:ClockPurityFull} are the constraints from the \(\delta\)-functions and the trace involving \(\ket{\phi}\). Equation~\eqref{eqn:ReyniBound} follows from showing that
    \begin{multline}\label{eqn:FactorBound}
    \Tr \left[\Tr_{B}[\propagator{\mathbf{t}_1}{0} \ketbra{\phi}{\phi} \propagator{0}{\mathbf{t}'_1}]
    \Tr_{B}[\propagator{\mathbf{t}'_2}{0} \ketbra{\phi}{\phi} \propagator{0}{\mathbf{t}_2}] \right] \\
    = \exp\left[-\SecRenyi{A}{\propagator{\mathbf{s}}{0} \ket{\phi}}\right]
    \end{multline}
    for some \(\mathbf{s}\) whenever the combination of \(\delta\)-functions is nonzero.
    Indeed, bounding each of these terms in \eqnref{eqn:ClockPurityFull} by taking the minimum entropy over \(\mathbf{s}\), we factor them out of the sum and recognize the remaining expression as being the purity of \(c^\dagger_{\mathbf{k},\phi} \ket{\Omega}\),
    \begin{equation}
    \Tr \left[(\rho^A_{\mathbf{k}, \phi})^2\right] \leq \exp\left[- \SecRenyi{A}{c^\dagger_{\mathbf{k},\phi} \ket{\Omega}} - \min_{\mathbf{s}} \SecRenyi{A}{\propagator{\mathbf{s}}{0} \ket{\phi}} \right].
    \end{equation}
    Taking the logarithm gives the lower bound for the second R\'eyni entropy stated in \eqnref{eqn:ReyniBound}.
    
    In proving \eqnref{eqn:FactorBound}, we distinguish two cases:
    \begin{enumerate}
        \item\label{itm:NoLastBond} the subsystem \(A\) does not contain the sites \(\{0,T-1\}\), and
        \item\label{itm:YesLastBond} \(A\) does contain \(\{0,T-1\}\).
    \end{enumerate}
    In case~\ref{itm:NoLastBond}, the subsystem \(A\) does not straddle the final bond completing the periodic boundary conditions, and any \(\mathbf{t}\) is decomposed as \(\mathbf{t} = (\mathbf{t}_{B1}, \mathbf{t}_A, \mathbf{t}_{B2})\), where the subscript denotes which subsystem the hands are contained in. (Some of the components may be empty.)
    In case~\ref{itm:YesLastBond} we instead have \(\mathbf{t} = (\mathbf{t}_{A1}, \mathbf{t}_B, \mathbf{t}_{A2})\).
    
    First consider case~\ref{itm:NoLastBond}, and apply Lemma~\ref{lem:VWDecomp} to \eqnref{eqn:FactorBound}, giving
    \begin{align}
        \propagator{\mathbf{t}_1}{0} \ketbra{\phi}{\phi} \propagator{0}{\mathbf{t}'_1} &= V_1 \propagator{\mathbf{s}_1}{0} \ketbra{\phi}{\phi} \propagator{0}{\mathbf{s}_1} W_1^\dagger, \\
        \propagator{\mathbf{t}'_2}{0} \ketbra{\phi}{\phi} \propagator{0}{\mathbf{t}_2} &= W_2 \propagator{\mathbf{s}_2}{0} \ketbra{\phi}{\phi} \propagator{0}{\mathbf{s}_2} V_2^\dagger,
    \end{align}
    for \(V_j\), \(W_j\), and \(\mathbf{s}_j\) as appearing in the lemma.
    Due to the \(\delta\)-functions \(\delta_{\mathbf{t}_{1,B} \mathbf{t}'_{1,B}} \delta_{\mathbf{t}_{2,B} \mathbf{t}'_{2,B}}\), we must have that \(V_j\) and \(W_j\) are supported within \(A\), as all the propagators \(\propagator{t}{t'}\) appearing in their definition are the identity when \(t = t'\). Furthermore, the \(\delta\)-function \(\delta_{\mathbf{t}_{1,A} \mathbf{t}_{2,A}} \delta_{\mathbf{t}'_{1,A} \mathbf{t}'_{2,A}}\) then demands that \(V_1 = V_2\) and \(W_1 = W_2\). Substituting this information and canceling \(V_1 V_1^\dagger\) and \(W_1 W_1^\dagger\), the left hand side of \eqnref{eqn:FactorBound} becomes
    \begin{equation}\label{eqn:NoBondPartial}
        \Tr \left[\Tr_{B}[\propagator{\mathbf{s}_1}{0} \ketbra{\phi}{\phi} \propagator{0}{\mathbf{s}_1}]
        \Tr_{B}[\propagator{\mathbf{s}_2}{0} \ketbra{\phi}{\phi} \propagator{0}{\mathbf{s}_2}] \right],
    \end{equation}
    where we must have \(\mathbf{s}_{1A} = \mathbf{s}_{2A} = \mathbf{s}_A\) from the \(\delta_{\mathbf{t}_{1,A} \mathbf{t}_{2,A}} \delta_{\mathbf{t}'_{1,A} \mathbf{t}'_{2,A}}\) constraint. Case~\ref{itm:NoLastBond} is completed by showing that \(\mathbf{s}_j = (\mathbf{s}_{jB1}, \mathbf{s}_{jA}, \mathbf{s}_{jB2})\) in \eqnref{eqn:NoBondPartial} can be replaced by just \(\mathbf{s}_{jA}\). Indeed, the propagator decomposes as
    \begin{equation}
        \propagator{\mathbf{s}_{jB1}}{0} \propagator{\mathbf{s}_{jA}}{0} \propagator{\mathbf{s}_{jB2}}{0}.
    \end{equation}
    The first factor can be cycled through the partial trace and canceled with its conjugate, as it is entirely contained within \(B\). The last factor can be replaced by backwards evolution when acting on \(\ket{\phi}\). Indeed, considering a single hand we have
    \begin{equation}
        \propagator{s_M}{0} \ket{\phi} = \propagator{s_M}{n} \propagator{n}{0} \ket{\phi} = e^{i \phi} \propagator{s_M}{n} \ket{\phi},
    \end{equation}
    and we observe that \(\propagator{s_M}{n}\) has disjoint support from all the other propagators in \(\propagator{\mathbf{s}}{0}\). Iterating this, we have
    \begin{equation}
        \propagator{\mathbf{s}_{jB2}}{0} \ket{\phi} \propto \propagator{\mathbf{s}_{jB2}}{n} \ket{\phi},
    \end{equation}
    where ``\(\propto\)'' is equality up to a phase and
    \begin{equation}
        \propagator{\mathbf{s}_{jB2}}{n} = \propagator{s_{jM}}{n} \cdots \propagator{s_{jk}}{n}
    \end{equation}
    is also supported inside \(B\). Thus, it can similarly be canceled inside the partial trace. We are left with
    \begin{equation}
        \Tr \left[\Tr_{B}[\propagator{\mathbf{s}_A}{0} \ketbra{\phi}{\phi} \propagator{0}{\mathbf{s}_A}]^2 \right] = e^{- \SecRenyi{A}{\propagator{\mathbf{s}_A}{0} \ket{\phi}}},
    \end{equation}
    as desired.

    Case~\ref{itm:YesLastBond} proceeds essentially identically, except that the parts of \(\propagator{\mathbf{s}}{0}\) which cannot be canceled are now the parts in \(B\), instead of \(A\).
    Indeed, we can simply use that the entropy of a subsystem is identical to that of its complement for a pure state, \(\SecRenyi{A}{\ket{\Psi}} = \SecRenyi{B}{\ket{\Psi}}\), to interchange the role of \(A\) and \(B\) in the above calculations.
    Then we get a bound involving \(\SecRenyi{B}{\propagator{\mathbf{s}_B}{0}\ket{\phi}}\), which can again be replaced with \(\SecRenyi{A}{\propagator{\mathbf{s}_B}{0}\ket{\phi}}\).
    The statement of the theorem minimizes over all hand positions \(\mathbf{s}\), whether they are contained within \(A\) or \(B\), so this is sufficient for the claimed result.
    \end{proof}

    It is also possible to obtain a loose lower bound on the free-fermion contribution to the second R\'eyni entropy.

    \begin{lemma}\label{lem:FermionEntropyBound}
        The second R\'eyni entropy of any \(M\)-fermion plane-wave state obeys
        \begin{equation}
            \SecRenyi{A}{c^\dagger_{\mathbf{k},\phi}\ket{\Omega}} \geq \frac{\ell^2}{n^2}
        \end{equation}
        for any interval \(A\) of length \(\ell \leq n/2\).
    \end{lemma}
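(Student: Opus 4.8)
The plan is to exploit the Gaussian (free-fermion) structure of $\ket{\Psi} := c^\dagger_{\mathbf{k},\phi}\ket{\Omega}$. Its single-particle correlation matrix $C$, $C_{tt'} = \bra{\Psi} c^\dagger_t c_{t'} \ket{\Psi}$, is a rank-$M$ orthogonal projector whose range is spanned by plane-wave orbitals $\psi_k(t) = n^{-1/2} e^{i(k-\phi/n)t}$, $k \in \mathbf{k}$. The reduced state $\rho_A$ is then Gaussian, so $\Tr(\rho_A^2) = \prod_a [\mu_a^2 + (1-\mu_a)^2]$ with $\mu_a \in [0,1]$ the eigenvalues of the $A\times A$ block $C_A$. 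Using $-\ln(1-x)\geq x$,
\[
    \SecReyni{A}{\ket{\Psi}} = -\sum_a \ln\!\big[1 - 2\mu_a(1-\mu_a)\big] \;\geq\; 2\sum_a \mu_a(1-\mu_a) = 2\big[\Tr(C_A) - \Tr(C_A^2)\big],
\]
so it suffices to show $\Tr(C_A) - \Tr(C_A^2) \geq \ell^2/(2n^2)$; note this middle quantity is just the variance of the particle number in $A$.

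To estimate it I would pass to the $M\times M$ Gram matrix $G$ of the restricted orbitals $\Pi_A\psi_k$, which shares the nonzero spectrum of $C_A$. Because $A$ is an interval, $G_{kk'} = \tfrac1n D_A(k'-k)$ with $D_A(m) = \sum_{t\in A} e^{imt}$ (the $\phi$-dependent phases cancel in the difference), and in particular $G$ has constant diagonal $\ell/n$. Using $\Tr(C_A^j)=\Tr(G^j)$ for $j\geq 1$ together with Parseval, $\sum_m |D_A(m)|^2 = n\ell$, the target quantity takes the manifestly nonnegative form
\[
    \Tr(C_A) - \Tr(C_A^2) \;=\; \frac1{n^2} \sum_{m\neq 0} \big(M - h(m)\big)\,|D_A(m)|^2, \qquad h(m) := \#\{k\in\mathbf{k} : k+m\in\mathbf{k}\} \leq M .
\]

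The core of the proof — and the step I expect to be the main obstacle — is bounding this sum below, because the trivial estimate $h(m)\leq M$ discards exactly an $O(\ell^2)$ contribution and leaves nothing. The resolution is structural: $h(m)=M$ holds precisely when translation by $m$ stabilizes $\mathbf{k}$, and such $m$ form a subgroup $H\leq\mathbb{Z}_n$ of order $p$ dividing both $n$ and $M$, while off $H$ one has $M-h(m)\geq 1$. Hence $\Tr(C_A)-\Tr(C_A^2)\geq n^{-2}\big(n\ell - \sum_{m\in H}|D_A(m)|^2\big)$, and on the subgroup the Dirichlet sums telescope by orthogonality to $\sum_{m\in H}|D_A(m)|^2 = p\sum_{r\bmod p} m_r^2$, where $m_r$ counts the sites of $A$ in residue class $r$ mod $p$; since $A$ is an interval the $m_r$ are balanced, so $\sum_r m_r^2 \leq \ell^2/p + p/4$ (and $=\ell$ once $p\geq\ell$). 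The last ingredient is that $H$ is a \emph{proper} subgroup (as $0<M<n$), hence $p\leq n/2$; feeding this together with $\ell\leq n/2$ into the two cases $p\geq\ell$ and $p<\ell$ gives $\sum_{m\in H}|D_A(m)|^2\leq n\ell-\ell^2/2$ in every case, which is exactly $\Tr(C_A)-\Tr(C_A^2)\geq\ell^2/(2n^2)$ and hence $\SecReyni{A}{\ket{\Psi}}\geq\ell^2/n^2$. The delicate point is recognizing that the only obstruction to a healthier bound is a large translation period of $\mathbf{k}$, which is itself pinned down by $p\mid n$; once that is seen, a short orthogonality computation closes the gap (and in fact yields the stronger $\SecReyni{A}{\ket{\Psi}}\gtrsim \ell/n$ whenever $\mathbf{k}$ has no nontrivial period).
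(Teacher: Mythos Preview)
Your proof is correct, and it shares the same opening move as the paper: both reduce to the inequality $S^{(2)}_A \geq 2\,\Delta N_A^2 = 2[\Tr(C_A)-\Tr(C_A^2)]$ (the paper quotes this from Ref.~[74]; you derive it from the Gaussian purity formula). Both then rewrite the number variance as a sum over momentum differences, which in your notation reads $\Delta N_A^2 = n^{-2}\sum_{m\neq 0}(M-h(m))|D_A(m)|^2$.

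Where you diverge is in the lower bound on this sum. The paper takes the much shorter route of keeping only the two terms $m=\pm 2\pi/n$: for $0<M<n$ the set $\mathbf{k}$ necessarily has at least one ``boundary'' in each direction, so $M-h(\pm 2\pi/n)\geq 1$, and the Dirichlet kernel at the smallest nonzero momentum obeys $|D_A(2\pi/n)|^2=\sin^2(\pi\ell/n)/\sin^2(\pi/n)\geq \ell^2/4$ for $\ell\leq n/2$. Two such terms already give $\Delta N_A^2\geq \ell^2/(2n^2)$, and the lemma follows. Your argument instead keeps \emph{all} $m\notin H$, invokes Parseval, and carefully controls the contribution from the stabilizer subgroup $H$; the analysis you sketch (the balanced residue-class count for an interval, and $|H|\leq n/2$ from $0<M<n$) is sound and closes correctly. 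The payoff of your longer route is the stronger byproduct you mention: when $\mathbf{k}$ has no nontrivial translation period ($H=\{0\}$) you get $\Delta N_A^2\geq \ell(n-\ell)/n^2\sim \ell/n$, which the paper's two-term bound does not see. For the stated lemma, though, the paper's argument is the more economical one.
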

    \begin{proof}
        We use the result from Ref.~\cite{Muth2011entropybound} that $S^{(2)}_A \geq 2\Delta N_A^2$, where \(\Delta N_A^2\) is the number variance in the subsystem \(A\). 
        Let $\langle O \rangle_{\mathbf{k},\phi} := \bra{\Omega} c_{\mathbf{k},\phi} O c^\dagger_{\mathbf{k},\phi}\ket{\Omega}$; we have 
        \begin{align}
            \Delta N_A^2 
            & := \left\langle \left(\sum_{t\in A} c^\dagger_t c_t\right)^2 \right\rangle_{\mathbf{k},\phi} - \left\langle \sum_{t\in A} c^\dagger_t c_t \right\rangle_{\mathbf{k},\phi}^2 \nonumber \\
            & = \sum_{t,t'\in A} \langle c^\dagger_t c_t c^\dagger_{t'} c_{t'} \rangle_{\mathbf{k},\phi} - (\ell M/n)^2.
        \end{align}
        Using the change of basis $c_t := n^{-1/2} \sum_q e^{i(q-\phi/n)t} c_{q,\phi}$ (with $q \in \tfrac{2\pi}{n}\{0,1, ...,n-1\}$) we get 
        \begin{multline}
            \langle c^\dagger_t c_t c^\dagger_{t'} c_{t'} \rangle_{\mathbf{k},\phi}
            = \frac{1}{n^2} \sum_{q_1^{(\prime)},q_2^{(\prime)}} e^{i(q_1-q_2)t + i(q_1'-q_2')t'} \\
            \times\langle c^\dagger_{q_1,\phi} c_{q_2,\phi} c_{q_1',\phi}^\dagger c_{q_2',\phi} \rangle_{\mathbf{k},\phi}.
        \end{multline}
        Then using Wick's theorem and $\langle c^\dagger_{q,\phi} c_{q,\phi}\rangle_{\mathbf{k},\phi} = \delta_{q\in\mathbf k}$, 
        \begin{align}
            \langle c^\dagger_t c_t c^\dagger_{t'} c_{t'} \rangle_{\mathbf{k},\phi}
            & = \frac{M^2}{n^2} + \frac{1}{n^2} \sum_{q\in\mathbf k} \sum_{q'\notin\mathbf k} e^{i(q-q')(t-t')}
        \end{align}
        and thus, assuming without loss of generality that $A = \{0,1, ..., \ell-1\}$, 
        \begin{align}
            \Delta N_A^2 
            & = \frac{1}{n^2} \sum_{q\in\mathbf k} \sum_{q'\notin\mathbf k} \left| \sum_{t\in A}e^{i(q-q')t} \right|^2 \nonumber \\
            & = \frac{1}{n^2} \sum_{q\in\mathbf k} \sum_{q'\notin\mathbf k}  \left| \frac{e^{i(q-q')\ell} - 1}{ e^{i(q-q')} - 1} \right|^2 \nonumber \\
            & = \frac{1}{n^2} \sum_{q\in\mathbf k} \sum_{q'\notin\mathbf k} \frac{\sin^2[(q-q')\ell/2]}{\sin^2[(q-q')/2]}.
        \end{align}
        To get a loose lower bound, we can restrict the sum to the boundary of the ocuupied states: pairs $q,q'$ such that $|q-q'|=2\pi/n$. There are at least two such pairs provided that $0<M<n$. This gives
        \begin{align}
            \Delta N_A^2 
            & \geq \frac{2}{n^2} \frac{\sin^2[\pi \ell/n]}{\sin^2[\pi/n]} \geq \frac{\ell^2}{2n^2}
        \end{align}
        when $\ell \leq n/2$---we can upper-bound the denominator by $(\pi/n)^2$ and lower-bound the numerator by $(\pi\ell/n)^2/4$ in the relevant range. 
    \end{proof}

    It should be possible to substantially improve Lemma~\ref{lem:FermionEntropyBound}---the lowest entropy state at finite filling fraction should be the Fermi sea, which has logarithmic-in-\(\ell\) entanglement~\cite{Calabrese2004qft,Wolf2006fermionlogentropy,Gioev2006freefermionentropy,Calabrese2009cft}, independent of \(n\). However, Lemma~\ref{lem:FermionEntropyBound} is already sufficient for our purposes, as we only compare the clock entanglement to the exponentially small (in \(n\)) differences between the spin system entanglement and its maximum value \(\ell \ln(2)\).

\section{Review of LFSRs}
\label{app:LFSRReview}

Here we provide a brief review of the mathematical tools necessary to understand LFSRs, specifically the algebra of finite fields~\cite{Lidl_1996_finitefields}.

First we recall that $\mathbb{F}_2$ is the binary field---the set $\{0,1\}$ equipped with addition and multiplication. This meets the axioms of a field (commutativity, associativity, distributivity, existence of additive and multiplicative identity elements, existence of additive and multiplicative inverses). Crucially the field has characteristic 2, i.e., $1+1 = 0$. 

We can then define the finite field $\mathbb{F}_{2^n}$ as follows. 
Consider $\mathbb{F}_2[x]$, the ring of polynomials in one variable $x$ with coefficients in $\mathbb{F}_2$. This is a ring, not a field, as multiplication is not invertible. Next, take a polynomial $p(x)$ of degree $n$ and consider the quotient $\mathbb{F}[x] / \langle p(x) \rangle$, comprising equivalence classes of polynomials modulo multiples of $p(x)$: $q_1(x)\equiv q_2(x)$ if and only if $q_1(x) - q_2(x) = p(x) r(x)$ with $r(x) \in \mathbb{F}[x]$. 
There are $2^n$ such equivalence classes, each represented by a polynomial of degree $\leq n-1$:
\begin{equation}
    \mathbb{F}_2[x]/ \langle p(x) \rangle 
    \ni [c_0 + c_1 x + \dots c_{n-1} x^{n-1}],
    \quad c_i \in \mathbb{F}_2. \label{eq:app_f2n_repr}
\end{equation}
A standard result in algebra states that the quotient $\mathbb{F}_2[x]/p(x)$ forms a field if and only if $p(x)$ is {\it irreducible}, i.e., is not divisible by any polynomials other than $1$ and itself. This is the finite field of $2^n$ elements, $\mathbb{F}_{2^n}$ [any irreducible $p(x)$ gives the same result up to isomorphism]. This construction is analogous to how one builds the finite field $\mathbb{Z}_p$ as a quotient of $\mathbb{Z}$ modulo a prime number $p$; irreducibility of $p(x)$ plays a role analogous to primality of $p$ in ensuring the existence of multiplicative inverses.

A degree-$n$ polynomial $p(x) \in \mathbb{F}_2[x]$ is {\it primitive} if (i) it is irreducible, and (ii) the minimum nonzero $l$ such that $x^l \equiv 1 \mod p(x)$ is $l = 2^n-1$. Condition (ii) can be rephrased as stating that all $2^n-1$ non-zero elements of the field can be generated as powers of $[x]$: 
$\{ [x]^l\}_{l=1}^{2^n-1} = \mathbb{F}_{2^n}\setminus\{[0]\}$. 

Primitive polynomials over $\mathbb{F}_2$ play an important role in the theory of maximal LFSRs~\cite{Tausworthe1965random,Golomb_2017_shiftregisterbook}. Consider forming $\mathbb{F}_{2^n} = \mathbb{F}_2[x] / \langle p(x) \rangle$ with $p(x) = 1 + x^{n} + \sum_{i=1}^{n-1} a_{i} x^i$ a primitive polynomial of degree $n$. Note that the coefficient of $x^{n}$ is fixed to 1 because $p$ has degree $n$, and the coefficient of $x^0$ is fixed to $1$ because $p$ is irreducible (thus not divisible by $x$). 
Now consider the map $[q(x)] \mapsto [x]\cdot [q(x)] \equiv [xq(x)]$ of $\mathbb{F}_{2^n}$ into itself. 
In the representation of each element of $\mathbb{F}_{2^n}$ as a binary vector of coefficients $(c_i)_{i=0}^{n-1}$, Eq.~\eqref{eq:app_f2n_repr}, we can write the action of this map as
\begin{equation}
    [x]\left[ \sum_{i=0}^{n-1} c_i x^i \right]
    = \left [\sum_{i=0}^{n-1} c_i x^{i+1} \right]
    = \left [\sum_{i=0}^{n-1} c_i' x^{i} \right],
\end{equation}
defining new coefficients $(c'_i)_{i=0}^{n-1}$. To obtain the new coefficients, note that multiplication by $x$ generates a degree-$n$ term, $x^n$, that can be eliminated by using the equivalence relation $p(x) \equiv 0$ as $x^n \equiv 1 + \sum_{i=1}^{n-1} a_{i} x^i$, giving
\begin{equation}
c'_i = 
\left\{ 
    \begin{array}{l l}
    c_{n-1} \quad & \text{for }i=0, \\
    c_{i-1} + c_{n-1} a_i \quad & \text{for } i>0. 
    \end{array}
\right.
\end{equation}
This rule is known as a LFSR in {\it Galois form}. It corresponds to a circuit like the one in \autoref{fig:lfsr}(a) up to exchanging the control and target qubits in each CNOT gate (i.e., conjugating by a Hadamard gate on each qubit).
The more commonly used form of LFSRs, which we adopt in \autoref{sec:LFSRModel}, is known as {\it Fibonacci form}. The two are equivalent, being related by a unitary transformation (global Hadamard). 
Note our convention in \autoref{sec:LFSRModel} also applies a spatial inversion, i.e., bits shift backward rather than forward---this is so that the resulting staircase circuit [\autoref{fig:lfsr}(a)] has the same layout as the one used in \autoref{sec:FKClockModel}. 

This construction of LFSRs from multiplications in $\mathbb{F}_{2^n}$ allows us to use algebraic tools to study their orbits. 
In particular, the sequence of states of the LFSR starting from $[1]$ (equivalence class of the constant polynomial) and iterating the update $l$ times is given by $\{[x]^l\}$; 
if $p(x)$ is primitive, then by definition this contains all of $\mathbb{F}_{2^n}\setminus \{[0]\}$, i.e., all non-zero states of the LFSR. Thus the LFSR is maximal (contains an orbit of size $2^n-1$). 
As a result, the search for $n$-bit maximal LFRSs reduces to the search for primitive polynomials of degree $n$ over $\mathbb{F}_2$. 

Primitive polynomials of any degree $n$ can be constructed efficiently [in time $O(n)$]~\cite{rifa_1995_fast}. 
The number of primitive polynomials of degree $n$ can be expressed in terms of {\it Euler's totient function} $\varphi$ as $\varphi(2^n-1) / n$. In general $\varphi$ is bounded as 
\begin{equation}
    \frac{x}{c \log \log x} \leq \varphi(x) \leq x
\end{equation}
for some constant $c$ at sufficiently large argument $x$. It follows that fraction of choices for the coefficients $\{a_i\}$ such that the resulting LFSR is maximal is $\Omega(1/n\log n)$ when $n$ is large. 

We conclude by mentioning two more results on primitive polynomial that are relevant to our physical application.
First, Theorem 1 in Ref.~\cite{Cohen_2004_primitivepoly} states that it is always possible to find a primitive polynomial of degree $n$ in $\mathbb{F}_2[x]$ with its first $n/4$ coefficients $a_1, ..., a_{n/4}$ fixed to arbitrary values. This implies the ability to grow any given $n$-bit maximal LFSR into a $4n$-bit maximal LFSR, thus taking a thermodynamic limit along a family of system sizes $n_k = n_0 4^k$.  
Secondly, it is known that many (though not all) system sizes $n$ admit {\it primitive trinomials}~\cite{Zierler_1968_primitivetrinomial}, $p(x) = x^n + x^k + 1$ for some $k$. These correspond to LFSR circuits with a {\it single} \(\mathsf{CNOT}\) gate, at position $k$. 
Furthermore, every $n\leq 660$ admits a maximal LFSR with at most three $\mathsf{CNOT}$ gates~\cite{Rajski2003primitive,Kim2025catalyticzrotationsconstanttdepth}.
Plugged into our periodic FK clock construction, this gives Hamiltonians that are nearly translationally invariant (where all but a constant number of local Hamiltonian terms are identical up to translation). 

\section{Proof of Lemma~\ref{lemma:pauli}} \label{app:proof}

Here we prove Lemma~\ref{lemma:pauli}, which is one of the main technical results of this work, underpinning the construction of provably thermal Floquet eigenstates in \autoref{sec:LFSRModel}.

Let us take a Pauli operaor $P = X_{\mathbf a} Z_{\mathbf b}$ (we neglect phase factors), where $X_{\mathbf u} = \bigotimes_{i=1}^n X_i^{u_i}$ and $Z_{\mathbf v} = \bigotimes_{i=1}^n Z_i^{v_i}$, $\mathbf{u},\mathbf{v} \in \{0,1\}^n$. We have 
    \begin{align}
        \bra{\psi_q} X_{\mathbf u} Z_{\mathbf v} \ket{\psi_{q'}}
        & = \sum_{\substack{\mathbf z \neq \boldsymbol{0} \\ \mathbf z' \neq \boldsymbol{0}}}  \frac{ \bra{\mathbf z} X_{\mathbf u} Z_{\mathbf v} \ket{\mathbf z'} }{2^n-1} \omega^{q'\log_\alpha(\mathbf z')- q\log_\alpha(\mathbf z)} \nonumber \\
        & = \sum_{\mathbf z' \neq \boldsymbol{0},\mathbf u} \frac{(-1)^{\mathbf v\cdot \mathbf z'} }{2^n-1} \omega^{q'\log_\alpha(\mathbf z')- q\log_\alpha(\mathbf z'+\mathbf u)} \label{eq:app_pauli_me}
    \end{align}
where we used $Z_{\mathbf v} \ket{\mathbf z'} = (-1)^{\mathbf v \cdot \mathbf z'} \ket{\mathbf z'}$ and $\bra{\mathbf z} X_{\mathbf u} \ket{\mathbf z'} = \delta_{\mathbf z, \mathbf z'+\mathbf u}$. 
    
Now we invoke the following properties of the discrete logarithm: 
\begin{itemize}
\item[(i)] $r \log_\alpha(\mathbf z) = \log_\alpha(\mathbf z^r) \text{ mod } 2^n-1$ for all $r\in \mathbb{Z}_{2^n-1}$;
\item[(ii)] $\log_\alpha(\mathbf x) - \log_\alpha(\mathbf y) = \log_\alpha(\mathbf x \mathbf y^{-1}) \text{ mod } 2^n-1$.
\end{itemize}
Here the power $\mathbf z^r$ and the multiplicative inverse $\mathbf y^{-1}$ are both defined in the finite field $\mathbb{F}_{2^n}$ (Appendix~\ref{app:LFSRReview}). 
Using these facts, we may rewrite 
\begin{align}
    \omega^{q'\log_\alpha(\mathbf z) - q \log_\alpha(\mathbf z+\mathbf u)}
    & = \omega^{\log_\alpha(\mathbf z^{q'}) - \log_\alpha[(\mathbf z + \mathbf u)^q]} \nonumber \\
    & = \omega^{\log_\alpha\left[\mathbf z^{q'} (\mathbf z + \mathbf u)^{-q}\right]}.
\end{align}

At this point it is helpful to define the following functions:
\begin{align}
\varphi(\mathbf z) & = (-1)^{\mathbf v \cdot \mathbf z}, \\
\chi(\mathbf z) & = \omega^{\log_\alpha(\mathbf z)}, \\ 
g(\mathbf z) & = \mathbf{z}^{q'} (\mathbf z + \mathbf u)^{-q}. 
\end{align}
Here $\varphi$ is an {\it additive character} for $\mathbb{F}_{2^n}$, i.e. it satisfies 
\begin{equation} \varphi(\mathbf{z} + \mathbf{z}') = \varphi(\mathbf{z}) \varphi(\mathbf{z}'),
\end{equation}
while $\chi$ is a {\it multiplicative character}, i.e. it satisfies 
\begin{equation} \chi(\mathbf{z} \mathbf{z}') = \chi(\mathbf{z}) \chi(\mathbf{z}').
\end{equation} 
The latter follows from property (ii) of the discrete logarithm above. 
$g$ is a rational function with a $q$-fold pole at $\mathbf z = \mathbf u$ (recall the field $\mathbb{F}_{2^n}$ has characteristic 2, so $-\mathbf{u} = + \mathbf{u}$) and a $(q'-q)$-fold pole at infinity, assuming without loss of generality that $q' \geq q$. 

In terms of these functions, our Pauli matrix element Eq.~\eqref{eq:app_pauli_me} reads
\begin{equation}
    \bra{\psi_q} X_{\mathbf u} Z_{\mathbf v} \ket{\psi_{q'}} = \frac{1}{2^n-1} \sum_{\mathbf z\in \mathbb{F}_{2^n}\setminus \{\boldsymbol{0},\mathbf u\}} \varphi(\mathbf z) \chi(g(\mathbf z)).
\end{equation}
Sums of this type are known as {\it mixed character sums}, as they combine additive and multiplicative characters. They are widely studied in algebraic number theory where they arise in connection with the Riemann hypothesis~\cite{Weil_1948_exponentialsums}. 
Mixed character sums often obey bounds of the form
\begin{equation}
    \left| \sum_{\mathbf z\in \mathbb{F}_{2^n} \setminus\{ \boldsymbol{0}\}}  \varphi(\mathbf z) \chi(\mathbf z) \right| \leq O(2^{n/2}), \label{eq:mixedchar_generic}
\end{equation}
which is the same bound one gets with high probability when summing $2^n$ random phases. This shows that the phase factors $\chi(\mathbf z) = \omega^{\log_\alpha \mathbf z}$ have ``random-like'' correlations with all linear subspaces of bitstrings, specified by $\varphi(\mathbf z) = (-1)^{\mathbf v\cdot \mathbf z}$.

Many bounds on mixed character sums along the lines of Eq.~\eqref{eq:mixedchar_generic} exist.
Here we will use the following version\footnote{
The original theorem allows for sums on arbitrary algebraic curves $X$ and involves a term $2g_X$ on the right hand side, with $g_X$ the genus of $X$; in our case $X$ is the projective line $X = \mathbb{F}_{2^n} \cup \{ \infty \}$, which has genus $g_X = 0$.}:
\begin{thm}[Bound on mixed character sums, adapted from Ref.~\cite{castro_2000_mixedsums} Theorem 13] \label{thm:mixed_character}
Given a nontrivial additive character $\varphi$, nontrivial multiplicative character $\chi$, and rational functions $f,g$ on $\mathbb{F}_{2^n}$, we have
    \begin{align}
        \left| \sum_{\mathbf z \in \mathbb{F}_{2^n} \cup \{\infty\} } {}^{'} \varphi(f(\mathbf z)) \chi(g(\mathbf z)) \right| 
        \leq C 2^{n/2}
    \end{align}
    where the primed sum excludes any zeros of $g$ and poles of $f$ or $g$, and the constant $C$ is given by
    \begin{align}
        C =  s + l + d - r - 2.
    \end{align}
    Here $s$ is the total number of points that are either poles or zeros of $g$, $l$ is the number of points that are poles of $f$, $d$ is the total order of the poles of $f$, and $r$ is the number of points that are simultaneously poles of $f$ and zeros or poles of $g$. 
\end{thm}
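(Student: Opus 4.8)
This statement is the special case of Theorem~13 of Ref.~\cite{castro_2000_mixedsums} in which the underlying algebraic curve is the projective line over $\mathbb{F}_{2^n}$, so the quickest route is to invoke that result directly: since $\mathbb{P}^1$ has genus $g_X = 0$, the genus contribution $2g_X$ that appears in the general constant vanishes, leaving $C = s + l + d - r - 2$. For completeness I will also sketch the cohomological argument underlying bounds of this type, since it makes transparent where each term of $C$ originates.

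The plan is to express the sum as an alternating trace of Frobenius on $\ell$-adic cohomology and then bound a single cohomology group. Let $U = \mathbb{P}^1_{\mathbb{F}_{2^n}} \setminus S$, where $S$ is the finite set of points that are poles of $f$ or zeros or poles of $g$ --- precisely the points excluded by the primed sum. Over $U$ one forms the rank-one lisse $\overline{\mathbb{Q}}_\ell$-sheaf $\mathcal{F} = \mathcal{L}_\varphi(f) \otimes \mathcal{L}_\chi(g)$, the tensor product of the Artin--Schreier sheaf attached to the pair $(f,\varphi)$ and the Kummer sheaf attached to $(g,\chi)$. It is pointwise pure of weight $0$, and because $\varphi$ and $\chi$ are nontrivial with $f$, $g$ correspondingly nonconstant, it is geometrically nontrivial. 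Grothendieck's Lefschetz trace formula then gives
\begin{equation}
\sum_{\mathbf z} {}' \varphi(f(\mathbf z))\, \chi(g(\mathbf z)) = \sum_{i} (-1)^i \mathrm{Tr}\!\left(\mathrm{Frob} \mid H^i_c(U_{\overline{\mathbb{F}}_2}, \mathcal{F})\right).
\end{equation}
Geometric nontriviality forces $H^0_c = 0$ (as $U$ is affine) and $H^2_c = 0$ (by Poincar\'e duality), so only $H^1_c$ contributes; and the Riemann hypothesis for curves over finite fields~\cite{Weil_1948_exponentialsums}, applied to the weight-$0$ sheaf $\mathcal{F}$, shows every Frobenius eigenvalue on $H^1_c$ has absolute value $2^{n/2}$. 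Hence the sum is at most $\dim H^1_c(U_{\overline{\mathbb{F}}_2}, \mathcal{F}) \cdot 2^{n/2}$.

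It remains to bound $\dim H^1_c = -\chi_c(U, \mathcal{F})$, which follows from the Grothendieck--Ogg--Shafarevich (Euler--Poincar\'e) formula for a rank-one sheaf on a curve:
\begin{equation}
\dim H^1_c(U_{\overline{\mathbb{F}}_2}, \mathcal{F}) = |S| - 2 + \sum_{x \in S} \mathrm{Sw}_x(\mathcal{F}),
\end{equation}
where the $-2$ is $2g_X - 2$ with $g_X = 0$ and $\mathrm{Sw}_x$ is the Swan (wild) conductor. The Kummer factor $\mathcal{L}_\chi(g)$ is tamely ramified everywhere, so contributes $0$ to every Swan conductor; the Artin--Schreier factor $\mathcal{L}_\varphi(f)$ is lisse away from the poles of $f$, and at a pole of $f$ of order $m_x$ it has $\mathrm{Sw}_x \leq m_x$. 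Thus $\sum_{x \in S} \mathrm{Sw}_x(\mathcal{F}) \leq \sum_{x \text{ pole of } f} m_x = d$. Finally $|S| = s + l - r$ by inclusion--exclusion: $s$ counts zeros and poles of $g$, $l$ counts poles of $f$, and $r$ counts the overlap. Combining, $\dim H^1_c \leq (s + l - r) - 2 + d = C$, which is the asserted bound.

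The real obstacle here is the input rather than the bookkeeping: what does the work is the purity theorem (the Riemann hypothesis for curves, or Deligne's Weil~II in the general setting), which is what upgrades a dimension count in cohomology to genuine square-root cancellation, together with the correct identification of $\mathcal{F}$ as a pure, geometrically nontrivial rank-one sheaf. Granting that, the only delicate point is the local ramification analysis controlling the Swan conductors --- checking that the Kummer factor is tame at all of its zeros and poles (so that they enlarge $|S|$ but not the Swan sum), and that the Artin--Schreier factor's Swan conductor at a pole of $f$ is bounded by the pole order (subtler than it looks in characteristic $2$, where the identity $\varphi(a^2) = \varphi(a)$ can lower the effective pole order but never raise it) --- together with the inclusion--exclusion producing the term $-r$. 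A fully elementary but considerably longer alternative would replace the purity input with a Stepanov--Bombieri-style counting argument; since the bound we need is already available in the literature in exactly this form, we do not pursue it here.
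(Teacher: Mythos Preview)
The paper does not prove this theorem at all: it is stated as a direct specialization of Theorem~13 of Ref.~\cite{castro_2000_mixedsums} to the projective line, with the genus term $2g_X$ set to zero (see the footnote following the theorem statement). Your proposal takes exactly this route in its first paragraph, so the approaches coincide; the remainder of your write-up --- the $\ell$-adic cohomological sketch via Grothendieck--Lefschetz, purity, and the Euler--Poincar\'e formula --- is correct and standard, and goes well beyond what the paper itself provides, which simply cites the result as input.
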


Since in our case $f(\mathbf z) = \mathbf z$, we have $l = 1$ (a single pole at $\mathbf z= \infty$) and $d = 1$ (the order of the pole is 1).
For the coefficients $s$ and $r$ we need to distinguish two cases (recall we assumed without loss of generality $q'\geq q$): 
\begin{itemize} 
\item $q = q'$: $g(\mathbf z) = [\mathbf z / (\mathbf z + \mathbf u)]^q$ does not have a pole at infinity. We have $s = 2$ (a zero at $\mathbf z = \boldsymbol{0}$ and a pole at $\mathbf z = \mathbf u$, the multiplicity $q$ does not count) and $r = 0$ (no poles in common between $f$ and $g$). 
\item $q < q'$: $g(\mathbf z)$ has a pole at infinity. We have $s = 3$ (a zero at $\mathbf z = \boldsymbol{0}$, poles at $\mathbf z = \mathbf u$ and $\mathbf z = \infty$) and $r = 1$ (the pole at infinity is common to $f$ and $g$). 
\end{itemize}
Either way, we obtain $s-r = 2$, and thus $C = 2$:
\begin{equation}
    |\bra{\psi_q} X_{\mathbf u} Z_{\mathbf v} \ket{\psi_{q'}}| \leq  \frac{C2^{n/2}}{2^n-1} = \frac{2^{1+n/2}}{2^n-1}
\end{equation}
which concludes the proof for the case of $\mathbf u, \mathbf v \neq \boldsymbol{0}$ (where both $\varphi$ and $\chi$ are nontrivial characters and Theorem~\ref{thm:mixed_character} applies). In the case where either character is trivial the calculation can be carried out explicitly, see App.~\ref{app:jacobisums}, and the results also obey the bound.
    
\section{Details on calculations of Pauli matrix elements}
\label{app:PauliDetails}

\subsection{Construction of eigenstates}

Here we provide details on the method used to obtain the data in \autoref{fig:lfsr}(b-c). 
The LFSRs used for each system size are specified by the coefficients in Table~\ref{tab:taps}. 
For each one we generate the maximal orbit $\{ \alpha^j\boldsymbol{1}:\ j=0,\dots 2^n-2\}$ by starting from $\boldsymbol{1} = 00\dots 01$ and iterating the update $\mathbf z' = \alpha \mathbf z$ until the orbit closes; we verify the orbit length is indeed $2^n-1$. 
We then numerically create all nontrivial eigenstates one component at a time as 
\begin{equation}
    \braket{\alpha^j\boldsymbol{1}}{\psi_q} = \frac{\omega^{qj}}{\sqrt{ 2^n-1}} 
\end{equation}
with $\omega = e^{2\pi i / (2^n-1)}$, for $j \in \{0, ..., 2^n-2\}$ and $q \in \{1, ..., 2^{n-2}\}$. 

\subsection{Pauli orbits}

Due to the Clifford nature of the LFSR circuit, the $4^n-1$ nontrivial Pauli operators break up into orbits, $\{ U_{n:0}^{-l} P U_{n:0}^l\}_l$. It is easy to see that each orbit has length $2^n-1$, same as the orbit of bitstring states---for example one can show that
\begin{align}
    \bra{\mathbf z} U_{n:0}^\dagger X_{\mathbf u} U_{n:0} \ket{\mathbf z'}
    & = \bra{\alpha \mathbf z}  X_{\mathbf u} \ket{\alpha \mathbf z'}
    = \delta_{\alpha \mathbf z, \alpha \mathbf z' + \mathbf u} \nonumber \\
    & = \delta_{\mathbf z, \mathbf z' + \alpha^{-1} \mathbf u} 
    = \bra{\mathbf z}  X_{\alpha^{-1} \mathbf u} \ket{\mathbf z'},
\end{align}
implying $U_{n:0}^\dagger X_{\mathbf u} U_{n:0} = X_{\alpha^{-1} \mathbf u}$; since $\alpha^{-1}$ has the same orbits as $\alpha$ (traversed in inverse order), this shows that the $X$ part of any Pauli string has periodicity $2^n-1$. The same conclusion applies to the $Z$ part:
\begin{align}
    \bra{\mathbf z} U_{n:0}^\dagger Z_{\mathbf v} U_{n:0} \ket{\mathbf z'}
    & = \bra{\alpha \mathbf z}  Z_{\mathbf v} \ket{\alpha \mathbf z'}
    = (-1)^{\mathbf v \cdot (\alpha \mathbf z)} \delta_{\alpha \mathbf z, \alpha \mathbf z'} \nonumber \\
    & = (-1)^{(\alpha^T \mathbf v) \cdot \mathbf z} \delta_{\mathbf z,\mathbf z'}
    = \bra{\mathbf z} Z_{\alpha^T \mathbf u} \ket{\mathbf z'},
\end{align}
which implies $U_{n:0}^\dagger Z_{\mathbf v} U_{n:0}  = Z_{\alpha^T \mathbf v}$; $\alpha^T$ represents the circuit obtained by a spatial inversion and global Hadamard which has the same orbit structure as $\alpha$. 
So the $4^n-1$ nontrivial Pauli strings split into $2^n+1$ orbits of length $2^n-1$.
Up to a phase, we can reduce Pauli matrix elements between eigenstates to a canonical representative of each Pauli orbit:
\begin{equation}\label{eqn:PauliOrbits}
    \bra{\psi_q} X_{\mathbf u} Z_{\mathbf v} \ket{\psi_{q'}}  = 
    \left\{ 
    \begin{array}{l l}
    \omega^{l(q-q')} \bra{\psi_q} Z_{\boldsymbol{1}} \ket{\psi_{q'}} & \text{ if } \mathbf u = \boldsymbol{0},  \\
    \omega^{l(q-q')} \bra{\psi_q} X_{\mathbf 1} Z_{\mathbf v'} \ket{\psi_{q'}} &  \text{ otherwise},
    \end{array}
    \right.
\end{equation}
for some integer $l$. The $\mathbf u = \boldsymbol{0}$ specifies one orbit, while $\mathbf u \neq \boldsymbol{0}$ specifies the remaining $2^n$ orbits, one for each $\mathbf{v}'$.
Thus, up to phase factors, there are only $2^n+1$ inequivalent Pauli operators for the purpose of computing matrix elements.

\begin{table}
    \centering
    \begin{tabular}{c|c}
    \toprule
    $n$ & $\{ i:a_i=1\}$ \\ \midrule 
    9 & $\{4\}$ \\
    10 & $\{3\}$ \\
    11 & $\{2\}$ \\
    12 & $\{ 1,2,8 \}$  \\
    13 & $\{ 1, 2, 5\}$  \\
    14 & $\{ 1, 2, 13 \}$ \\
    15 & $\{1\}$ \\
    16 & $\{1, 3, 12\}$  \\
    17 & $\{3\}$ \\
    18 & $\{7\}$ \\ \bottomrule 
    \end{tabular}
    \caption{Maximal $n$-bit LFSRs used in Fig.~\ref{fig:lfsr}. For each $n$ we list the values of $i$ where $a_i = 1$ (all other entries of $a$ are 0). We follow the convention of Eq.~\eqref{eq:lfsr}. }
    \label{tab:taps}
\end{table}

\subsection{Special cases}\label{app:jacobisums}

For certain Pauli operators the calculation of matrix elements is analytically tractable. These Pauli operators account for the visible spike in the data of \autoref{fig:lfsr}(c) at $|\mu| = 1/2$. 

First, consider Pauli operators $P = Z_{\mathbf v}$, made only of $Z$ Pauli matrices. 
Matrix elements of these operators between nontrivial LFSR eigenstates can be written as
\begin{equation}
    |\bra{\psi_q} Z_{\mathbf v} \ket{\psi_{q'}}| = |\bra{\psi_q} Z_{\boldsymbol{1}} \ket{\psi_{q'}}| 
\end{equation}
using the Pauli orbit structure discussed previously to choose a convenient representative [\eqnref{eqn:PauliOrbits}].
Then we can write
\begin{align}
    (2^n-1) \bra{\psi_q} Z_{\boldsymbol{1}} \ket{\psi_{q'}} 
    & = \sum_{\mathbf z\neq \boldsymbol{0}} \chi^{q-q'}(\mathbf z) \varphi(\mathbf z),
\end{align}
where we introduced the multiplicative character $\chi(\mathbf z) = \omega^{\log_\alpha(\mathbf z)}$ and the additive character $\varphi(\mathbf z) = (-1)^{\boldsymbol{1}\cdot \mathbf z} = (-1)^{z_{n-1}}$. 
\begin{itemize}
    \item If $q=q'$, the sum involves only the additive character, and we get 
    \begin{equation}
        (2^n-1) \bra{\psi_q} Z_{\boldsymbol{1}} \ket{\psi_{q}} =  \left( \sum_{\mathbf z} \varphi(\mathbf z) \right) - 1 = -1;
    \end{equation}
    \item If $q\neq q'$, then $\chi^{q-q'}$ is a nontrivial multiplicative character, and the sum is known as a {\it Gauss sum} $G(\chi^{q-q'})$. 
    A standard result in finite field algebra states that, for any nontrivial multiplicative character, the absolute value of the Gauss sum is fixed to~\cite{Ireland1982gaussjacobi}
    \begin{equation}
        |G(\chi)| = 2^{n/2}
    \end{equation}
    (generally $\sqrt{q}$ in $\mathbb{F}_q$). It follows that
    \begin{equation}
    |\bra{\psi_q} Z_{\boldsymbol{1}} \ket{\psi_{q'}}| = \frac{2^{n/2}}{2^n-1}. 
    \end{equation}
\end{itemize}

Next, consider Pauli operators $P = X_{\mathbf u}$, made only of $X$ Pauli matrices. 
We have 
\begin{align}
    \bra{\psi_q} X_{\boldsymbol{1}} \ket{\psi_{q'}} 
    & = \frac{1}{2^n-1} \sum_{\mathbf z \neq \boldsymbol{0}}\sum_{\mathbf z'\neq \boldsymbol{0}} \chi^q (\mathbf z) \chi^{-q'} (\mathbf z') \delta_{\mathbf z,\mathbf z'+\boldsymbol{1}} \nonumber \\
    & = \frac{1}{2^n-1} \sum_{\mathbf z \neq \boldsymbol{0},\boldsymbol{1}} 
    \chi^q (\mathbf z) \chi^{-q'} (\boldsymbol{1} + \mathbf z) \nonumber \\
    & = \frac{1}{2^n-1} J(\chi^q, \chi^{-q'}).
\end{align}
Here we introduced the {\it Jacobi sum} defined for two multiplicative characters $\chi_{1,2}$ as 
\begin{equation}
    J(\chi_1, \chi_2) = \sum_{\mathbf z\neq \boldsymbol{0}, \boldsymbol{1} } \chi_1(\mathbf z) \chi_2(\boldsymbol{1} + \mathbf z).
\end{equation}
Provided each of $\chi_1$, $\chi_2$, and $\chi_1 \chi_2$ are nontrivial, then another standard result in finite-field algebra relates the Jacobi sum to Gauss sums~\cite{Ireland1982gaussjacobi}:
\begin{equation}
    J(\chi_1,\chi_2) = \frac{G(\chi_1) G(\chi_2)}{G(\chi_1 \chi_2)}, \label{eq:jacobi_sum}
\end{equation}
which in particular implies $|J(\chi_1,\chi_2)| = 2^{n/2}$. 
\begin{itemize}
    \item If $q = q'$, then $\chi^q \chi^{-q'}$ is trivial (the identity character) and we cannot use Eq.~\eqref{eq:jacobi_sum}. However in that case we can compute the sum directly:
    \begin{align}
        \sum_{\mathbf z\neq \boldsymbol{0},\boldsymbol{1}} \chi^{-q}(\mathbf{z}) \chi^q(\boldsymbol{1} + \mathbf{z})
        & = \sum_{\mathbf z\neq \boldsymbol{0},\boldsymbol{1}} \chi^q(\boldsymbol{1} + \mathbf{z}^{-1}) \nonumber \\
        & = \sum_{\mathbf y \neq \boldsymbol{0},\boldsymbol{1}} \chi^q(\mathbf{y}) = - \chi^q(\boldsymbol{1}),
    \end{align}
    where we changed variable to $\mathbf y = \boldsymbol{1} + \mathbf{z}^{-1}$ in the sum and used the fact that the sum of all roots of unity vanishes: $\sum_{\mathbf y \neq \boldsymbol{0}} \chi^q(\mathbf y) = 0$. Since $- \chi^q(\boldsymbol{1})$ is a phase factor we conclude
    \begin{equation}
        | \bra{\psi_q} X_{\boldsymbol{1}} \ket{\psi_{q}} | = \frac{1}{2^n-1}.
    \end{equation}
    \item If $q\neq q'$ (and $q,q'\neq 0$) then each of $\chi^q$, $\chi^{q'}$, $\chi^{q-q'}$ is nontrivial, and Eq.~\eqref{eq:jacobi_sum} implies 
    \begin{equation}
        | \bra{\psi_q} X_{\boldsymbol{1}} \ket{\psi_{q'}} | = \frac{2^{n/2}}{2^n-1}. 
    \end{equation}
\end{itemize}

Normalized by the upper bound of Lemma~\ref{lemma:pauli}, the diagonal matrix elements (expectation values) of $X_{\mathbf u}$ and $Z_{\mathbf v}$ yield $|\mu| = 2^{-n/2}$, 
while the off-diagonal matrix elements yield $|\mu| = 1/2$. 
The latter explains the ``spike'' clearly visible in \autoref{fig:lfsr}(c) at $|\mu| = 1/2$. Since the special value $|\mu| = 1/2$ is attained in two out of $2^n+1$ Pauli orbits, the size of the feature in the normalized distribution of matrix elements is $\propto 2^{-n}$.

\bibliography{clockmodels}

\end{document}